\crefname{section}{Sec.}{Secs.}
\newtheorem{thm}{Theorem}\crefname{thm}{Theorem}{Theorems}
\newtheorem{lem}[thm]{Lemma}\crefname{lem}{Lemma}{Lemmas}
\crefname{cor}{Corollary}{Corollaries}
\crefname{rem}{Remark}{Remarks}
\crefname{dfn}{Definition}{Definitions}
\DeclarePairedDelimiter{\abs}{\lvert}{\rvert}
\DeclarePairedDelimiter{\norm}{\lVert}{\rVert}
\DeclareMathOperator{\tr}{Tr}
\DeclareMathOperator{\Tr}{Tr}
\DeclareMathOperator{\TFD}{TFD}
\renewcommand{\vec}{\mathbf}
\newcommand{\vv}{\vec v}
\newcommand{\vw}{\vec w}
\newcommand{\CC}{\mathbb C}
\newcommand{\ZZ}{\mathbb Z}
\newcommand{\RR}{\mathbb R}
\newcommand{\EE}{\mathbb E}
\newcommand{\FF}{\mathbb F}
\newcommand{\ot}{\otimes}
\newcommand{\Hil}{\mathcal H}
\newcommand{\eps}{\varepsilon}
\renewcommand{\epsilon}{\varepsilon}
\newcommand{\hf}{\frac12}
\newcommand{\ssection}[1]{\smallskip\phantomsection\addcontentsline{toc}{section}{#1}\textit{#1.---}}
\newcommand{\ketum}{\ket {\phi^+}}
\newcommand{\braum}{\bra {\phi^+}}
\newcommand{\proj}[1]{\ket{#1}\!\!\bra{#1}}
\newcommand{\ketbra}[2]{\ket{#1}\!\!\bra{#2}}
\newcommand{\eqn}[1]{\begin{equation}\begin{split} #1 \end{split}\end{equation}}
\begin{document}
\title{Quantum Gravity in the Lab: \texorpdfstring{\\}{} Teleportation by Size and Traversable Wormholes}
\author{Adam R. Brown}
\affiliation{Google, Mountain View, CA 94043, USA}
\affiliation{Department of Physics, Stanford University, Stanford, CA 94305, USA}
\author{Hrant Gharibyan}
\affiliation{Department of Physics, Stanford University, Stanford, CA 94305, USA}
\affiliation{Institute for Quantum Information and Matter, Caltech, Pasadena CA 91125, USA}
\author{Stefan Leichenauer}
\affiliation{Google, Mountain View, CA 94043, USA}
\author{Henry W. Lin}
\affiliation{Google, Mountain View, CA 94043, USA}
\affiliation{Physics Department, Princeton University, Princeton, NJ 08540, USA}
\author{Sepehr Nezami}
\affiliation{Institute for Quantum Information and Matter, Caltech, Pasadena, CA 91125, USA}
\affiliation{Google, Mountain View, CA 94043, USA}
\affiliation{Department of Physics, Stanford University, Stanford, CA 94305, USA}
\author{Grant Salton}
\affiliation{Amazon Quantum Solutions Lab, Seattle, WA 98170, USA}
\affiliation{AWS Center for Quantum Computing, Pasadena, CA 91125, USA}
\affiliation{Institute for Quantum Information and Matter, Caltech, Pasadena, CA 91125, USA}
\affiliation{Department of Physics, Stanford University, Stanford, CA 94305, USA}
\author{Leonard Susskind}
\affiliation{Google, Mountain View, CA 94043, USA}
\affiliation{Department of Physics, Stanford University, Stanford, CA 94305, USA}
\author{Brian Swingle}
\affiliation{Condensed Matter Theory Center, Joint Center for Quantum Information and Computer Science, Maryland Center for Fundamental Physics, and Department of Physics, University of Maryland, College Park, MD 20742, USA}
\author{Michael Walter}
\affiliation{Korteweg-de Vries Institute for Mathematics, Institute for Theoretical Physics, Institute for Logic, Language and Computation \& QuSoft, University of Amsterdam, The Netherlands}
\begin{abstract}
With the long-term goal of studying models of quantum gravity in the lab, we propose holographic teleportation protocols that can be readily executed in table-top experiments.
These protocols exhibit similar behavior to that seen in the recent traversable wormhole constructions of~\cite{gao2017traversable,maldacena2017diving}: information that is scrambled into one half of an entangled system will, following a weak coupling between the two halves, unscramble into the other half.
We introduce the concept of \emph{teleportation by size} to capture how the physics of operator-size growth naturally leads to information transmission. The transmission of a signal \emph{through} a semi-classical holographic wormhole corresponds to a rather special property of the operator-size distribution we call \emph{size winding}.
For more general systems (which may not have a clean emergent geometry), we argue that imperfect size winding is a generalization of the traversable wormhole phenomenon. In addition, a form of signaling continues to function at high temperature and at large times for generic chaotic systems, even though it does \emph{not} correspond to a signal going through a geometrical wormhole, but rather to an interference effect involving macroscopically different emergent geometries.
Finally, we outline implementations feasible with current technology in two experimental platforms: Rydberg atom arrays and trapped ions. 
\end{abstract}
\maketitle
\clearpage
\tableofcontents
\clearpage
\nocite{longpaper}

\section{Introduction}
In the quest to understand the quantum nature of spacetime and gravity, a key difficulty is the lack of contact with experiment.
Since gravity is so weak, directly probing quantum gravity means going to experimentally infeasible energy scales. However, a consequence of the holographic principle~\cite{tHooft:1993dmi,Susskind:1994vu} and its concrete realization in the AdS/CFT correspondence~\cite{Maldacena:1997re,Gubser_1998,witten1998anti} (see also~\cite{Banks:1996vh}) is that non-gravitational systems with sufficient entanglement may exhibit phenomena characteristic of quantum gravity. This suggests that we may be able to use table-top physics experiments to probe theories of quantum gravity indirectly. Indeed, the technology for the control of complex quantum many-body systems is advancing rapidly, and we appear to be at the dawn of a new era in physics---the study of quantum gravity in the lab.

One of the goals of this paper is to discuss one way in which quantum gravity can make contact with experiment. We will focus on a surprising communication phenomenon. We will examine a particular entangled state---one that could actually be made in an atomic physics lab---and consider the fate of a message inserted into the system in a certain way. Since the system is chaotic, the message is soon dissolved amongst the constituent parts of the system. The surprise is what happens next. After a period in which the message seems thoroughly scrambled with the rest of the state, the message then abruptly unscrambles, and recoheres at a point far away from where it was originally inserted. The signal has unexpectedly refocused, without it being at all obvious what it was that acted as the lens.

One way to describe this phenomenon is just to brute-force use the Schrodinger equation. But what makes this phenomenon so intriguing is that it has a much simpler explanation, albeit a simple explanation that arises from an unexpected direction~\cite{gao2017traversable}. If we imagine the initial entangled quantum state consists of two entangled black holes, then there is a natural explanation for why the message reappears---it traveled through a wormhole connecting the two black holes! This is a phenomenon that one could prospectively realize in the lab that has as its most compact explanation a story involving emergent spacetime dimensions.

An analogy may be helpful. Consider two people having a conversation, or as a physicist might describe it ``exchanging information using sound waves''. From the point of view of molecular dynamics, it is remarkable that they can communicate at all. The room might contain $10^{27}$ or more molecules with a given molecule experiencing a collision every $10^{-10}$s or so. In such a system, it is effectively impossible to follow the complete dynamics: the butterfly effect implies that a computer would need roughly $10^{37}$ additional bits of precision every time it propagated the full state of the system for one more second. Communication is possible despite the chaos because the system nevertheless possesses emergent collective modes---sound waves---which behave in an orderly fashion.

Our second goal is understanding the emergence of collective gravitational behavior---in a simple scenario---with the language of quantum information science. When quantum effects are important, complex patterns of entanglement can give rise to qualitatively new kinds of emergent collective phenomena. One extreme example of this kind of emergence is precisely the holographic generation of spacetime and gravity from entanglement, complexity, and chaos. In such situations, new physical structures become possible, including wormholes that connect distant regions of spacetime. And like the physics of sound in the chaotic atmosphere of the room, the physics of these wormholes points the way to a general class of quantum communication procedures which would otherwise appear utterly mysterious.

The experimental study of such situations therefore offers a path toward a deeper understanding of quantum gravity. For instance, by probing stringy corrections to the gravitational description, a sophisticated experiment of this type could even provide an alternative handle on the mathematics of string theory.
Another motivation for this work is that many randomized Hamiltonian systems (such as the Sachdev-Ye-Kitaev (SYK) model or certain random matrix models) possess gravitational duals.
Because these model are inherently not fine tuned, their quantum simulations could in principle be easier than many other applications of quantum computers.
Therefore, we believe that quantum experiments simulating such quantum systems have greater potential to be usefully run on near term quantum devices than most other applications that require high accuracy and fine tuning.\\

A companion paper~\cite{longpaper} to this article, by the same authors, provides additional technical details, examples, and further discussion of the physics of holographic teleportation.  

\subsection{The Quantum Circuit}
In this paper, we will consider the quantum circuits shown in \cref{fig:Wormhole_Circuit}. These circuits, which as we will see in \cref{sec:experiment} may be readily created in a laboratory, exhibit the strange recoherence phenomenon we have described.

\begin{figure}
\includegraphics[width=4.5cm]{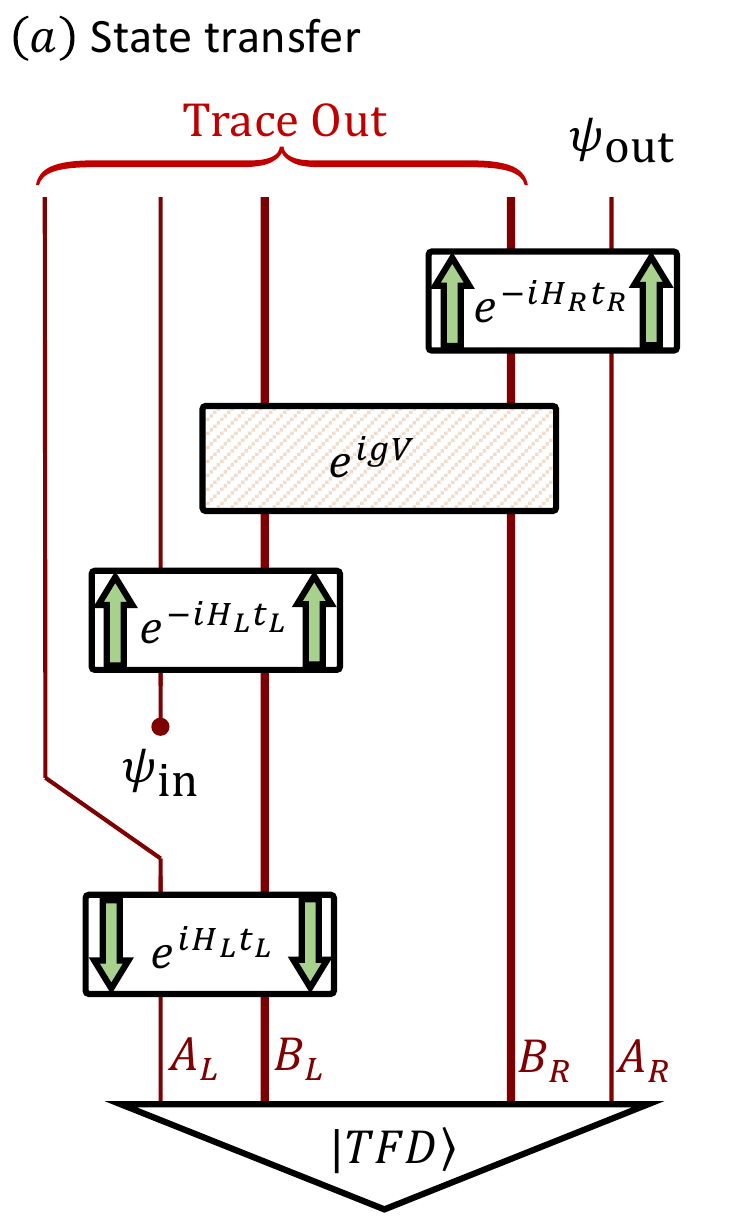}
\qquad\qquad\qquad
\includegraphics[width=4.5cm]{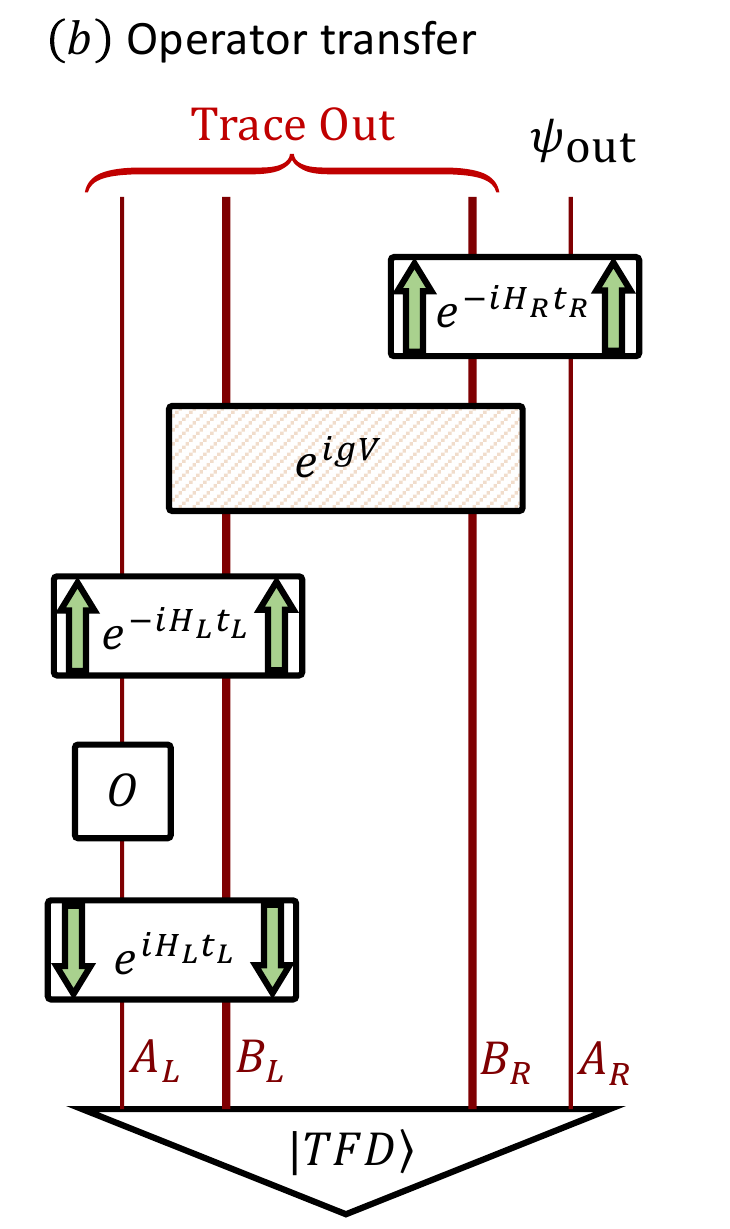}
\caption{The circuits considered in this paper, with $H_L = H_R^T$. Downward arrows indicate acting with the inverse of the time-evolution operator. In both protocols, the goal is to transmit information from the left to the right.
The \textbf{(a) state transfer} protocol calls for us to discard the left message qubits ($A_L$) and replace them with our message $\Psi_{\mathrm{in}}$. The output state on the right then defines a channel applied to the input state.
The \textbf{(b) operator transfer} protocol calls for the operator $O$ to be applied to $A_L$. Based on the choice of operator, the output state on the right is modified, similar to a perturbation-response experiment. \label{fig:Wormhole_Circuit}}
\end{figure}

The circuits act on a $2n$-qubit state. The qubits are divided into $n$ qubits on the left, and $n$ qubits on the right,  subject to Hamiltonians $H$ and $H^T$ respectively, which are assumed to be scrambling~\cite{Sekino_2008, Hayden:2007cs}.
The left and right qubits are initially entangled in the ``thermofield double'' (TFD) state,
\begin{equation}\label{eq:TFD}
    \ket{\TFD}
= \frac{1}{\sqrt{\tr{e^{-\beta H}}}}
    \sum_{j \, \in \,  \textrm{energy levels}}  
    e^{-\beta E_j/2} \, \ket {E_j}_L \ot  \overline{\ket {E_j}}_R,
\end{equation}
where $\beta$ is the inverse temperature, $H \ket {E_j} = E_j \ket {E_j}$, and the bar indicates complex conjugation. We then further partition the systems, labelling $m \ll n$ of the qubits on each side the `message' qubits, and the remaining $n-m$ qubits the `carrier' qubits.

Step one is to bury the message in the left system. First, we evolve all the left qubits `backward in time' by acting with the inverse of the time-evolution operator, $e^{i H t_L}$. Next, we insert the message into the message subsystem of the left qubits. \Cref{fig:Wormhole_Circuit}(a) shows one way to do this---we just throw the existing $m$ qubits away, and replace them with our $m$-qubit message $\Psi_\mathrm{in}$.
\Cref{fig:Wormhole_Circuit}(b) shows another way to do this---keep the $m$ qubits around but act on them with an operator $O$. Next, we evolve the left system `forward in time' using $e^{-i H t_L}$. This forward evolution rapidly scrambles the message amongst the $n$ left qubits.

The next step is to couple the left and right qubits by acting with
\begin{equation*}
\exp{(ig V)},\qquad \text{where }V = \frac{1}{n-m} \hspace{-1mm} \sum_{i \, \in \,  \textrm{carrier qubits} } \hspace{-2mm} Z_i^L Z_i^R \ ,
\end{equation*}
and where $Z_i := (\sigma_z)_i$.
This operation couples each of the left carrier qubits to its mirror image on the right.\footnote{The separation into carrier and message qubits is done solely to clarify our exposition; our results do not change so long as the coupling $V$ acts on sufficiently many qubits (see~\cite{maldacena2017diving}.)}
Finally, we evolve all $n$ of the right qubits `forward in time' using $e^{-i H t_R}$.

It is at this stage that a surprising phenomenon occurs for $t_R \sim t_L$. In the case of state transfer, Figure~\ref{fig:Wormhole_Circuit}(a), the message, so carefully buried on the left, may reappear on the right. In the case of operator transfer, Figure~\ref{fig:Wormhole_Circuit}(b), the action of the operator, so carefully hidden on the left, may become manifest again on the right. The surprise is not that it is information-theoretically possible to recover the message on the right---after all, we coupled the left and right systems with $e^{ig V}$. Instead, the surprise is one of complexity rather than information theory---with the right parameters, we don't need to decode anything, the message just presents itself refocused on the right.
It is not at all obvious how the message made it, and the most surprising fact of all is that the simplest explanation lies in the physics of black holes.

\subsection{Quantum Circuits as Wormholes}

Holography has taught us to think of gravity and spacetime as dual descriptions of non-gravitational quantum systems. In other words, there are chaotic quantum systems, that when looked at differently, can be equivalently described by some type of quantum spacetime with gravity. We tend to call the non-gravitational dual the \emph{boundary}, and the gravitational dual the \emph{bulk}. In a bulk/boundary system, there are phenomena that are more natural in the gravitational bulk side ---those are signatures of the existence of a bulk dual---and phenomena that have natural explanations on the boundary. In this paper and~\cite{longpaper}, our goal is to propose experiments that can be performed on a non-gravitational (boundary) system, which can detect phenomena characteristic of a gravitational dual. Such experiments would enable us to search for signs of the existence of a bulk gravitation dual for non-gravitational quantum systems, and would pave the way to quantum simulations which provide a greater understanding of quantum spacetime.

\begin{figure}
\includegraphics[width=.9\textwidth]{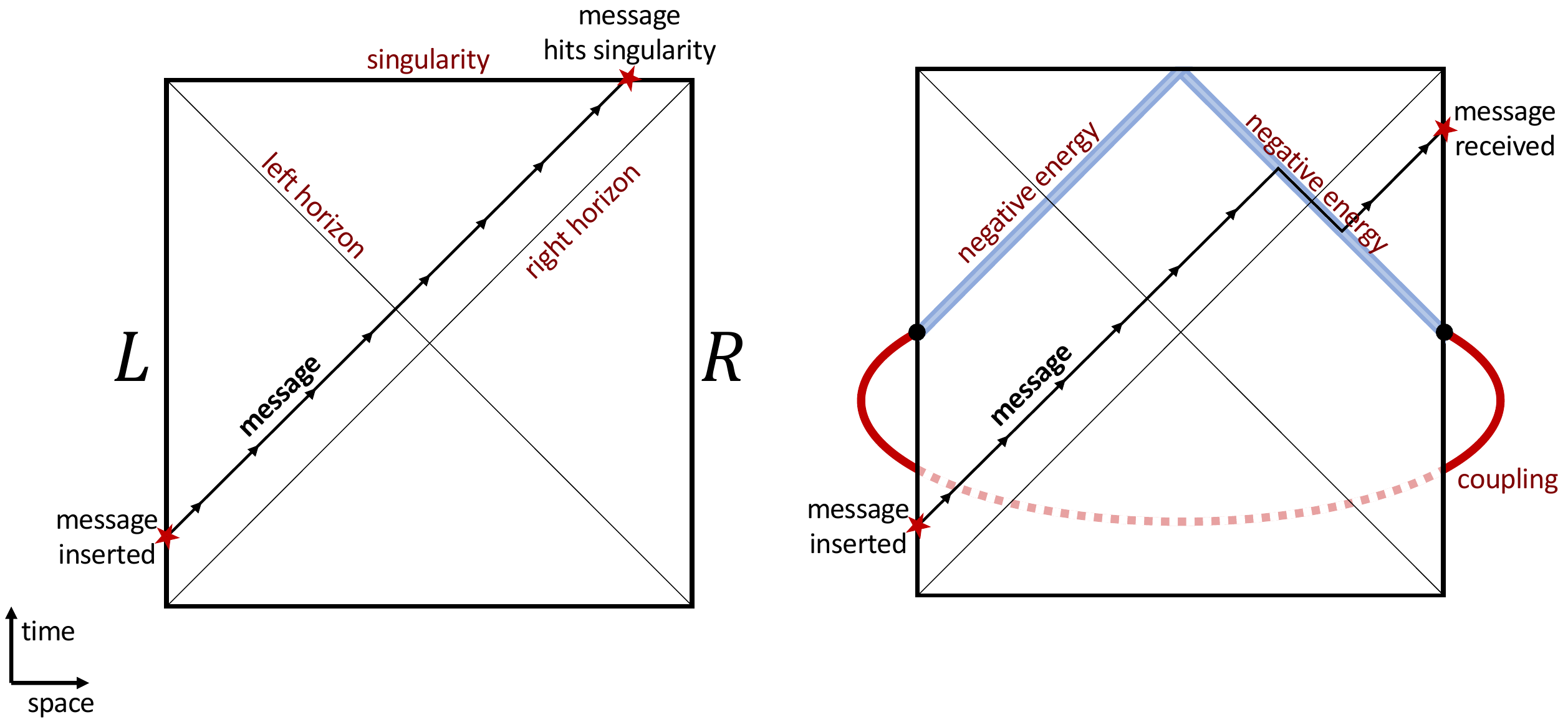}
\caption{Penrose diagram of wormholes.
\textbf{Left}: Without the coupling, a message or particle inserted at early times on the left passes through the left horizon, and hits the singularity (the top line of the diagram).
\textbf{Right}: In the presence of the left-right coupling, the message hits the negative energy shockwave (the thick blue line) created by the coupling. The effect of the collision is to rescue the message from behind the right horizon.}
\label{fig:Penrose_Diagram}
\end{figure}

The AdS/CFT duality is a correspondence between gravitational systems in Anti-de Sitter (AdS) spacetimes and non-gravitational quantum conformal field theories (CFT). A CFT in the thermofield-double state of \cref{eq:TFD} would be dual to the two-sided eternal black hole shown in \cref{fig:Penrose_Diagram}. Such black holes are called `two-sided' because they feature two asymptotic $r=\infty$ regions connected by a wormhole. A pair of observers who jump in from each side may meet before they hit the singularity, but the wormhole is not ``traversable'' since it is not possible to send a signal from the left asymptotic region all the way to the right asymptotic region.

However, in~\cite{gao2017traversable} it was shown how to render such wormholes traversable. A suitably chosen direct coupling between the two sides, which ordinarily do not interact, produces a negative energy shockwave. Negative energy shockwaves impart a time advance to whatever they encounter, and so can rescue a signal that would otherwise be lost to the singularity (\cref{fig:Penrose_Diagram}).

It is this gravitational scattering process that the circuit in Figure~\ref{fig:Wormhole_Circuit} mimics, although the interpretation of the process as traversing a wormhole is \emph{not} valid in general (see Sec.~\ref{sec:holography}). From the gravity perspective, the thermofield double state is used because it exhibits strong left-right correlations (due to the wormhole) that permit negative energy injection. The backward/forward time evolution on the left corresponds to injecting a message in the past on the left. The left-right coupling is the analog of the negative energy shockwave. Finally, the subsequent forward evolution on the right corresponds to allowing the message to travel out to the right boundary where it emerges unscrambled. This process has been called holographic teleportation\footnote{A note on terminology. `Quantum teleportation'~\cite{QTeleportation} refers to using pre-existing entanglement together with classical communication to send a quantum message. If we can do the state-transfer protocol, then we can certainly teleport in the following way~\cite{maldacena2017diving}. Instead of acting with $e^{igV}$, one can simply measure all the left carrier qubits in the $z$-basis, send the \emph{classical} measurement outcomes $z_i \in \{-1,1\}$ over to the right-hand side, and act by $e^{ig\sum_i z_i Z_i^R/(n-m)}$ on the right carrier qubits. This protocol would teleport the message qubits to the right system. It works because the $V$ coupling is classical, i.e., it acts on the left system through a set of commuting operators.} through the wormhole, for which the bulk description is relatively clear. A description of this process in terms of the boundary dynamics has previously been elusive; in this paper, we seek to explain the process from the boundary perspective.

\clearpage\subsection{Summary of Results}
In this paper we identify two distinct mechanisms by which the circuit in \cref{fig:Wormhole_Circuit} can teleport:
\begin{enumerate}
\item High temperature, low capacity teleportation. Holds for times larger than the scrambling time. This mechanism is~\emph{unexpected} from gravity and does not correspond to signals traversing a geometric wormhole. This mechanism only requires that the system dynamics are scrambling, and it is, therefore, applicable to a wide variety of chaotic systems (e.g., random Hamiltonian evolutions, chaotic spin systems, etc.) We also outline experimental proposals for realizing this form of teleportation.
\item Low temperature, high capacity teleportation. Applies near the scrambling time. This regime corresponds to \emph{teleportation through the wormhole}, and it applies to Hamiltonians that have a holographic dual. To understand this mechanism, we introduce the notion of \emph{size winding}, which is an ansatz for the thermal operator near the scrambling time. We explicitly demonstrate size winding in the SYK model, one of the few simple models that are known to have gravitational duals. We thus propose size winding as a general diagnostic of signals traversing a wormhole.
\end{enumerate}

\subsection{Organization of the paper}
In \cref{sec:sizeandmomentum}, we study the circuit of \cref{fig:Wormhole_Circuit} using quantum mechanics, without assuming a holographic dual.
We introduce the notion of \emph{teleportation by size}, we study both mechanisms of teleportation, and we provide general formulas for the fidelity of teleportation.
In \cref{sec:holography} we explain how, in the context of a system with a clean holographic dual, size winding has a direct interpretation in terms of momentum wavefunctions of bulk particles in some appropriate time regime, while in other regimes it need not have a description in terms of particles traversing semiclassical geometrical wormholes.
In \cref{sec:experiment} we discuss concrete experimental realizations of teleportation by size.
The appendices contain proofs of our technical results.

\subsection{Related work}
Other studies of information transfer through traversable wormholes and related notions include~\cite{freivogel2019traversable,bak2019experimental,Bao_2018,bao2019wormhole,gao2019traversable}. In particular, one small-scale experiment with trapped ions has already been carried out~\cite{Landsman_2019} based on~\cite{Yoshida2017,Yoshida2018}. This experiment implemented a probabilistic protocol and a deterministic Grover-like protocol~\cite{Yoshida2017}. In the deterministic case, the circuit in~\cite{Landsman_2019} can be related to our \cref{fig:Wormhole_Circuit}(a) if we specialize to infinite temperature, push the backward time evolution through the thermofield double, and replace~$V$ by a projector onto a Bell pair.

\section{Teleportation by Size}\label{sec:sizeandmomentum}
We base most of our analysis in this paper on size distributions and operator growth, notions heavily studied in connection to holography~\cite{susskind2019complexity,lin2019symmetries,qi2019quantum} and many-body physics~\cite{Nahum_2018,von_Keyserlingk_2018,Xu_2019}---hence the term~\emph{teleportation by size}.
In \cref{subsec:Teleportation_by_Size} we discuss the state transfer protocol.
We will see that state transfer can be done for very generic chaotic quantum systems -- even at infinite temperature. This is the first mechanism of teleportation.
In \cref{subsec:swinding}, we introduce a property of size distributions, called~\emph{size winding}, which we use to explain the second mechanism of teleportation.
Size winding gives a clean mechanism for operator transfer that abstracts the way geometrical wormholes work at the level of the boundary theory (we discuss the latter in \cref{sec:holography} below).
In \cref{subsec:TS_Hamiltonian_General}, we present general bounds on the fidelity of teleportation.

\subsection{Mechanism 1: State Transfer by Size-Dependent Phase}\label{subsec:Teleportation_by_Size}
In this section, we focus on the first mechanism, mentioned above. We study a toy model of state transfer, and we will see that the phenomenon is quite generic. Consider the $2m$-qubit message system $\Hil_{A_L}\ot \Hil_{A_R}$ (see~\cref{fig:Wormhole_Circuit}), and a unitary operator $S = S_{A_LA_R}$ that satisfies
\begin{align}\label{eq:hol_tel_def_cond}
  S\ket{P} =e^{ig' |P|}\ket{P},\quad \text{for all $m$-qubit Paulis }P.
\end{align}
We show in \cref{app:state transfer by size dependent phase} that $S$ maps $\Psi_{\mathrm{in}}\ot\tau$, with $\Psi_{\mathrm{in}}$ an $m$-qubit initial input state and $\tau = I/2^m$ the maximally-mixed state, to
\begin{align}\label{eq:TS_Explicit}
 \Psi_{\mathrm{out}} := \Tr_{A_L}[S(\Psi_{\mathrm{in}}\ot\tau)S^\dagger] = Y^{\ot m} \Delta_\lambda^{\ot m}(\Psi_{\mathrm{in}}) Y^{\ot m},
\end{align}
after tracing out the left subsystem, where $\Delta_\lambda$ is the single qubit depolarizing channel $\Delta_\lambda(\rho) := (1-\lambda)\tau + \lambda\rho$, and $\lambda=(1-\cos(g'))/2$. In pictures,
\vspace{-2mm}
\begin{center}
    \includegraphics[width=8.9 cm]{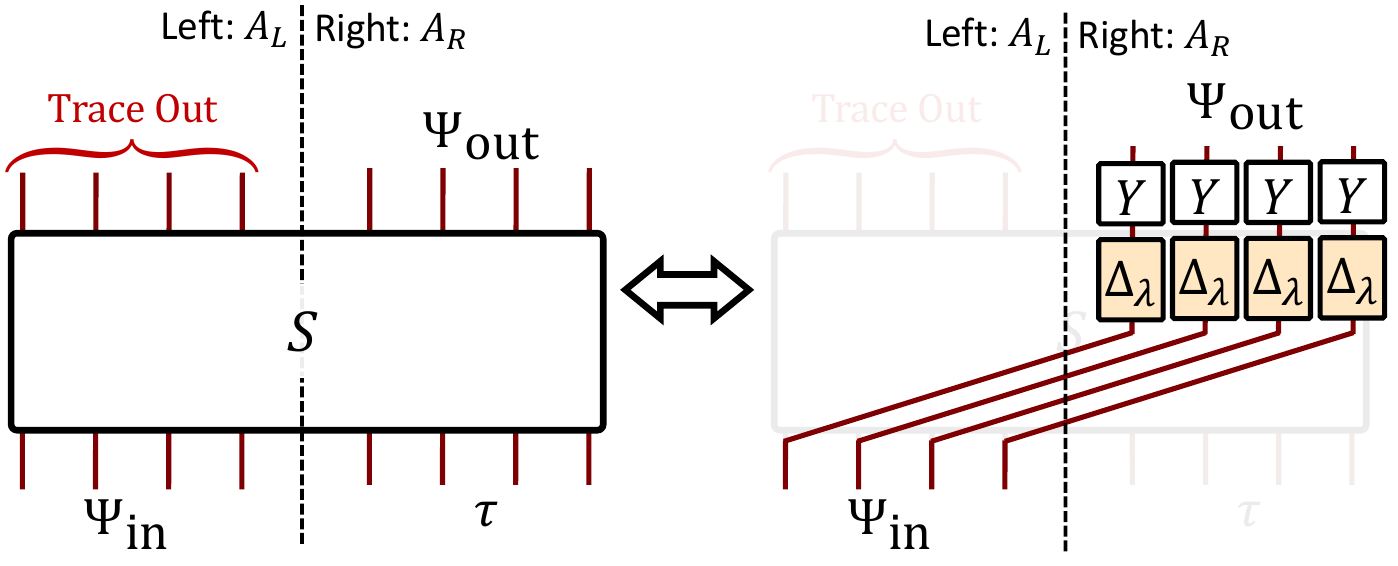}
\end{center}
\vspace{-3mm}
For $g'=\pi$ the state transfer is perfect, whereas for $g'=0$ no signal is sent.

As we show in \cref{fig:Circuit_Equality}, it is natural to look at the $e^{igV}$ coupling between the $L$ and $R$ Hilbert spaces ``sandwiched'' with time evolutions:
\begin{align}\label{eq:sndc}
\left[e^{+iH_Lt} \ot e^{-iH_Rt} \right]e^{igV} \left[e^{-iH_Lt}\ot e^{+iH_Rt}\right].
\end{align}
For many systems of interest, the net effect of the sandwiched coupling on the message subsystems $\Hil_{A_L} \ot \Hil_{A_R}$ is nothing but to approximately implement the unitary $S$ (defined in \cref{eq:hol_tel_def_cond}).
In this way, these systems can achieve high-fidelity state transfer.

\begin{figure}
\includegraphics[width=.65\linewidth]{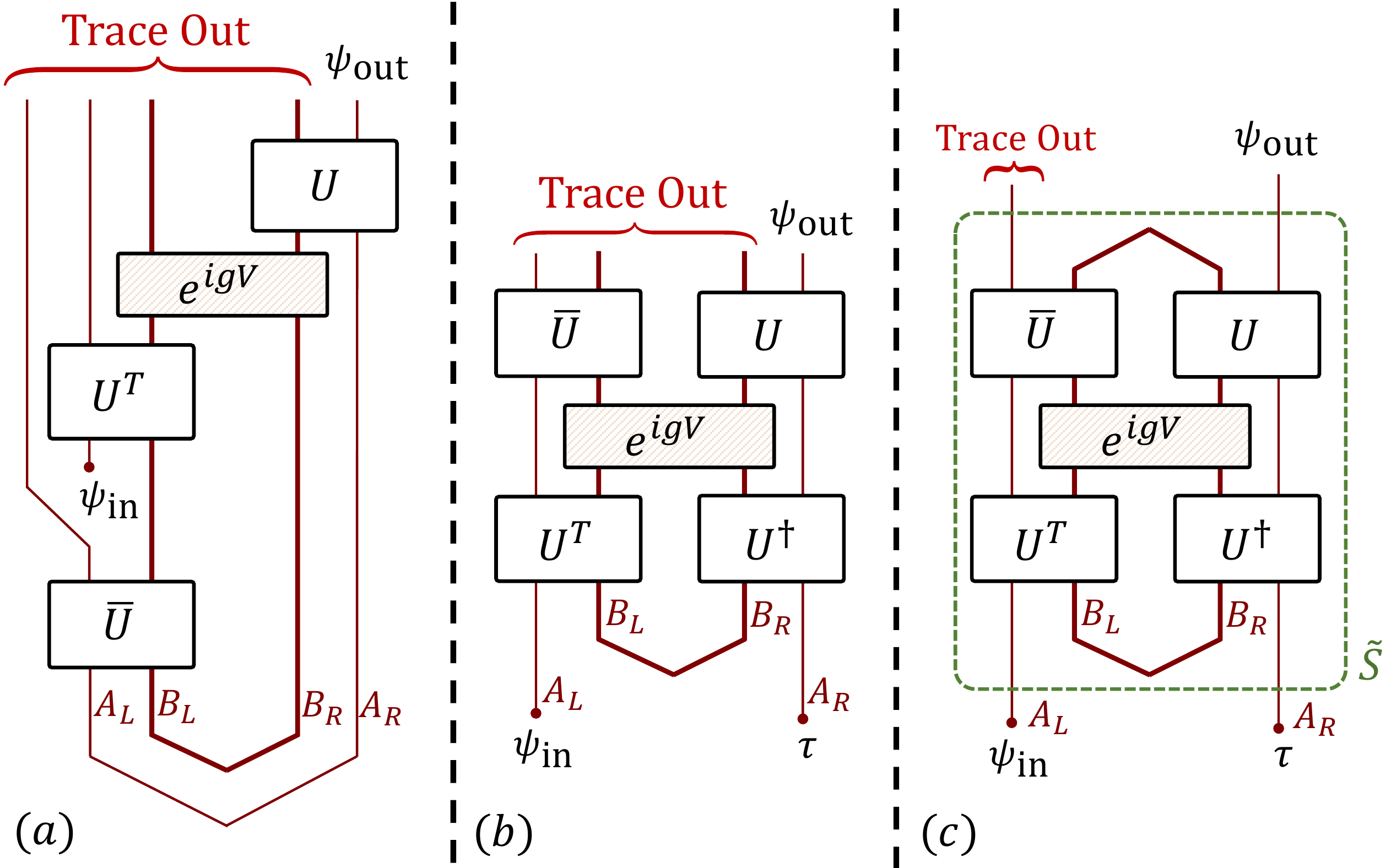}
\caption{{\bf (a):} Infinite-temperature holographic teleportation circuit, with $U=e^{-iHt}$.
{\bf (b):} An equivalent circuit to (a) after circuit manipulations.
{\bf (c):} Result of replacing the trace by projecting the carrier qubits onto $\ketum$.
When the teleportation has high fidelity, this projection has a negligible effect on the final state.
For many systems of interest, the operator~$\tilde S$ enclosed in the dashed rectangle approximately implements the unitary defined in \cref{eq:hol_tel_def_cond} for some appropriate~$g'$.}
\label{fig:Circuit_Equality}
\end{figure}

The simplest case to analyze is when the time evolution~$U = e^{-iH_Rt}$ is described by a Haar random unitary.
In this case, the average of the sandwiched coupling in \cref{eq:sndc} is given by
\begin{align}\label{eq:sandwich haar}
  e^{ig} \phi^+_{LR} + \cos(g/k)^k (I-\phi^+_{LR}) \approx e^{ig} \phi^+_{LR} + (I-\phi^+_{LR}).
\end{align}
up to corrections of order $O(4^{-n})$.
By projecting the carrier qubits onto a maximally entangled state, we thus find that the average of the operator~$\tilde S = \tilde S_{A_LA_R}$ defined in \cref{fig:Circuit_Equality}~(c) over the unitary group is given by $e^{ig} \phi^+_{A_LA_R} + (I-\phi^+_{A_LA_R})$.
When $m=1$, this agrees exactly with \cref{eq:hol_tel_def_cond}, up to a global phase.
In fact, a random instance of $\tilde S$ is close to its average~$S$ with high probability.
For Haar random unitaries, we show this in very strong terms -- these operators are exponentially close (in $n$) to each other in the operator norm, with a probability that is exponentially close (in $n$) to $1$.
See \cref{sec:avg_sec} for a proof.
In contrast, when $m>1$ then the average of~$\tilde S$ no longer coincides with \cref{eq:hol_tel_def_cond}.
Rather, the sandwiched coupling \cref{eq:sandwich haar} acts by applying a constant phase on $\ketum_{A_LA_R}$ and as the identity on all other states in message Hilbert space. This simple operation can be employed to send one qubit, but changing the sign of one state is not a powerful enough operation to send multiple qubits (see \cref{app:Transpose_depo}).\footnote{Note that our results agree with the observations of~\cite{maldacena2017diving} that at late times in the traversable wormhole setup, the commutator of left and right boundary operators acquires an imaginary part, indicating some transmission of information. In~\cite{maldacena2017diving}, the authors observed that at late times the signal should be proportional to $\sin g$, while our calculations indicate a signal proportional to $1-\cos(g)$. This apparent contradiction is resolved in~\cref{app:general_channel}, in which we derive an explicit formula for the output state upon acting by a generic quantum channel on $A_L$. Specifically, in~\cite{maldacena2017diving} it is assumed that one acts on $A_L$ using $e^{i\epsilon O}$, for which we can see from the generic channel in~\cref{app:general_channel} that the response of $\Psi_{\mathrm{out}}$ is proportional to $\sin(g)$, consistent with~\cite{maldacena2017diving}[Eq. (2.20)].}
In fact, no matter what the encoder does at time $-t$ to the system $A_L$ (i.e., acting by a generic channel on $A_L$, which includes the state transfer, operator transfer, and many other protocols), the Holevo information of the full quantum channel from left to right is highly limited. In~\cref{app:general_channel}, we show that it is not possible to send more than $3$ classical bits, and, consequently, $3$ qubits in this way. Moreover, we believe that this bound is a conservative one.

The preceding results hold more generally for 2-designs (which are commonly associated with scrambling and chaos) and can therefore be thought of as modeling the late time behavior of scrambling many-body systems.
In~\cite{longpaper}, we study a variety of other systems in detail, including time evolution with random nonlocal Hamiltonians (GUE or GOE ensembles), $2$-local Brownian circuits, and spin chains.
We show that, at very large times, all models demonstrate the same behavior, but at intermediate times different systems have different physics.

\subsection{Mechanism 2: Size Winding}\label{subsec:swinding}
Consider an observable $O$ and its transpose (in the computational basis) $O^T$ acting at time $-t$ on the left Hilbert space.
Using the definition of the TFD state, this can be expressed as:
\begin{align}\label{eq:sweq1}
\frac{1}{2^{n/2}}O_L^T(-t) \ket {\TFD}_{LR}=(\rho_\beta^{1/2})_RO_R(t)\ketum_{LR},
\end{align}
where $O(t)= e^{iHt} O e^{-iHt}$, $\rho_\beta = e^{-\beta H} ( \text{tr} \, e^{-\beta H})^{-1}$ is the thermal state, and $\ket{\phi^+}$ denotes the maximally entangled state.
The application of $O_L^T(-t)$ should be contrasted with the action of $O_R(t)$ directly on the thermofield double state:
\begin{align}\label{eq:sweq2}
\frac{1}{2^{n/2}}O_R(t)\ket {\TFD}_{LR} = O_R(t)(\rho_\beta^{1/2})_R\ketum_{LR}.
\end{align}
Importantly, the only difference between \cref{eq:sweq1,eq:sweq2} is the order of insertion of~$\rho_\beta^{1/2}$ and~$O(t)$.
Now, expand the operator $\rho_\beta^{1/2} O(t)$ in the Pauli basis as $2^{-n/2}\sum_P c_P P$, where the sum runs over all $n$-qubit Paulis.\footnote{From now on, we suppress the subscripts $L$ and $R$ when there is no confusion.}
Write $\lvert P\rvert$ for the size of an $n$-qubit Pauli operator, i.e., the number of terms not equal to an identity operator.
We define the \emph{winding size distribution}:
\begin{align}\label{eq:size_dist}
  q(l) := \sum_{|P|=l} c_P^2.
\end{align}
The winding size distribution is in contrast to the definition of the conventional size distribution, for which the sum is over the square of the absolute value of $c_P$.
(See~\cite{qi2019quantum} for a proper treatment of fermionic systems.)
The conventional size distribution and the winding size distribution coincide for $\beta = 0$, for which $\rho_\beta^{1/2} O(t)$ is a Hermitian operator and has real expansion coefficients, $c_P \in \RR$.

\emph{Size winding}, in its perfect form, is the following ansatz for the operator wavefunction:
\begin{align*}
    \rho_\beta^{1/2} O(t) =\frac{1}{2^{n/2}} \sum_{P\text{ is an }n\text{-qubit Pauli}} e^{i\alpha |P|/n} r_P P, \quad r_{P} \in \RR.
\end{align*}
The key part of this definition is that the coefficients in the size basis acquire an imaginary phase that is linear in the size of the operators.
If we define $\ket P_{LR}:=P_R\ketum_{LR}$ and assume perfect size winding, then we conclude from the discussion above that
\begin{align}
O_L^T(-t) \ket {\TFD} &= \sum_{P} e^{i\alpha |P|/n} r_P \ket P,\label{eq:erafrleft} \\
O_R(t) \ket {\TFD} &= \sum_{P} e^{-i\alpha |P|/n} r_P  \ket P.\label{eq:erafrright}
\end{align}
Thus, when expressed in the Pauli basis, the difference between the actions of $O^T_L(-t)$ and $O_R(t)$ is given by the ``direction'' of the winding of the phases of the coefficients.

The role of the coupling $e^{igV}$ on a Pauli basis state $\ket P$ is very simple:
it gives a phase of~$-2g/k$ times the number of Pauli $X$ or $Y$ operators acting on the carrier qubits (up to a constant phase).
For typical Pauli operators~$P$, the latter is roughly $2/3$ times the size, hence it follows that
\begin{align*}
  e^{igV} \ket P \approx e^{-i(4/3)g|P|/n}\ket P
\quad \text{up to a constant phase,}
\end{align*}
provided $n\gg m$ (up to a constant phase).
Under the natural hypothesis that the coefficients~$r_P$ only depend on the support of the Pauli operator~$P$, we can similarly show that
\begin{equation}\label{eq:eraf}
  e^{igV} O_L^T(-t) \ket{\TFD}
\approx \sum_{P} e^{i(\alpha - (4/3)g) |P|/n} r_P \ket P
\quad \text{up to a constant phase.}
\end{equation}
\Cref{eq:eraf} illustrates how the weak coupling can transfer a signal from left to right: with a careful choice of~$g$, the action of the coupling unwinds the distribution in \cref{eq:erafrleft} and winds it in the opposite direction to obtain \cref{eq:erafrright}.
This shows that the coupling maps a perturbation of the thermal state of the left system to a perturbation of the right system.
See \cref{app:Twisting} for a precise statement and derivation of this result.

In~\cite{longpaper}, we show that the large-$q$ SYK model exhibits near-perfect size winding and that near-perfect size winding should be present in holographic systems.
Indeed, this is to a large extent nothing but a translation of existing results on two-point functions for traversable wormholes~\cite{maldacena2017diving, qi2019quantum} in the language of size, as we discuss in \cref{sec:holography}.
We will also see that more general size winding, i.e., a size-dependent phase in $q(l)$ that is not necessarily linear in the size, exists in systems without geometric duals.
In fact, we study non-local random Hamiltonian evolution analytically and show that they can weakly transmit a small amount of information in this fashion. See \cref{fig:Summary_of_everything} for a summary of size winding in different scenarios.
\begin{figure}
\includegraphics[width=10 cm]{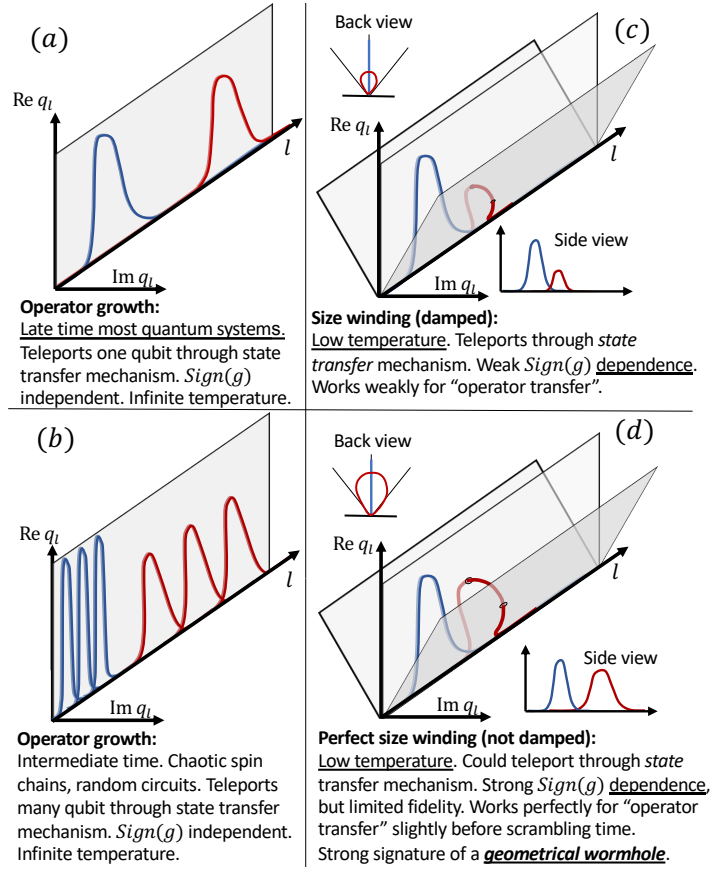}
\caption{A short summary of teleportation by size, discussing different systems, different patterns of operator growth, and consequence of each growth pattern for signal transmission. {\bf Blue:} Initial operator-size distribution. {\bf Red:} Winding size distribution of the time-evolved operator.}
\label{fig:Summary_of_everything}
\end{figure}

\subsection{General Bounds on the Fidelity}\label{subsec:TS_Hamiltonian_General}
In this section, we present general bounds on the entanglement fidelity of the state transfer protocol at arbitrary times and temperatures (\cref{fig:Wormhole_Circuit},~(a)).
For a quantum channel $\mathcal C_{A\rightarrow A}$, the entanglement fidelity~\cite{schumacher1996sending} is given by the overlap between the output and input state when the input is a maximally entangled state between $A$ and an environment $E$ of the same dimension:
$F := \sqrt{\braum_{AE} \mathcal C_{A\rightarrow A}(\phi^{+}_{AE})\ketum_{AE}}$.
Importantly, $F$ lower bounds the average fidelity of the channel over random inputs $\ket \Psi_A$, i.e., $F \leq \EE_{\ket{\Psi}_A}\mathcal F(\Psi_A,\mathcal C(\Psi_A))$.
Motivated by \cref{eq:TS_Explicit}, we take the channel $\mathcal C$ to be the composition of the state transfer protocol with a tensor product of Pauli-$Y$ operators serving as the decoding channel.

Consider a Pauli operator of initial size $l_0$. We assume that Pauli operators with the same initial size $l_0$ have the same generic operator growth, and we denote by~$q_{l_0}$ the corresponding winding size distribution (defined as in \cref{eq:size_dist}).
The central object in our bounds is the Fourier transform of the size distribution, which, for $g^2, gm \ll n$, is equal to the left-right two-point function:
\begin{equation}\label{eq:qtildemaintext}
 \tilde {q}_{l_0}(g) := \sum_{l=0}^n q_{l_0}(l) e^{-i (4g/3)l/n} \approx e^{-ig} \bra{\TFD}  O_R(t) e^{igV} O_L^T(-t) \ket{\TFD}.
\end{equation}
This is proved in \cref{app:two point}.
The fidelity is a difficult quantity to evaluate directly, yet one can still provide strong bounds on the fidelity in terms of the simpler quantity~$F_q$, defined to be
\begin{equation}\label{eq:def F_q main}
    F_q := \abs[\big]{ \sum\limits_{l=0}^{m} \left[N_l/4^m\right](-1)^l\tilde q_l(g) },
    \quad\text{where}\quad N_l = \textstyle\binom{m}{l}3^l.
\end{equation}
We show the following bounds on the entanglement fidelity in terms of $F_q$:
\begin{align}\label{eq:fid_main_text}
 F_q \lessapprox F &\lessapprox F_q + \sum\limits_{l=0}^{m} (N_l/4^m)\sqrt{1-\abs{\tilde q_l(g)}^2}
\end{align}
For local Hamiltonian evolutions, and in a variety of time regimes, $F_q$ can be a good estimate of $F$ as the error term on the right-hand side of \cref{eq:fid_main_text} will be small.\textbf{}
Under the assumption that the thermal state has a narrow size distribution, we can also show that
\begin{align}\label{eq:fid_main_text2}
     F &\lessapprox 1-\frac{1}{4^{m}}(1- F_q).
\end{align}
See \cref{app:Fid_Form} for proofs.
These relations allow us to rigorously bound the entanglement fidelity for various random Hamiltonian and spin chain models in several parameter regimes.

As an example, we can use \cref{eq:fid_main_text} to confirm that random unitary time evolution at infinite temperature should teleport a single qubit as shown in \cref{subsec:Teleportation_by_Size}. At $\beta=0$, $q_{0}$ is peaked at $l=0$ for all times. However, $O(t) = U^\dagger O U$ is a completely random combination of Pauli strings, and thus its size distribution is peaked at $l=(3/4)n$. Hence, $\tilde q_0(g) = 1$, and $\tilde q_1(g) \approx e^{-ig}$. Thus, $F_q = |3e^{-ig}/4 -1/4|=\sqrt{1/4+3/4( 1 - \cos(g))/2 }$. Furthermore, since $|q_l|=1$, we have that $F_q=F$ from \cref{eq:fid_main_text}. Therefore, the channel can teleport with perfect fidelity when $g=\pi$.

\section{The Holographic Interpretation}\label{sec:holography}
The analogy between \cref{fig:Wormhole_Circuit,fig:Penrose_Diagram} is very suggestive, and now we will return to the question of whether the geometric picture is a faithful representation of the physics.
In other words, when can we claim that a message was sent through an emergent geometry?
The teleportation-by-size mechanism we have introduced generalizes the traversable wormhole, and persists even in cases where a fully classical wormhole is not the appropriate description.
In fact, we will see that even in the holographic setting, at very large times the teleportation-by-size paradigm remains valid even when the description in terms of a single semi-classical geometry breaks down.

\subsection{Size and momentum}
The growth of the size of an operator is a basic manifestation of chaos, and is related to a particle falling towards a black hole horizon~\cite{Roberts:2018mnp, Brown:2018kvn, qi2019quantum}. In the context of SYK, or Nearly AdS$_2$ holography, the bulk interpretation of size is particularly sharp~\cite{lin2019symmetries}, which we now review. In the traversable wormhole, the particle crossing the negative-energy shockwave experiences a (null) translation. The shockwave can therefore be interpreted as the generator of this translation, otherwise known as (null) momentum.
The shockwave is a direct consequence of the interaction between the two sides, which in the SYK model is simply the ``size'' operator. Thus, the size operator is simply related to null momentum~\cite{maldacena2017diving,lin2019symmetries,longpaper}.

A more precise argument based on~\cite{lin2019symmetries} can also be given; a detailed version will appear in~\cite{longpaper}.  Readers unfamiliar with Nearly AdS$_2$ may jump to the next section.
The starting point is that for states close to the thermofield double, the operators defined by
\eqn{B = H_R - H_L, \quad E=H_L + H_R + \mu V - E_0 \label{eq:approximate-sym}}
have a simple geometrical action as a Lorentz boost $B$, and as global time translation $E$~\cite{maldacena2018eternal}.
Here $V$ is a sum of operators on both sides $V = \sum_{i=1}^k O_i^L O_i^R$; in the SYK model, the simplest choice would be to take $V \propto i \sum \psi^j_L \psi^j_R$ to be the size operator. The value of $\mu$ and $E_0$ should be tuned so that the TFD is an approximate ground state of $E$, see~\cite{maldacena2018eternal}.
It is then natural to consider the combinations
\def\hf{ \frac{1}{2} }
\eqn{P_\pm = -\hf (E\pm B). }
For our purposes, the important point is that $e^{ia^\pm P_\pm}$ generate a null shift.\footnote{Said more precisely, these generators act as left/right Poincare symmetry generators, which are null shifts at the edge of the Poincare patch.} By choosing the right sign of $a^\pm$, we can shift the particle backwards so that it traverses the wormhole. Now notice that \eqn{-P_+ = H_R +  \mu V/2, \quad -P_- = H_L + \mu V/2.}
The remarkable feature of this formula is that the action of $P_\pm$ is exceedingly simple on the left/right Hilbert space (equivalently, on one-sided operators), since we can ignore $H_L$ or $H_R$. For operators on the left (right) side, the amount of $P_+$ ($P_-$) momentum inserted is just given by the size, up to some normalization.

This in turn implies that the size wavefunction of a one-sided operator $O$ (e.g., the components of $O$ in a basis of operators organized by size) is dual to the momentum-space wavefunction of the particle created by $O$. The Fourier transform of the momentum wavefunction is then related to the ``position'' of the particle in the bulk, where ``position'' here means the AdS$_2$ coordinate conjugate to null momentum. Furthermore, the action of the two-sided coupling $e^{igV}$ in the traversable wormhole protocol simply shifts the position of the particle, allowing the particle to potentially exit the black hole.

The upshot is that in a holographic setting, we can clearly see that the winding of the size distribution is related to the location of the particle, e.g., whether the particle is inside or outside of the black hole horizon.
The case of imperfect winding can be seen as a generalization of the situation where a good geometric dual exists, though the geometric intuition may still prove useful even in that case.

\subsection{Superpositions of Geometries at Large Times}
For times much larger than the scrambling time, the evolution of any chaotic system becomes random. In this regime, a few bits of information can still be transmitted by the coupling. But the interpretation of this signal is not that the particle goes through a semi-classical wormhole, even if the quantum system is in a parameter regime (e.g., large $N$ and strong coupling) where a clean semi-classical description is possible. The reason is the butterfly effect: at large times, a small perturbation (putting in the particle) can destroy any correlations between the two sides that would have existed without the perturbation.
The strength of the negative energy shockwave in the bulk is directly proportional to the amount of correlation between the two sides; at very large times, the correlation is simply too weak to shift the particle out of the horizon.
Nevertheless, there is another effect \cite{maldacena2017diving} involving the interference of two macroscopically different states (or bulk geometries) that allows for information transfer that we will now explain.

Consider the insertion of a message at time $-t$ on the left system using the unitary operator $U_L = e^{i\epsilon \phi_L} \approx 1 + i\epsilon \phi_L$.\footnote{The small-$\epsilon$ approximation is not necessary for the conclusions of this section. See~\cite{gao2018regenesis}.}  At time $t=0$ we let the left and right systems interact, so that the state is $\ket \Phi= e^{igV}U_L(-t)\ket{\TFD}$. We know that the action of $e^{igV}$ depends on the size of the state on which it acts. The key fact is that the operator $\phi_L(-t)$, for large $t$, is a totally random operator. Therefore, its size is equal to that of a random operator, which is nearly maximal. So $e^{igV}$ acts simply as a relative phase $\theta \sim 1$ between $\ket{\TFD}$ and $\phi_L(-t)\ket{\TFD}$. We can think of it as a phase-shift gate. Then $\ket \Phi = \ket{\TFD} + i\epsilon e^{ig\theta}\phi_L(-t)\ket{\TFD}$. This state is a superposition of two vastly different geometries: one is an empty wormhole, given by the state $\ket{\TFD}$, while $\phi_L(-t)\ket{\TFD}$ contains an energetic particle with a significant backreaction on the geometry.

A simple way to record the receipt of the message is to compute the change of the expectation value of $\phi_R(t)$:
\begin{equation}
\bra \Phi \phi_R(t)\ket \Phi - \langle \phi_R(t)\rangle_\textrm{therm} =  2\epsilon \sin (g \theta)\langle\phi_R\phi_L\rangle_\textrm{therm}.
\end{equation}
See~\cite{gao2018regenesis,maldacena2017diving} for similar calculations.
Clearly, this scenario does not have the interpretation of a classically traversable wormhole. In fact, there is not much geometry left in the description at all. This scenario is contrasted with the situation at shorter times, where we have access to multiple eigenvalues of $e^{igV}$ and the momentum-size correspondence has a clear geometric meaning. In all cases, the dynamics of the phase in the size distribution gives the right description of the physics, but there is a transition from a classical to a quantum picture.

\subsection{Wormhole Tomography and Other Future Directions}
There are a number of interesting future directions for investigation. We have focused on two regimes, one relatively short (slightly before the scrambling time) where the particle classical traverses, and the long time effect, which involves interference. This of course does not exhaust the list of non-geometric effects; for example, stringy effects can play an important role at finite coupling, when the string scale is not parametrically suppressed \cite{maldacena2017diving}. We have started to explore this in the analytically-tractable playground of the large-$q$ SYK model at finite $\beta \mathcal{J}$~\cite{longpaper}.

One might wonder whether it is really possible to operationally distinguish whether the information went ``through'' the wormhole, or not. We propose the following criteria: if the black hole is in a state where there is a diary behind both horizons, a protocol which involves teleportation ``through'' the wormhole should be sensitive to what is in the diary. In other words, if Bob claims that he went through a wormhole to get to Alice, we can ask him to prove it by giving some description of what was inside the black hole. If we send multiple observers through, they should share information about the interior that is consistent with each other.

In the classically-traversable case, one can imagine therefore engaging in ``wormhole tomography,'' where the contents of the wormhole interior (as determined by some non-TFD initial state) are probed experimentally by state transfer experiments; the signal exiting the wormhole will be modified in some way by the particular geometry of the wormhole and the presence or absence of any matter.

We analyze size winding in the SYK model in~\cite{longpaper}, but there are still some open questions about the details of the state transfer protocol in the case where it corresponds to a through-the-wormhole process. Rather than simply swapping a physical qubit with the message qubit, as we have advocated here, one wants to swap the message qubit with a logical qubit that represents, say, the polarization states of an emergent bulk photon. The key fact about this distinction is that the logical subspace for the encoding has fixed bulk energy, so the gravitational backreaction does not depend on the message. This is one way to avoid superpositions of macroscopically different geometries. There is no obstruction preventing us from carrying out this task in principle, and it might be instructive to actually do it. The problem is one of engineering, and a more complicated model like $\mathcal{N}=4$ super Yang-Mills theory might be required in order to have the necessary ingredients.

\section{Experimental Realization}\label{sec:experiment}
As discussed above, this work concerns a whole family of protocols, all of which are interesting to study experimentally for the light they would shed on entanglement, chaos, and holography. For example, if the system under study has a simple dual holographic description, such as the SYK model~\cite{Sachdev1993Gapless,Georges2000Mean,Georges2001Quantum,kitaev,Polchinski_2016,ms,kamenev,Garc_a_Garc_a_2016} or certain supersymmetric gauge theories, the experiments described here can directly probe traversable wormholes. More generally, these experiments probe communication phenomena inspired by and related to the traversable wormhole phenomenon in holographic models. The key ingredients are as follows.

First, one must be able to prepare a thermofield double state associated with $H$. This means preparing a special entangled state of two copies of the physical system, the left and right systems. At infinite temperature, the thermofield double state is just a collection of Bell pairs between left and right (or the appropriate fermionic version). For general Hamiltonians and non-infinite temperature, there is no known procedure to prepare the thermofield double state. However, there are recently proposed approximate methods that are applicable to systems of interest including the SYK model and various spin chains~\cite{Martyn_2019,Cottrell_2019,wu2018variational}.

Second, one must be able to effectively evolve forward and backward in time with the system Hamiltonian $H$. More precisely, we require the ability to evolve forward and backward with $H_L = H$ on the left system and the ability to evolve forward with the CPT conjugate of $H$, $H_R = H^T$, on the right system. Given a fully controlled fault-tolerant quantum computer and a Trotterized approximation of $e^{-i H t}$, it is in principle no more challenging to implement $e^{+ i H t}$ (backward evolution) than it is to implement $e^{-i H t}$ (forward evolution). However, implementing forward and backward time evolution in a specialized quantum simulator requires specific capabilities. In the context of measurements of out-of-time-order correlators, various techniques have been developed to achieve this level of control, at least approximately~\cite{Swingle:2016var,Yao2016a,Zhu2016,Campisi2017,Halpern2016,Yoshida2018,Li2017a,Garttner2016,Meier2017,Landsman_2019}.

Third, one must be able to apply the weak left-right coupling given by the $V$ operator. More precisely, it must be possible to generate the unitary $e^{i g V}$. This coupling must be applied suddenly, in between the other time-evolution segments of the circuit.

Fourth, one must be able to apply local control operations, including deleting and inserting qubits, performing local unitary operations, and making local measurements in a general basis. This requires some degree of individual qubit addressability, although in the simplest cases one only needs to single out a small number of qubits.

Given these capabilities, the general protocols in \cref{fig:Wormhole_Circuit} can be carried out. For concreteness, the remainder of this section will focus on the case of the insertion/deletion protocol, \cref{fig:Wormhole_Circuit}(a).
To give an example, consider the deletion/insertion protocol at infinite temperature when $g=\pi$, all times involved are large, $n$ is very large, and $m=1$. In this case,  $\Psi_{\text{out}} = Y \Psi_{\text{in}} Y$ with perfect fidelity.

\subsection{Rydberg atom arrays}
One platform where such phenomena could be studied is Rydberg atom arrays. In one implementation~\cite{Bernien_2017}, information is encoded in a pair of levels in $^{87}$Rb, a ground state $|g\rangle$ and a Rydberg state $|r\rangle$, such that the effective Hamiltonian can be written in a spin-chain form as
\begin{equation*}
    H= \sum_i \frac{\Omega_i}{2}X_i + \sum_i \Delta_i \frac{I - Z_i}{2} + \frac{1}{4}\sum_{i<j}V_{ij}(I-Z_i)(I-Z_j),
\end{equation*}
where $Z_i = \proj{g_i} - \proj{r_i}$ and $X_i = \ketbra{g_i}{r_i} + \ketbra{r_i}{g_i}$, $\Omega_i$ and $\Delta_i$ are tunable field parameters, and $V_{ij}$ is the van der Waals interaction between the atoms.

In terms of capabilities listed above, preparation of an infinite-temperature thermofield double state (i.e., Bell pairs) has already been achieved using Rydberg atoms~\cite{RydbergBellPairs}. For finite temperatures, the approximate methods discussed above could also be applied to this setup. One can engineer the requisite backwards time evolution in various ways. One possibility is to work in the blockade regime, in which the effective dynamics takes place in a constrained Hilbert space and is governed just by the fields $\Omega$ and $\Delta$. These parameters can be reversed with echo pulse sequences and so forward and backward evolution is possible. Below we will also discuss a different Floquet scheme. The left/right coupling $V = \frac{1}{n-m} \sum_i Z_i^L Z_i^R$ is also feasible in a Rydberg system, and is already needed to prepare the Bell states. Finally, local addressing is possible and localized readout has been demonstrated~\cite{Bernien_2017}.

\begin{figure}
    \centering
    \includegraphics[width=\textwidth]{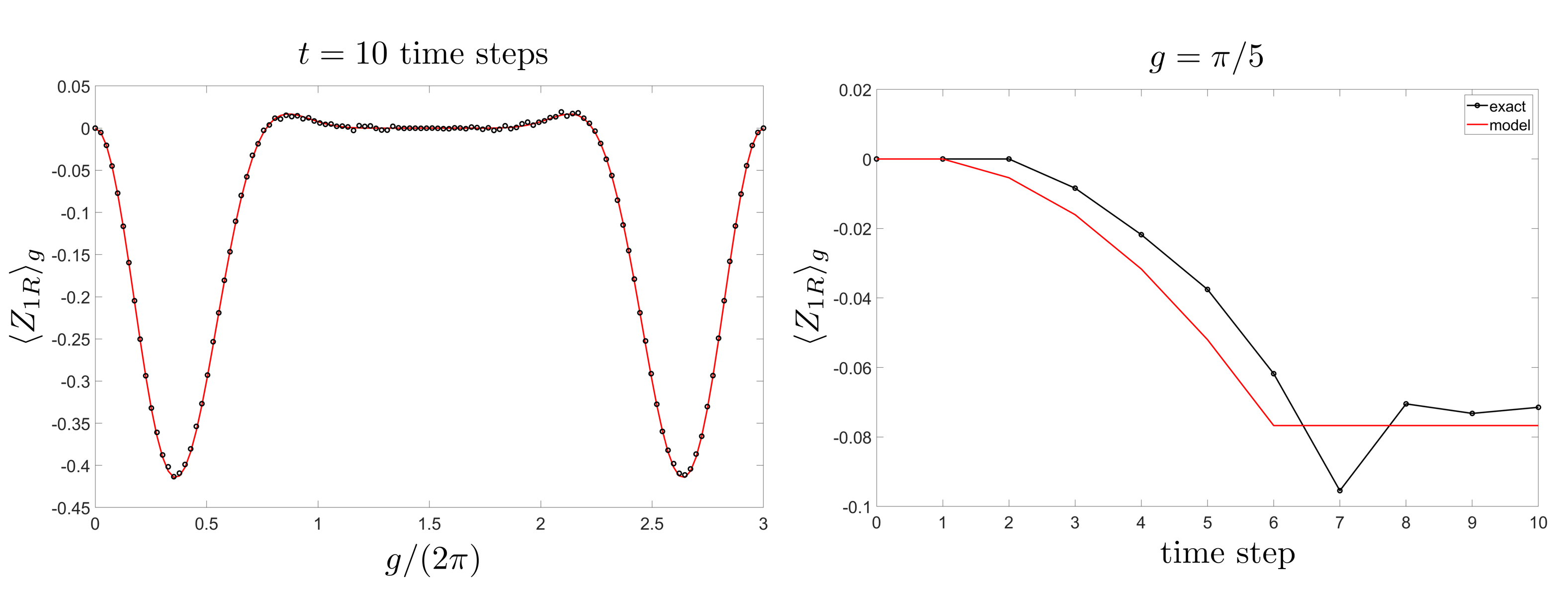}
    \caption{Expectation value of $Z_{1R}$ after injection of $Z_{1L}=1$ state on the left system. Black dots are direct numerical simulation of the protocol in the quantum kicked Ising model with $n=7$ spins on the left and right and with $J=b=\pi/4$ and $h_i$ drawn from a box distribution of width $.5$. \textbf{Left:} Signal at fixed large time as a function of $g$. The black circles are the exact numerical simulation. The red curve is the theory prediction in Eq.~\eqref{eq:zavg_infT}. \textbf{Right:} Signal at fixed $g$ as a function of time step. The black circles are the exact numerical simulation. The red curve is a crude approximation where we assume Eq.~\eqref{eq:zavg_infT} holds at all times with the effective system size replaced as $n \rightarrow \min(t+1,n)$.   }
    \label{fig:prosen_lab}
\end{figure}

One particularly interesting system to consider is a Floquet version of the Rydberg Hamiltonian known as the kicked quantum Ising model. Although experiments here are naturally restricted to infinite temperature because of heating, the driving is interesting because it can enhance chaos and aid in the problem of backward and forward time evolution. Consider, for example, the kicked quantum Ising model of Prosen et al.~\cite{Bertini_2019}, in which the time evolution for one time step is given by
\begin{equation*}
    U = U_K U_I
\end{equation*}
where
\begin{equation*}
    U_K = \exp\left( i b \sum_i X_i \right)
\end{equation*}
and
\begin{equation*}
    U_I = \exp \left( i J \sum_i Z_i Z_{i+1} + i \sum_i h_i Z_i \right).
\end{equation*}
The parameters of the model are $J$, $b$, and the set of local fields $h_i$. Remarkably, if $J=b=\pi/4$ and $h_i$ are drawn uniformly at random from a Gaussian distribution with variance $\sigma$, then the model is in a sense maximally chaotic (albeit not in the out-of-time-order correlator sense). For example, the entanglement entropy of subsystems grows as rapidly as possible when starting from a product state~\cite{Bertini_2019}. We note that a hyperfine encoding for qubits (instead of directly using the Rydberg level) might be useful for this kind of gate-like time dynamics~\cite{RydbergBellPairs}.

This kicked model is particularly appealing because the infinite-temperature thermofield double state is easier to prepare and because it allows easier control over the evolution. This relative ease is due to the fact that the spectra of $\sum_i X_i$ and $\sum_i Z_i Z_{i+1}$ are integer, so that one has, for example, $U_K(b+2\pi) = U_K(b)$. Thus, backward evolution corresponds to $U_K(b)^{-1} = U_K(-b) = U_K(2\pi-b)$, so one can achieve backward time evolution by over-evolving in the forward direction. This covers the transverse field and interaction terms; the longitudinal field terms can be dealt with using a standard echo sequence. One important point is that if the left evolution for one time step is $U$, then the right evolution for one time step must be $U^T = U_I U_K$ (note the reverse ordering of the pulsed terms, which are individually symmetric).

In \cref{fig:prosen_lab}, we show an exact numerical simulation of the experimental protocol for $n=7$ atoms on the left and right. We inject a pure state with eigenvalue $Z_{1L}=1$ into the first qubit on the left. Then, as a more experimentally accessible stand-in for the full fidelity, we show the result of measuring the expectation value of $Z_{1R}$ on the right. The black dots are the exact simulation and the red curves are obtained from our theory calculations. In particular, for a system with $n$ atoms and left-right coupling $g$ at large time, the prediction for the expectation value is
\begin{equation}\label{eq:zavg_infT}
    \langle Z_{1R}\rangle_g =  \left(\cos \frac{g}{n-1} \right)^{n-1} \frac{- \left(\cos \frac{g}{n-1} \right)^{n-1} + \cos g}{2}.
\end{equation}
As can be seen from the left panel of \cref{fig:prosen_lab}, the theory prediction perfectly fits the exact simulation data in the kicked quantum Ising model.

\subsection{Trapped ions}
While the Rydberg atom arrays just discussed have a natural spatial structure to their interactions, it is also quite interesting to consider systems which can support few-body but geometrically non-local interactions. One such system is an ion trap quantum processor, e.g.~\cite{wright2019benchmarking}, a version of which has already been used to study a wormhole-inspired protocol~\cite{Landsman_2019}. By driving vibrational modes of an ionic crystal, one can engineer a rich pattern of all-to-all interactions~\cite{davoudi2019analog}. Such systems are interesting because they mimic the structure of the SYK model and other matrix models that exhibit low-energy dynamics governed by a simple gravitational effective theory. One can again consider analog or digital versions of the platform, and in the digital case all the needed capabilities are present. Particularly interesting is a recent small-scale preparation of approximate thermofield double states on such a digital trapped ion quantum processor~\cite{zhu2019variational}.

\section{Closing remarks}
We have discussed two candidate systems, but many other platforms should be able to realize the physics discussed here. In our companion paper~\cite{longpaper}, we study a wide variety of models, including spin chains, random circuits, random Hamiltonians, and the SYK model, and some of these would be more naturally suited to other platforms, for example, proposals to realized SYK in simulators~\cite{Danshita_2017,Pikulin_2017} or on digital devices~\cite{Garc_a_lvarez_2017,Babbush_2019}.

In closing, let us highlight some of the conceptual and practical issues that will be faced in any experimental effort along the lines we discuss here. On the practical side, one key question is the impact of noise and experimental imperfections on our protocols, especially imperfect time-reversal due to over- or under-evolution and effects of environmental decoherence. Preliminary simulations indicate that the basic physics can still be seen when imperfections are below the 5$\%$ level for modest system size and time, but much more study is needed in the context of particular platforms. This general class of observables does exhibit some forms of resilience~\cite{Swingle_2018}. Another crucial question is how well the thermofield double state must be prepared to see the physics we discuss.

On the conceptual side, we must ask what we ultimately hope to learn about nature from such experiments. We emphasized above that the infinite-temperature large-time example does not correspond to geometrical motion through a semi-classical wormhole. For one thing, only a single qubit can be teleported with high fidelity in the high-temperature limit, but with the right encoding of information many qubits can be sent at low temperature and intermediate time in a holographic system hosting a traversable wormhole. Instead, the infinite-temperature example probes a physical effect common to all chaotic quantum systems with many degrees of freedom, including quantum gravitational systems.

From these considerations, it should be clear that measuring a successful teleportation signal for a single qubit is not enough to guarantee a semi-classical traversable wormhole in the bulk. One needs additional conditions that can be tested within the framework discussed here by varying the time $t$, the coupling $g$, and the way input information is encoded. Hence, while one long-term goal of such experiments is to detect and study wormholes arising holographically in highly entangled systems, there are other goals. More generally, the purpose is to shed light on deep and theoretically challenging questions about nature, including the necessary conditions to have a semi-classical bulk and the effects of quantum and stringy corrections to the semi-classical gravity picture. Thus, we believe the experiments described here are worth the effort to realize the long-term potential for experimental insight into quantum gravity.

\ssection{Acknowledgements}
We thank Patrick Hayden, Bryce Kobrin, Richard Kueng, Misha Lukin, Chris Monroe, Geoff Penington, John Preskill, Xiaoliang Qi, Thomas Schuster, Douglas Stanford, Alexandre Streicher, Zhenbin Yang, and Norman Yao for fruitful discussions.
We also thank Iris Cong, Emil Khabiboulline, Harry Levine, Misha Lukin, Hannes Pichler, and Cris Zanoci for collaboration on related work.
H.G.~is supported by the Simons Foundation through It from Qubit collaboration.
H.L.~is supported by an NDSEG fellowship.
G.S.~is supported by DOE award Quantum Error Correction and Spacetime Geometry DE-SC0018407, the Simons Foundation via It From Qubit, and the IQIM at Caltech (NSF Grant PHY-1733907).
L.S.~is supported by NSF Award Number 1316699.
B.S.~acknowledges support from the Simons Foundation via It From Qubit and from the Department of Energy via the GeoFlow consortium.
M.W.~is supported by an NWO Veni grant no.~680-47-459.
The work of G.S.~was performed before joining Amazon Web Services.

\bibliographystyle{apsrev4-2}
\bibliography{paper1}

\begin{thebibliography}{65}%
\makeatletter
\providecommand \@ifxundefined [1]{%
 \@ifx{#1\undefined}
}%
\providecommand \@ifnum [1]{%
 \ifnum #1\expandafter \@firstoftwo
 \else \expandafter \@secondoftwo
 \fi
}%
\providecommand \@ifx [1]{%
 \ifx #1\expandafter \@firstoftwo
 \else \expandafter \@secondoftwo
 \fi
}%
\providecommand \natexlab [1]{#1}%
\providecommand \enquote  [1]{``#1''}%
\providecommand \bibnamefont  [1]{#1}%
\providecommand \bibfnamefont [1]{#1}%
\providecommand \citenamefont [1]{#1}%
\providecommand \href@noop [0]{\@secondoftwo}%
\providecommand \href [0]{\begingroup \@sanitize@url \@href}%
\providecommand \@href[1]{\@@startlink{#1}\@@href}%
\providecommand \@@href[1]{\endgroup#1\@@endlink}%
\providecommand \@sanitize@url [0]{\catcode `\\12\catcode `\$12\catcode
  `\&12\catcode `\#12\catcode `\^12\catcode `\_12\catcode `\%12\relax}%
\providecommand \@@startlink[1]{}%
\providecommand \@@endlink[0]{}%
\providecommand \url  [0]{\begingroup\@sanitize@url \@url }%
\providecommand \@url [1]{\endgroup\@href {#1}{\urlprefix }}%
\providecommand \urlprefix  [0]{URL }%
\providecommand \Eprint [0]{\href }%
\providecommand \doibase [0]{https://doi.org/}%
\providecommand \selectlanguage [0]{\@gobble}%
\providecommand \bibinfo  [0]{\@secondoftwo}%
\providecommand \bibfield  [0]{\@secondoftwo}%
\providecommand \translation [1]{[#1]}%
\providecommand \BibitemOpen [0]{}%
\providecommand \bibitemStop [0]{}%
\providecommand \bibitemNoStop [0]{.\EOS\space}%
\providecommand \EOS [0]{\spacefactor3000\relax}%
\providecommand \BibitemShut  [1]{\csname bibitem#1\endcsname}%
\let\auto@bib@innerbib\@empty
\bibitem [{\citenamefont {Gao}\ \emph {et~al.}(2017)\citenamefont {Gao},
  \citenamefont {Jafferis},\ and\ \citenamefont {Wall}}]{gao2017traversable}%
  \BibitemOpen
  \bibfield  {author} {\bibinfo {author} {\bibfnamefont {P.}~\bibnamefont
  {Gao}}, \bibinfo {author} {\bibfnamefont {D.~L.}\ \bibnamefont {Jafferis}},\
  and\ \bibinfo {author} {\bibfnamefont {A.~C.}\ \bibnamefont {Wall}},\ }\href
  {https://doi.org/10.1007/JHEP12(2017)151} {\bibfield  {journal} {\bibinfo
  {journal} {J. High Energ. Phys.}\ }\textbf {\bibinfo {volume} {2017}}\bibinfo
   {number} { (12)},\ \bibinfo {pages} {151}}\BibitemShut {NoStop}%
\bibitem [{\citenamefont {Maldacena}\ \emph {et~al.}(2017)\citenamefont
  {Maldacena}, \citenamefont {Stanford},\ and\ \citenamefont
  {Yang}}]{maldacena2017diving}%
  \BibitemOpen
\bibfield  {number} {  }\bibfield  {author} {\bibinfo {author} {\bibfnamefont
  {J.}~\bibnamefont {Maldacena}}, \bibinfo {author} {\bibfnamefont
  {D.}~\bibnamefont {Stanford}},\ and\ \bibinfo {author} {\bibfnamefont
  {Z.}~\bibnamefont {Yang}},\ }\href {https://doi.org/10.1002/prop.201700034}
  {\bibfield  {journal} {\bibinfo  {journal} {Fortschritte der Phys.}\ }\textbf
  {\bibinfo {volume} {65}},\ \bibinfo {pages} {1700034} (\bibinfo {year}
  {2017})}\BibitemShut {NoStop}%
\bibitem [{\citenamefont {Nezami}\ \emph {et~al.}(2021)\citenamefont {Nezami},
  \citenamefont {Lin}, \citenamefont {Brown}, \citenamefont {Gharibyan},
  \citenamefont {Leichenauer}, \citenamefont {Salton}, \citenamefont
  {Susskind}, \citenamefont {Swingle},\ and\ \citenamefont
  {Walter}}]{longpaper}%
  \BibitemOpen
  \bibfield  {author} {\bibinfo {author} {\bibfnamefont {S.}~\bibnamefont
  {Nezami}}, \bibinfo {author} {\bibfnamefont {H.~W.}\ \bibnamefont {Lin}},
  \bibinfo {author} {\bibfnamefont {A.~R.}\ \bibnamefont {Brown}}, \bibinfo
  {author} {\bibfnamefont {H.}~\bibnamefont {Gharibyan}}, \bibinfo {author}
  {\bibfnamefont {S.}~\bibnamefont {Leichenauer}}, \bibinfo {author}
  {\bibfnamefont {G.}~\bibnamefont {Salton}}, \bibinfo {author} {\bibfnamefont
  {L.}~\bibnamefont {Susskind}}, \bibinfo {author} {\bibfnamefont
  {B.}~\bibnamefont {Swingle}},\ and\ \bibinfo {author} {\bibfnamefont
  {M.}~\bibnamefont {Walter}},\ }\href@noop {} {\bibinfo {title} {{Q}uantum
  {G}ravity in the {L}ab: {T}eleportation by {S}ize and {T}raversable
  {W}ormholes, {P}art {II}}} (\bibinfo {year} {2021}),\ \bibinfo {note} {arXiv
  preprint, Jan 2021}\BibitemShut {NoStop}%
\bibitem [{\citenamefont {'t~Hooft}(1993)}]{tHooft:1993dmi}%
  \BibitemOpen
  \bibfield  {author} {\bibinfo {author} {\bibfnamefont {G.}~\bibnamefont
  {'t~Hooft}},\ }\bibfield  {booktitle} {\emph {\bibinfo {booktitle}
  {{Conference on Highlights of Particle and Condensed Matter Physics
  (SALAMFEST) Trieste, Italy, March 8-12, 1993}}},\ }\href@noop {} {\bibfield
  {journal} {\bibinfo  {journal} {Conf. Proc.}\ }\textbf {\bibinfo {volume}
  {C930308}},\ \bibinfo {pages} {284} (\bibinfo {year} {1993})},\ \Eprint
  {https://arxiv.org/abs/gr-qc/9310026} {arXiv:gr-qc/9310026} \BibitemShut
  {NoStop}%
\bibitem [{\citenamefont {Susskind}(1995)}]{Susskind:1994vu}%
  \BibitemOpen
  \bibfield  {author} {\bibinfo {author} {\bibfnamefont {L.}~\bibnamefont
  {Susskind}},\ }\href {https://doi.org/10.1063/1.531249} {\bibfield  {journal}
  {\bibinfo  {journal} {J. Math. Phys.}\ }\textbf {\bibinfo {volume} {36}},\
  \bibinfo {pages} {6377} (\bibinfo {year} {1995})}\BibitemShut {NoStop}%
\bibitem [{\citenamefont {Maldacena}(1999)}]{Maldacena:1997re}%
  \BibitemOpen
  \bibfield  {author} {\bibinfo {author} {\bibfnamefont {J.~M.}\ \bibnamefont
  {Maldacena}},\ }\href {https://doi.org/10.1023/A:1026654312961,
  10.4310/ATMP.1998.v2.n2.a1} {\bibfield  {journal} {\bibinfo  {journal} {Int.
  J. Theor. Phys.}\ }\textbf {\bibinfo {volume} {38}},\ \bibinfo {pages} {1113}
  (\bibinfo {year} {1999})}\BibitemShut {NoStop}%
\bibitem [{\citenamefont {Gubser}\ \emph {et~al.}(1998)\citenamefont {Gubser},
  \citenamefont {Klebanov},\ and\ \citenamefont {Polyakov}}]{Gubser_1998}%
  \BibitemOpen
  \bibfield  {author} {\bibinfo {author} {\bibfnamefont {S.}~\bibnamefont
  {Gubser}}, \bibinfo {author} {\bibfnamefont {I.}~\bibnamefont {Klebanov}},\
  and\ \bibinfo {author} {\bibfnamefont {A.}~\bibnamefont {Polyakov}},\ }\href
  {https://doi.org/10.1016/s0370-2693(98)00377-3} {\bibfield  {journal}
  {\bibinfo  {journal} {Phys. Lett. B}\ }\textbf {\bibinfo {volume} {428}},\
  \bibinfo {pages} {105} (\bibinfo {year} {1998})}\BibitemShut {NoStop}%
\bibitem [{\citenamefont {Witten}(1998)}]{witten1998anti}%
  \BibitemOpen
  \bibfield  {author} {\bibinfo {author} {\bibfnamefont {E.}~\bibnamefont
  {Witten}},\ }\href {https://doi.org/10.4310/ATMP.1998.v2.n2.a2} {\bibfield
  {journal} {\bibinfo  {journal} {Adv. Theor. Math. Phys.}\ }\textbf {\bibinfo
  {volume} {2}},\ \bibinfo {pages} {253} (\bibinfo {year} {1998})}\BibitemShut
  {NoStop}%
\bibitem [{\citenamefont {Banks}\ \emph {et~al.}(1997)\citenamefont {Banks},
  \citenamefont {Fischler}, \citenamefont {Shenker},\ and\ \citenamefont
  {Susskind}}]{Banks:1996vh}%
  \BibitemOpen
  \bibfield  {author} {\bibinfo {author} {\bibfnamefont {T.}~\bibnamefont
  {Banks}}, \bibinfo {author} {\bibfnamefont {W.}~\bibnamefont {Fischler}},
  \bibinfo {author} {\bibfnamefont {S.~H.}\ \bibnamefont {Shenker}},\ and\
  \bibinfo {author} {\bibfnamefont {L.}~\bibnamefont {Susskind}},\ }\href
  {https://doi.org/10.1103/PhysRevD.55.5112} {\bibfield  {journal} {\bibinfo
  {journal} {Phys. Rev.}\ }\textbf {\bibinfo {volume} {D55}},\ \bibinfo {pages}
  {5112} (\bibinfo {year} {1997})}\BibitemShut {NoStop}%
\bibitem [{\citenamefont {Sekino}\ and\ \citenamefont
  {Susskind}(2008)}]{Sekino_2008}%
  \BibitemOpen
  \bibfield  {author} {\bibinfo {author} {\bibfnamefont {Y.}~\bibnamefont
  {Sekino}}\ and\ \bibinfo {author} {\bibfnamefont {L.}~\bibnamefont
  {Susskind}},\ }\href {https://doi.org/10.1088/1126-6708/2008/10/065}
  {\bibfield  {journal} {\bibinfo  {journal} {J. High Energ. Phys.}\ }\textbf
  {\bibinfo {volume} {2008}}\bibinfo  {number} { (10)},\ \bibinfo {pages}
  {065}}\BibitemShut {NoStop}%
\bibitem [{\citenamefont {Hayden}\ and\ \citenamefont
  {Preskill}(2007)}]{Hayden:2007cs}%
  \BibitemOpen
\bibfield  {number} {  }\bibfield  {author} {\bibinfo {author} {\bibfnamefont
  {P.}~\bibnamefont {Hayden}}\ and\ \bibinfo {author} {\bibfnamefont
  {J.}~\bibnamefont {Preskill}},\ }\href
  {https://doi.org/10.1088/1126-6708/2007/09/120} {\bibfield  {journal}
  {\bibinfo  {journal} {J. High Energ. Phys.}\ }\textbf {\bibinfo {volume}
  {2007}}\bibinfo  {number} { (09)},\ \bibinfo {pages} {120}}\BibitemShut
  {NoStop}%
\bibitem [{\citenamefont {{Bennett}}\ \emph {et~al.}(1993)\citenamefont
  {{Bennett}}, \citenamefont {{Brassard}}, \citenamefont {{Crepeau}},
  \citenamefont {{Jozsa}}, \citenamefont {{Peres}},\ and\ \citenamefont
  {{Wootters}}}]{QTeleportation}%
  \BibitemOpen
\bibfield  {number} {  }\bibfield  {author} {\bibinfo {author} {\bibfnamefont
  {C.~H.}\ \bibnamefont {{Bennett}}}, \bibinfo {author} {\bibfnamefont
  {G.}~\bibnamefont {{Brassard}}}, \bibinfo {author} {\bibfnamefont
  {C.}~\bibnamefont {{Crepeau}}}, \bibinfo {author} {\bibfnamefont
  {R.}~\bibnamefont {{Jozsa}}}, \bibinfo {author} {\bibfnamefont
  {A.}~\bibnamefont {{Peres}}},\ and\ \bibinfo {author} {\bibfnamefont {W.~K.}\
  \bibnamefont {{Wootters}}},\ }\href
  {https://doi.org/10.1103/PhysRevLett.70.1895} {\bibfield  {journal} {\bibinfo
   {journal} {\prl}\ }\textbf {\bibinfo {volume} {70}},\ \bibinfo {pages}
  {1895} (\bibinfo {year} {1993})}\BibitemShut {NoStop}%
\bibitem [{\citenamefont {Freivogel}\ \emph {et~al.}(2020)\citenamefont
  {Freivogel}, \citenamefont {Galante}, \citenamefont {Nikolakopoulou},\ and\
  \citenamefont {Rotundo}}]{freivogel2019traversable}%
  \BibitemOpen
  \bibfield  {author} {\bibinfo {author} {\bibfnamefont {B.}~\bibnamefont
  {Freivogel}}, \bibinfo {author} {\bibfnamefont {D.~A.}\ \bibnamefont
  {Galante}}, \bibinfo {author} {\bibfnamefont {D.}~\bibnamefont
  {Nikolakopoulou}},\ and\ \bibinfo {author} {\bibfnamefont {A.}~\bibnamefont
  {Rotundo}},\ }\href {https://doi.org/10.1007/JHEP01(2020)050} {\bibfield
  {journal} {\bibinfo  {journal} {J. High Energ. Phys.}\ }\textbf {\bibinfo
  {volume} {2020}}\bibinfo  {number} { (1)},\ \bibinfo {pages} {1}}\BibitemShut
  {NoStop}%
\bibitem [{\citenamefont {Bak}\ \emph {et~al.}(2019)\citenamefont {Bak},
  \citenamefont {Kim},\ and\ \citenamefont {Yi}}]{bak2019experimental}%
  \BibitemOpen
\bibfield  {number} {  }\bibfield  {author} {\bibinfo {author} {\bibfnamefont
  {D.}~\bibnamefont {Bak}}, \bibinfo {author} {\bibfnamefont {C.}~\bibnamefont
  {Kim}},\ and\ \bibinfo {author} {\bibfnamefont {S.-H.}\ \bibnamefont {Yi}},\
  }\href {https://doi.org/10.1007/JHEP12(2019)005} {\bibfield  {journal}
  {\bibinfo  {journal} {J. High Energ. Phys.}\ }\textbf {\bibinfo {volume}
  {2019}}\bibinfo  {number} { (12)},\ \bibinfo {pages} {1}}\BibitemShut
  {NoStop}%
\bibitem [{\citenamefont {Bao}\ \emph {et~al.}(2018)\citenamefont {Bao},
  \citenamefont {Chatwin-Davies}, \citenamefont {Pollack},\ and\ \citenamefont
  {Remmen}}]{Bao_2018}%
  \BibitemOpen
\bibfield  {number} {  }\bibfield  {author} {\bibinfo {author} {\bibfnamefont
  {N.}~\bibnamefont {Bao}}, \bibinfo {author} {\bibfnamefont {A.}~\bibnamefont
  {Chatwin-Davies}}, \bibinfo {author} {\bibfnamefont {J.}~\bibnamefont
  {Pollack}},\ and\ \bibinfo {author} {\bibfnamefont {G.~N.}\ \bibnamefont
  {Remmen}},\ }\href {https://doi.org/10.1007/jhep11(2018)071} {\bibfield
  {journal} {\bibinfo  {journal} {J. High Energ. Phys.}\ }\textbf {\bibinfo
  {volume} {2018}}\bibinfo  {number} { (11)}}\BibitemShut {NoStop}%
\bibitem [{\citenamefont {Bao}\ \emph {et~al.}(2019)\citenamefont {Bao},
  \citenamefont {Su},\ and\ \citenamefont {Usatyuk}}]{bao2019wormhole}%
  \BibitemOpen
\bibfield  {number} {  }\bibfield  {author} {\bibinfo {author} {\bibfnamefont
  {N.}~\bibnamefont {Bao}}, \bibinfo {author} {\bibfnamefont {V.~P.}\
  \bibnamefont {Su}},\ and\ \bibinfo {author} {\bibfnamefont {M.}~\bibnamefont
  {Usatyuk}},\ }\href@noop {} {\bibinfo {title} {Wormhole traversability via
  quantum random walks}} (\bibinfo {year} {2019}),\ \bibinfo {note} {arXiv
  preprint},\ \Eprint {https://arxiv.org/abs/1906.01672} {arXiv:1906.01672}
  \BibitemShut {NoStop}%
\bibitem [{\citenamefont {Gao}\ and\ \citenamefont
  {Jafferis}(2019)}]{gao2019traversable}%
  \BibitemOpen
  \bibfield  {author} {\bibinfo {author} {\bibfnamefont {P.}~\bibnamefont
  {Gao}}\ and\ \bibinfo {author} {\bibfnamefont {D.~L.}\ \bibnamefont
  {Jafferis}},\ }\href@noop {} {\bibinfo {title} {A traversable wormhole
  teleportation protocol in the {SYK} model}} (\bibinfo {year} {2019}),\
  \bibinfo {note} {arXiv preprint},\ \Eprint {https://arxiv.org/abs/1911.07416}
  {arXiv:1911.07416} \BibitemShut {NoStop}%
\bibitem [{\citenamefont {Landsman}\ \emph {et~al.}(2019)\citenamefont
  {Landsman}, \citenamefont {Figgatt}, \citenamefont {Schuster}, \citenamefont
  {Linke}, \citenamefont {Yoshida}, \citenamefont {Yao},\ and\ \citenamefont
  {Monroe}}]{Landsman_2019}%
  \BibitemOpen
  \bibfield  {author} {\bibinfo {author} {\bibfnamefont {K.~A.}\ \bibnamefont
  {Landsman}}, \bibinfo {author} {\bibfnamefont {C.}~\bibnamefont {Figgatt}},
  \bibinfo {author} {\bibfnamefont {T.}~\bibnamefont {Schuster}}, \bibinfo
  {author} {\bibfnamefont {N.~M.}\ \bibnamefont {Linke}}, \bibinfo {author}
  {\bibfnamefont {B.}~\bibnamefont {Yoshida}}, \bibinfo {author} {\bibfnamefont
  {N.~Y.}\ \bibnamefont {Yao}},\ and\ \bibinfo {author} {\bibfnamefont
  {C.}~\bibnamefont {Monroe}},\ }\href
  {https://doi.org/10.1038/s41586-019-0952-6} {\bibfield  {journal} {\bibinfo
  {journal} {Nature}\ }\textbf {\bibinfo {volume} {567}},\ \bibinfo {pages}
  {61} (\bibinfo {year} {2019})}\BibitemShut {NoStop}%
\bibitem [{\citenamefont {Yoshida}\ and\ \citenamefont
  {Kitaev}(2017)}]{Yoshida2017}%
  \BibitemOpen
  \bibfield  {author} {\bibinfo {author} {\bibfnamefont {B.}~\bibnamefont
  {Yoshida}}\ and\ \bibinfo {author} {\bibfnamefont {A.}~\bibnamefont
  {Kitaev}},\ }\href@noop {} {\bibinfo {title} {Efficient decoding for the
  hayden-preskill protocol}} (\bibinfo {year} {2017}),\ \bibinfo {note} {arXiv
  preprint},\ \Eprint {https://arxiv.org/abs/1710.03363} {arXiv:1710.03363}
  \BibitemShut {NoStop}%
\bibitem [{\citenamefont {Yoshida}\ and\ \citenamefont
  {Yao}(2019)}]{Yoshida2018}%
  \BibitemOpen
  \bibfield  {author} {\bibinfo {author} {\bibfnamefont {B.}~\bibnamefont
  {Yoshida}}\ and\ \bibinfo {author} {\bibfnamefont {N.~Y.}\ \bibnamefont
  {Yao}},\ }\href {https://arxiv.org/pdf/1803.10772.pdf
  http://arxiv.org/abs/1803.10772
  https://link.aps.org/doi/10.1103/PhysRevX.9.011006} {\bibfield  {journal}
  {\bibinfo  {journal} {Phys. Rev. X}\ }\textbf {\bibinfo {volume} {9}},\
  \bibinfo {pages} {011006} (\bibinfo {year} {2019})}\BibitemShut {NoStop}%
\bibitem [{\citenamefont {Susskind}(2019)}]{susskind2019complexity}%
  \BibitemOpen
  \bibfield  {author} {\bibinfo {author} {\bibfnamefont {L.}~\bibnamefont
  {Susskind}},\ }\href@noop {} {\bibinfo {title} {{Complexity and Newton's
  Laws}}} (\bibinfo {year} {2019}),\ \bibinfo {note} {arXiv preprint},\ \Eprint
  {https://arxiv.org/abs/1904.12819} {arXiv:1904.12819} \BibitemShut {NoStop}%
\bibitem [{\citenamefont {Lin}\ \emph {et~al.}(2019)\citenamefont {Lin},
  \citenamefont {Maldacena},\ and\ \citenamefont {Zhao}}]{lin2019symmetries}%
  \BibitemOpen
  \bibfield  {author} {\bibinfo {author} {\bibfnamefont {H.~W.}\ \bibnamefont
  {Lin}}, \bibinfo {author} {\bibfnamefont {J.}~\bibnamefont {Maldacena}},\
  and\ \bibinfo {author} {\bibfnamefont {Y.}~\bibnamefont {Zhao}},\ }\href
  {https://doi.org/10.1007/JHEP08(2019)049} {\bibfield  {journal} {\bibinfo
  {journal} {J. High Energ. Phys.}\ }\textbf {\bibinfo {volume} {2019}}\bibinfo
   {number} { (8)},\ \bibinfo {pages} {1}}\BibitemShut {NoStop}%
\bibitem [{\citenamefont {Qi}\ and\ \citenamefont
  {Streicher}(2019)}]{qi2019quantum}%
  \BibitemOpen
\bibfield  {number} {  }\bibfield  {author} {\bibinfo {author} {\bibfnamefont
  {X.-L.}\ \bibnamefont {Qi}}\ and\ \bibinfo {author} {\bibfnamefont
  {A.}~\bibnamefont {Streicher}},\ }\href
  {https://doi.org/10.1007/JHEP08(2019)012} {\bibfield  {journal} {\bibinfo
  {journal} {J. High Energ. Phys.}\ }\textbf {\bibinfo {volume} {2019}}\bibinfo
   {number} { (8)},\ \bibinfo {pages} {12}}\BibitemShut {NoStop}%
\bibitem [{\citenamefont {Nahum}\ \emph {et~al.}(2018)\citenamefont {Nahum},
  \citenamefont {Vijay},\ and\ \citenamefont {Haah}}]{Nahum_2018}%
  \BibitemOpen
\bibfield  {number} {  }\bibfield  {author} {\bibinfo {author} {\bibfnamefont
  {A.}~\bibnamefont {Nahum}}, \bibinfo {author} {\bibfnamefont
  {S.}~\bibnamefont {Vijay}},\ and\ \bibinfo {author} {\bibfnamefont
  {J.}~\bibnamefont {Haah}},\ }\href
  {https://doi.org/10.1103/physrevx.8.021014} {\bibfield  {journal} {\bibinfo
  {journal} {Phys. Rev. X}\ }\textbf {\bibinfo {volume} {8}},\ \bibinfo {pages}
  {021014} (\bibinfo {year} {2018})}\BibitemShut {NoStop}%
\bibitem [{\citenamefont {Von~Keyserlingk}\ \emph {et~al.}(2018)\citenamefont
  {Von~Keyserlingk}, \citenamefont {Rakovszky}, \citenamefont {Pollmann},\ and\
  \citenamefont {Sondhi}}]{von_Keyserlingk_2018}%
  \BibitemOpen
  \bibfield  {author} {\bibinfo {author} {\bibfnamefont {C.}~\bibnamefont
  {Von~Keyserlingk}}, \bibinfo {author} {\bibfnamefont {T.}~\bibnamefont
  {Rakovszky}}, \bibinfo {author} {\bibfnamefont {F.}~\bibnamefont
  {Pollmann}},\ and\ \bibinfo {author} {\bibfnamefont {S.~L.}\ \bibnamefont
  {Sondhi}},\ }\href {https://doi.org/10.1103/physrevx.8.021013} {\bibfield
  {journal} {\bibinfo  {journal} {Phys. Rev. X}\ }\textbf {\bibinfo {volume}
  {8}},\ \bibinfo {pages} {021013} (\bibinfo {year} {2018})}\BibitemShut
  {NoStop}%
\bibitem [{\citenamefont {Xu}\ and\ \citenamefont {Swingle}(2019)}]{Xu_2019}%
  \BibitemOpen
  \bibfield  {author} {\bibinfo {author} {\bibfnamefont {S.}~\bibnamefont
  {Xu}}\ and\ \bibinfo {author} {\bibfnamefont {B.}~\bibnamefont {Swingle}},\
  }\href {https://doi.org/10.1103/physrevx.9.031048} {\bibfield  {journal}
  {\bibinfo  {journal} {Phys. Rev. X}\ }\textbf {\bibinfo {volume} {9}},\
  \bibinfo {pages} {031048} (\bibinfo {year} {2019})}\BibitemShut {NoStop}%
\bibitem [{\citenamefont {Schumacher}(1996)}]{schumacher1996sending}%
  \BibitemOpen
  \bibfield  {author} {\bibinfo {author} {\bibfnamefont {B.}~\bibnamefont
  {Schumacher}},\ }\href {https://doi.org/10.1103/PhysRevA.54.2614} {\bibfield
  {journal} {\bibinfo  {journal} {Phys. Rev. A}\ }\textbf {\bibinfo {volume}
  {54}},\ \bibinfo {pages} {2614} (\bibinfo {year} {1996})}\BibitemShut
  {NoStop}%
\bibitem [{\citenamefont {Roberts}\ \emph {et~al.}(2018)\citenamefont
  {Roberts}, \citenamefont {Stanford},\ and\ \citenamefont
  {Streicher}}]{Roberts:2018mnp}%
  \BibitemOpen
  \bibfield  {author} {\bibinfo {author} {\bibfnamefont {D.~A.}\ \bibnamefont
  {Roberts}}, \bibinfo {author} {\bibfnamefont {D.}~\bibnamefont {Stanford}},\
  and\ \bibinfo {author} {\bibfnamefont {A.}~\bibnamefont {Streicher}},\ }\href
  {https://doi.org/10.1007/JHEP06(2018)122} {\bibfield  {journal} {\bibinfo
  {journal} {J. High Energ. Phys.}\ }\textbf {\bibinfo {volume} {2018}}\bibinfo
   {number} { (6)},\ \bibinfo {pages} {122}}\BibitemShut {NoStop}%
\bibitem [{\citenamefont {Brown}\ \emph {et~al.}(2018)\citenamefont {Brown},
  \citenamefont {Gharibyan}, \citenamefont {Streicher}, \citenamefont
  {Susskind}, \citenamefont {Thorlacius},\ and\ \citenamefont
  {Zhao}}]{Brown:2018kvn}%
  \BibitemOpen
\bibfield  {number} {  }\bibfield  {author} {\bibinfo {author} {\bibfnamefont
  {A.~R.}\ \bibnamefont {Brown}}, \bibinfo {author} {\bibfnamefont
  {H.}~\bibnamefont {Gharibyan}}, \bibinfo {author} {\bibfnamefont
  {A.}~\bibnamefont {Streicher}}, \bibinfo {author} {\bibfnamefont
  {L.}~\bibnamefont {Susskind}}, \bibinfo {author} {\bibfnamefont
  {L.}~\bibnamefont {Thorlacius}},\ and\ \bibinfo {author} {\bibfnamefont
  {Y.}~\bibnamefont {Zhao}},\ }\href
  {https://doi.org/10.1103/PhysRevD.98.126016} {\bibfield  {journal} {\bibinfo
  {journal} {Phys. Rev.}\ }\textbf {\bibinfo {volume} {D98}},\ \bibinfo {pages}
  {126016} (\bibinfo {year} {2018})}\BibitemShut {NoStop}%
\bibitem [{\citenamefont {Maldacena}\ and\ \citenamefont
  {Qi}(2018)}]{maldacena2018eternal}%
  \BibitemOpen
  \bibfield  {author} {\bibinfo {author} {\bibfnamefont {J.}~\bibnamefont
  {Maldacena}}\ and\ \bibinfo {author} {\bibfnamefont {X.-L.}\ \bibnamefont
  {Qi}},\ }\href@noop {} {\bibinfo {title} {Eternal traversable wormhole}}
  (\bibinfo {year} {2018}),\ \bibinfo {note} {arXiv preprint},\ \Eprint
  {https://arxiv.org/abs/1804.00491} {arXiv:1804.00491} \BibitemShut {NoStop}%
\bibitem [{\citenamefont {Gao}\ and\ \citenamefont
  {Liu}(2019)}]{gao2018regenesis}%
  \BibitemOpen
  \bibfield  {author} {\bibinfo {author} {\bibfnamefont {P.}~\bibnamefont
  {Gao}}\ and\ \bibinfo {author} {\bibfnamefont {H.}~\bibnamefont {Liu}},\
  }\href {https://doi.org/10.1007/JHEP10(2019)048} {\bibfield  {journal}
  {\bibinfo  {journal} {J. High Energ. Phys.}\ }\textbf {\bibinfo {volume}
  {2019}}\bibinfo  {number} { (10)},\ \bibinfo {pages} {1}}\BibitemShut
  {NoStop}%
\bibitem [{\citenamefont {Sachdev}\ and\ \citenamefont
  {Ye}(1993)}]{Sachdev1993Gapless}%
  \BibitemOpen
\bibfield  {number} {  }\bibfield  {author} {\bibinfo {author} {\bibfnamefont
  {S.}~\bibnamefont {Sachdev}}\ and\ \bibinfo {author} {\bibfnamefont
  {J.}~\bibnamefont {Ye}},\ }\href
  {https://doi.org/10.1103/PhysRevLett.70.3339} {\bibfield  {journal} {\bibinfo
   {journal} {Phys. Rev. Lett.}\ }\textbf {\bibinfo {volume} {70}},\ \bibinfo
  {pages} {3339} (\bibinfo {year} {1993})}\BibitemShut {NoStop}%
\bibitem [{\citenamefont {Georges}\ \emph {et~al.}(2000)\citenamefont
  {Georges}, \citenamefont {Parcollet},\ and\ \citenamefont
  {Sachdev}}]{Georges2000Mean}%
  \BibitemOpen
  \bibfield  {author} {\bibinfo {author} {\bibfnamefont {A.}~\bibnamefont
  {Georges}}, \bibinfo {author} {\bibfnamefont {O.}~\bibnamefont {Parcollet}},\
  and\ \bibinfo {author} {\bibfnamefont {S.}~\bibnamefont {Sachdev}},\ }\href
  {https://doi.org/10.1103/PhysRevLett.85.840} {\bibfield  {journal} {\bibinfo
  {journal} {Phys. Rev. Lett.}\ }\textbf {\bibinfo {volume} {85}},\ \bibinfo
  {pages} {840} (\bibinfo {year} {2000})}\BibitemShut {NoStop}%
\bibitem [{\citenamefont {Georges}\ \emph {et~al.}(2001)\citenamefont
  {Georges}, \citenamefont {Parcollet},\ and\ \citenamefont
  {Sachdev}}]{Georges2001Quantum}%
  \BibitemOpen
  \bibfield  {author} {\bibinfo {author} {\bibfnamefont {A.}~\bibnamefont
  {Georges}}, \bibinfo {author} {\bibfnamefont {O.}~\bibnamefont {Parcollet}},\
  and\ \bibinfo {author} {\bibfnamefont {S.}~\bibnamefont {Sachdev}},\ }\href
  {https://doi.org/10.1103/PhysRevB.63.134406} {\bibfield  {journal} {\bibinfo
  {journal} {Phys. Rev. B}\ }\textbf {\bibinfo {volume} {63}},\ \bibinfo
  {pages} {134406} (\bibinfo {year} {2001})}\BibitemShut {NoStop}%
\bibitem [{\citenamefont {Kitaev}(2015)}]{kitaev}%
  \BibitemOpen
  \bibfield  {author} {\bibinfo {author} {\bibfnamefont {A.}~\bibnamefont
  {Kitaev}},\ }\href@noop {} {\bibinfo {title} {A simple model of quantum
  holography}} (\bibinfo {year} {2015}),\ \bibinfo {note} {{T}alks at KITP,
  April 7, 2015 and May 27, 2015}\BibitemShut {NoStop}%
\bibitem [{\citenamefont {Polchinski}\ and\ \citenamefont
  {Rosenhaus}(2016)}]{Polchinski_2016}%
  \BibitemOpen
  \bibfield  {author} {\bibinfo {author} {\bibfnamefont {J.}~\bibnamefont
  {Polchinski}}\ and\ \bibinfo {author} {\bibfnamefont {V.}~\bibnamefont
  {Rosenhaus}},\ }\href {https://doi.org/10.1007/jhep04(2016)001} {\bibfield
  {journal} {\bibinfo  {journal} {J. High Energ. Phys.}\ }\textbf {\bibinfo
  {volume} {2016}}\bibinfo  {number} { (4)},\ \bibinfo {pages} {1}}\BibitemShut
  {NoStop}%
\bibitem [{\citenamefont {Maldacena}\ and\ \citenamefont
  {Stanford}(2016)}]{ms}%
  \BibitemOpen
\bibfield  {number} {  }\bibfield  {author} {\bibinfo {author} {\bibfnamefont
  {J.}~\bibnamefont {Maldacena}}\ and\ \bibinfo {author} {\bibfnamefont
  {D.}~\bibnamefont {Stanford}},\ }\href
  {https://doi.org/10.1103/PhysRevD.94.106002} {\bibfield  {journal} {\bibinfo
  {journal} {Phys. Rev.}\ }\textbf {\bibinfo {volume} {D94}},\ \bibinfo {pages}
  {106002} (\bibinfo {year} {2016})}\BibitemShut {NoStop}%
\bibitem [{\citenamefont {Bagrets}\ \emph {et~al.}(2017)\citenamefont
  {Bagrets}, \citenamefont {Altland},\ and\ \citenamefont {Kamenev}}]{kamenev}%
  \BibitemOpen
  \bibfield  {author} {\bibinfo {author} {\bibfnamefont {D.}~\bibnamefont
  {Bagrets}}, \bibinfo {author} {\bibfnamefont {A.}~\bibnamefont {Altland}},\
  and\ \bibinfo {author} {\bibfnamefont {A.}~\bibnamefont {Kamenev}},\ }\href
  {https://doi.org/10.1016/j.nuclphysb.2017.06.012} {\bibfield  {journal}
  {\bibinfo  {journal} {Nucl. Phys.}\ }\textbf {\bibinfo {volume} {B921}},\
  \bibinfo {pages} {727} (\bibinfo {year} {2017})}\BibitemShut {NoStop}%
\bibitem [{\citenamefont {Garc{\'\i}a-Garc{\'\i}a}\ and\ \citenamefont
  {Verbaarschot}(2016)}]{Garc_a_Garc_a_2016}%
  \BibitemOpen
  \bibfield  {author} {\bibinfo {author} {\bibfnamefont {A.~M.}\ \bibnamefont
  {Garc{\'\i}a-Garc{\'\i}a}}\ and\ \bibinfo {author} {\bibfnamefont {J.~J.}\
  \bibnamefont {Verbaarschot}},\ }\href
  {https://doi.org/10.1103/physrevd.94.126010} {\bibfield  {journal} {\bibinfo
  {journal} {Phys. Rev. D}\ }\textbf {\bibinfo {volume} {94}},\ \bibinfo
  {pages} {126010} (\bibinfo {year} {2016})}\BibitemShut {NoStop}%
\bibitem [{\citenamefont {Martyn}\ and\ \citenamefont
  {Swingle}(2019)}]{Martyn_2019}%
  \BibitemOpen
  \bibfield  {author} {\bibinfo {author} {\bibfnamefont {J.}~\bibnamefont
  {Martyn}}\ and\ \bibinfo {author} {\bibfnamefont {B.}~\bibnamefont
  {Swingle}},\ }\href {https://doi.org/10.1103/physreva.100.032107} {\bibfield
  {journal} {\bibinfo  {journal} {Phys. Rev. A}\ }\textbf {\bibinfo {volume}
  {100}},\ \bibinfo {pages} {032107} (\bibinfo {year} {2019})}\BibitemShut
  {NoStop}%
\bibitem [{\citenamefont {Cottrell}\ \emph {et~al.}(2019)\citenamefont
  {Cottrell}, \citenamefont {Freivogel}, \citenamefont {Hofman},\ and\
  \citenamefont {Lokhande}}]{Cottrell_2019}%
  \BibitemOpen
  \bibfield  {author} {\bibinfo {author} {\bibfnamefont {W.}~\bibnamefont
  {Cottrell}}, \bibinfo {author} {\bibfnamefont {B.}~\bibnamefont {Freivogel}},
  \bibinfo {author} {\bibfnamefont {D.~M.}\ \bibnamefont {Hofman}},\ and\
  \bibinfo {author} {\bibfnamefont {S.~F.}\ \bibnamefont {Lokhande}},\ }\href
  {https://doi.org/10.1007/jhep02(2019)058} {\bibfield  {journal} {\bibinfo
  {journal} {J. High Energ. Phys.}\ }\textbf {\bibinfo {volume} {2019}}\bibinfo
   {number} { (2)}}\BibitemShut {NoStop}%
\bibitem [{\citenamefont {Wu}\ and\ \citenamefont
  {Hsieh}(2019)}]{wu2018variational}%
  \BibitemOpen
\bibfield  {number} {  }\bibfield  {author} {\bibinfo {author} {\bibfnamefont
  {J.}~\bibnamefont {Wu}}\ and\ \bibinfo {author} {\bibfnamefont {T.~H.}\
  \bibnamefont {Hsieh}},\ }\href
  {https://doi.org/10.1103/PhysRevLett.123.220502} {\bibfield  {journal}
  {\bibinfo  {journal} {Phys. Rev. Lett.}\ }\textbf {\bibinfo {volume} {123}},\
  \bibinfo {pages} {220502} (\bibinfo {year} {2019})}\BibitemShut {NoStop}%
\bibitem [{\citenamefont {Swingle}\ \emph {et~al.}(2016)\citenamefont
  {Swingle}, \citenamefont {Bentsen}, \citenamefont {Schleier-Smith},\ and\
  \citenamefont {Hayden}}]{Swingle:2016var}%
  \BibitemOpen
  \bibfield  {author} {\bibinfo {author} {\bibfnamefont {B.}~\bibnamefont
  {Swingle}}, \bibinfo {author} {\bibfnamefont {G.}~\bibnamefont {Bentsen}},
  \bibinfo {author} {\bibfnamefont {M.}~\bibnamefont {Schleier-Smith}},\ and\
  \bibinfo {author} {\bibfnamefont {P.}~\bibnamefont {Hayden}},\ }\href
  {https://doi.org/10.1103/PhysRevA.94.040302} {\bibfield  {journal} {\bibinfo
  {journal} {Phys. Rev.}\ }\textbf {\bibinfo {volume} {A94}},\ \bibinfo {pages}
  {040302} (\bibinfo {year} {2016})}\BibitemShut {NoStop}%
\bibitem [{\citenamefont {Yao}\ \emph {et~al.}(2016)\citenamefont {Yao},
  \citenamefont {Grusdt}, \citenamefont {Swingle}, \citenamefont {Lukin},
  \citenamefont {Stamper-Kurn}, \citenamefont {Moore},\ and\ \citenamefont
  {Demler}}]{Yao2016a}%
  \BibitemOpen
  \bibfield  {author} {\bibinfo {author} {\bibfnamefont {N.~Y.}\ \bibnamefont
  {Yao}}, \bibinfo {author} {\bibfnamefont {F.}~\bibnamefont {Grusdt}},
  \bibinfo {author} {\bibfnamefont {B.}~\bibnamefont {Swingle}}, \bibinfo
  {author} {\bibfnamefont {M.~D.}\ \bibnamefont {Lukin}}, \bibinfo {author}
  {\bibfnamefont {D.~M.}\ \bibnamefont {Stamper-Kurn}}, \bibinfo {author}
  {\bibfnamefont {J.~E.}\ \bibnamefont {Moore}},\ and\ \bibinfo {author}
  {\bibfnamefont {E.~A.}\ \bibnamefont {Demler}},\ }\href@noop {} {\bibinfo
  {title} {Interferometric approach to probing fast scrambling}} (\bibinfo
  {year} {2016}),\ \bibinfo {note} {arXiv preprint},\ \Eprint
  {https://arxiv.org/abs/1607.01801} {arXiv:1607.01801} \BibitemShut {NoStop}%
\bibitem [{\citenamefont {Zhu}\ \emph {et~al.}(2016)\citenamefont {Zhu},
  \citenamefont {Hafezi},\ and\ \citenamefont {Grover}}]{Zhu2016}%
  \BibitemOpen
  \bibfield  {author} {\bibinfo {author} {\bibfnamefont {G.}~\bibnamefont
  {Zhu}}, \bibinfo {author} {\bibfnamefont {M.}~\bibnamefont {Hafezi}},\ and\
  \bibinfo {author} {\bibfnamefont {T.}~\bibnamefont {Grover}},\ }\href
  {https://link.aps.org/doi/10.1103/PhysRevA.94.062329} {\bibfield  {journal}
  {\bibinfo  {journal} {Phys. Rev. A}\ }\textbf {\bibinfo {volume} {94}},\
  \bibinfo {pages} {062329} (\bibinfo {year} {2016})}\BibitemShut {NoStop}%
\bibitem [{\citenamefont {Campisi}\ and\ \citenamefont
  {Goold}(2017)}]{Campisi2017}%
  \BibitemOpen
  \bibfield  {author} {\bibinfo {author} {\bibfnamefont {M.}~\bibnamefont
  {Campisi}}\ and\ \bibinfo {author} {\bibfnamefont {J.}~\bibnamefont
  {Goold}},\ }\href {http://arxiv.org/abs/1609.05848
  http://dx.doi.org/10.1103/PhysRevE.95.062127
  http://link.aps.org/doi/10.1103/PhysRevE.95.062127} {\bibfield  {journal}
  {\bibinfo  {journal} {Phys. Rev. E}\ }\textbf {\bibinfo {volume} {95}},\
  \bibinfo {pages} {062127} (\bibinfo {year} {2017})}\BibitemShut {NoStop}%
\bibitem [{\citenamefont {{Yunger Halpern}}(2017)}]{Halpern2016}%
  \BibitemOpen
  \bibfield  {author} {\bibinfo {author} {\bibfnamefont {N.}~\bibnamefont
  {{Yunger Halpern}}},\ }\href {http://arxiv.org/abs/1609.00015
  http://dx.doi.org/10.1103/PhysRevA.95.012120
  https://link.aps.org/doi/10.1103/PhysRevA.95.012120} {\bibfield  {journal}
  {\bibinfo  {journal} {Phys. Rev. A}\ }\textbf {\bibinfo {volume} {95}},\
  \bibinfo {pages} {012120} (\bibinfo {year} {2017})}\BibitemShut {NoStop}%
\bibitem [{\citenamefont {Li}\ \emph {et~al.}(2017)\citenamefont {Li},
  \citenamefont {Fan}, \citenamefont {Wang}, \citenamefont {Ye}, \citenamefont
  {Zeng}, \citenamefont {Zhai}, \citenamefont {Peng},\ and\ \citenamefont
  {Du}}]{Li2017a}%
  \BibitemOpen
  \bibfield  {author} {\bibinfo {author} {\bibfnamefont {J.}~\bibnamefont
  {Li}}, \bibinfo {author} {\bibfnamefont {R.}~\bibnamefont {Fan}}, \bibinfo
  {author} {\bibfnamefont {H.}~\bibnamefont {Wang}}, \bibinfo {author}
  {\bibfnamefont {B.}~\bibnamefont {Ye}}, \bibinfo {author} {\bibfnamefont
  {B.}~\bibnamefont {Zeng}}, \bibinfo {author} {\bibfnamefont {H.}~\bibnamefont
  {Zhai}}, \bibinfo {author} {\bibfnamefont {X.}~\bibnamefont {Peng}},\ and\
  \bibinfo {author} {\bibfnamefont {J.}~\bibnamefont {Du}},\ }\href
  {http://link.aps.org/doi/10.1103/PhysRevX.7.031011} {\bibfield  {journal}
  {\bibinfo  {journal} {Phys. Rev. X}\ }\textbf {\bibinfo {volume} {7}},\
  \bibinfo {pages} {031011} (\bibinfo {year} {2017})}\BibitemShut {NoStop}%
\bibitem [{\citenamefont {Garttner}\ \emph {et~al.}(2017)\citenamefont
  {Garttner}, \citenamefont {Bohnet}, \citenamefont {Safavi-Naini},
  \citenamefont {Wall}, \citenamefont {Bollinger},\ and\ \citenamefont
  {Rey}}]{Garttner2016}%
  \BibitemOpen
  \bibfield  {author} {\bibinfo {author} {\bibfnamefont {M.}~\bibnamefont
  {Garttner}}, \bibinfo {author} {\bibfnamefont {J.~G.}\ \bibnamefont
  {Bohnet}}, \bibinfo {author} {\bibfnamefont {A.}~\bibnamefont
  {Safavi-Naini}}, \bibinfo {author} {\bibfnamefont {M.~L.}\ \bibnamefont
  {Wall}}, \bibinfo {author} {\bibfnamefont {J.~J.}\ \bibnamefont
  {Bollinger}},\ and\ \bibinfo {author} {\bibfnamefont {A.~M.}\ \bibnamefont
  {Rey}},\ }\href {http://arxiv.org/abs/1608.08938
  http://dx.doi.org/10.1038/nphys4119} {\bibfield  {journal} {\bibinfo
  {journal} {Nat. Phys.}\ }\textbf {\bibinfo {volume} {13}},\ \bibinfo {pages}
  {781} (\bibinfo {year} {2017})}\BibitemShut {NoStop}%
\bibitem [{\citenamefont {Meier}\ \emph {et~al.}(2019)\citenamefont {Meier},
  \citenamefont {Ang'ong'a}, \citenamefont {An},\ and\ \citenamefont
  {Gadway}}]{Meier2017}%
  \BibitemOpen
  \bibfield  {author} {\bibinfo {author} {\bibfnamefont {E.~J.}\ \bibnamefont
  {Meier}}, \bibinfo {author} {\bibfnamefont {J.}~\bibnamefont {Ang'ong'a}},
  \bibinfo {author} {\bibfnamefont {F.~A.}\ \bibnamefont {An}},\ and\ \bibinfo
  {author} {\bibfnamefont {B.}~\bibnamefont {Gadway}},\ }\href
  {http://arxiv.org/abs/1705.06714
  https://link.aps.org/doi/10.1103/PhysRevA.100.013623} {\bibfield  {journal}
  {\bibinfo  {journal} {Phys. Rev. A}\ }\textbf {\bibinfo {volume} {100}},\
  \bibinfo {pages} {013623} (\bibinfo {year} {2019})}\BibitemShut {NoStop}%
\bibitem [{\citenamefont {Bernien}\ \emph {et~al.}(2017)\citenamefont
  {Bernien}, \citenamefont {Schwartz}, \citenamefont {Keesling}, \citenamefont
  {Levine}, \citenamefont {Omran}, \citenamefont {Pichler}, \citenamefont
  {Choi}, \citenamefont {Zibrov}, \citenamefont {Endres}, \citenamefont
  {Greiner}, \citenamefont {Vuletic},\ and\ \citenamefont
  {Lukin}}]{Bernien_2017}%
  \BibitemOpen
  \bibfield  {author} {\bibinfo {author} {\bibfnamefont {H.}~\bibnamefont
  {Bernien}}, \bibinfo {author} {\bibfnamefont {S.}~\bibnamefont {Schwartz}},
  \bibinfo {author} {\bibfnamefont {A.}~\bibnamefont {Keesling}}, \bibinfo
  {author} {\bibfnamefont {H.}~\bibnamefont {Levine}}, \bibinfo {author}
  {\bibfnamefont {A.}~\bibnamefont {Omran}}, \bibinfo {author} {\bibfnamefont
  {H.}~\bibnamefont {Pichler}}, \bibinfo {author} {\bibfnamefont
  {S.}~\bibnamefont {Choi}}, \bibinfo {author} {\bibfnamefont {A.~S.}\
  \bibnamefont {Zibrov}}, \bibinfo {author} {\bibfnamefont {M.}~\bibnamefont
  {Endres}}, \bibinfo {author} {\bibfnamefont {M.}~\bibnamefont {Greiner}},
  \bibinfo {author} {\bibfnamefont {V.}~\bibnamefont {Vuletic}},\ and\ \bibinfo
  {author} {\bibfnamefont {M.~D.}\ \bibnamefont {Lukin}},\ }\href
  {https://doi.org/10.1038/nature24622} {\bibfield  {journal} {\bibinfo
  {journal} {Nature}\ }\textbf {\bibinfo {volume} {551}},\ \bibinfo {pages}
  {579} (\bibinfo {year} {2017})}\BibitemShut {NoStop}%
\bibitem [{\citenamefont {Levine}\ \emph {et~al.}(2019)\citenamefont {Levine},
  \citenamefont {Keesling}, \citenamefont {Semeghini}, \citenamefont {Omran},
  \citenamefont {Wang}, \citenamefont {Ebadi}, \citenamefont {Bernien},
  \citenamefont {Greiner}, \citenamefont {Vuleti{\'c}}, \citenamefont {Pichler}
  \emph {et~al.}}]{RydbergBellPairs}%
  \BibitemOpen
  \bibfield  {author} {\bibinfo {author} {\bibfnamefont {H.}~\bibnamefont
  {Levine}}, \bibinfo {author} {\bibfnamefont {A.}~\bibnamefont {Keesling}},
  \bibinfo {author} {\bibfnamefont {G.}~\bibnamefont {Semeghini}}, \bibinfo
  {author} {\bibfnamefont {A.}~\bibnamefont {Omran}}, \bibinfo {author}
  {\bibfnamefont {T.~T.}\ \bibnamefont {Wang}}, \bibinfo {author}
  {\bibfnamefont {S.}~\bibnamefont {Ebadi}}, \bibinfo {author} {\bibfnamefont
  {H.}~\bibnamefont {Bernien}}, \bibinfo {author} {\bibfnamefont
  {M.}~\bibnamefont {Greiner}}, \bibinfo {author} {\bibfnamefont
  {V.}~\bibnamefont {Vuleti{\'c}}}, \bibinfo {author} {\bibfnamefont
  {H.}~\bibnamefont {Pichler}}, \emph {et~al.},\ }\href
  {https://doi.org/10.1103/PhysRevLett.123.170503} {\bibfield  {journal}
  {\bibinfo  {journal} {Phys. Rev. Lett.}\ }\textbf {\bibinfo {volume} {123}},\
  \bibinfo {pages} {170503} (\bibinfo {year} {2019})}\BibitemShut {NoStop}%
\bibitem [{\citenamefont {Bertini}\ \emph {et~al.}(2019)\citenamefont
  {Bertini}, \citenamefont {Kos},\ and\ \citenamefont {Prosen}}]{Bertini_2019}%
  \BibitemOpen
  \bibfield  {author} {\bibinfo {author} {\bibfnamefont {B.}~\bibnamefont
  {Bertini}}, \bibinfo {author} {\bibfnamefont {P.}~\bibnamefont {Kos}},\ and\
  \bibinfo {author} {\bibfnamefont {T.}~\bibnamefont {Prosen}},\ }\href
  {https://doi.org/10.1103/physrevx.9.021033} {\bibfield  {journal} {\bibinfo
  {journal} {Phys. Rev. X}\ }\textbf {\bibinfo {volume} {9}},\ \bibinfo {pages}
  {021033} (\bibinfo {year} {2019})}\BibitemShut {NoStop}%
\bibitem [{\citenamefont {Wright}\ \emph {et~al.}(2019)\citenamefont {Wright},
  \citenamefont {Beck}, \citenamefont {Debnath}, \citenamefont {Amini},
  \citenamefont {Nam}, \citenamefont {Grzesiak}, \citenamefont {Chen},
  \citenamefont {Pisenti}, \citenamefont {Chmielewski}, \citenamefont {Collins}
  \emph {et~al.}}]{wright2019benchmarking}%
  \BibitemOpen
  \bibfield  {author} {\bibinfo {author} {\bibfnamefont {K.}~\bibnamefont
  {Wright}}, \bibinfo {author} {\bibfnamefont {K.}~\bibnamefont {Beck}},
  \bibinfo {author} {\bibfnamefont {S.}~\bibnamefont {Debnath}}, \bibinfo
  {author} {\bibfnamefont {J.}~\bibnamefont {Amini}}, \bibinfo {author}
  {\bibfnamefont {Y.}~\bibnamefont {Nam}}, \bibinfo {author} {\bibfnamefont
  {N.}~\bibnamefont {Grzesiak}}, \bibinfo {author} {\bibfnamefont {J.-S.}\
  \bibnamefont {Chen}}, \bibinfo {author} {\bibfnamefont {N.}~\bibnamefont
  {Pisenti}}, \bibinfo {author} {\bibfnamefont {M.}~\bibnamefont
  {Chmielewski}}, \bibinfo {author} {\bibfnamefont {C.}~\bibnamefont
  {Collins}}, \emph {et~al.},\ }\href
  {https://doi.org/10.1038/s41467-019-13534-2} {\bibfield  {journal} {\bibinfo
  {journal} {Nat. Commun.}\ }\textbf {\bibinfo {volume} {10}},\ \bibinfo
  {pages} {1} (\bibinfo {year} {2019})}\BibitemShut {NoStop}%
\bibitem [{\citenamefont {Davoudi}\ \emph {et~al.}(2020)\citenamefont
  {Davoudi}, \citenamefont {Hafezi}, \citenamefont {Monroe}, \citenamefont
  {Pagano}, \citenamefont {Seif},\ and\ \citenamefont
  {Shaw}}]{davoudi2019analog}%
  \BibitemOpen
  \bibfield  {author} {\bibinfo {author} {\bibfnamefont {Z.}~\bibnamefont
  {Davoudi}}, \bibinfo {author} {\bibfnamefont {M.}~\bibnamefont {Hafezi}},
  \bibinfo {author} {\bibfnamefont {C.}~\bibnamefont {Monroe}}, \bibinfo
  {author} {\bibfnamefont {G.}~\bibnamefont {Pagano}}, \bibinfo {author}
  {\bibfnamefont {A.}~\bibnamefont {Seif}},\ and\ \bibinfo {author}
  {\bibfnamefont {A.}~\bibnamefont {Shaw}},\ }\href
  {https://doi.org/10.1103/PhysRevResearch.2.023015} {\bibfield  {journal}
  {\bibinfo  {journal} {Phys. Rev. Research}\ }\textbf {\bibinfo {volume}
  {2}},\ \bibinfo {pages} {023015} (\bibinfo {year} {2020})}\BibitemShut
  {NoStop}%
\bibitem [{\citenamefont {Zhu}\ \emph {et~al.}(2020)\citenamefont {Zhu},
  \citenamefont {Johri}, \citenamefont {Linke}, \citenamefont {Landsman},
  \citenamefont {Alderete}, \citenamefont {Nguyen}, \citenamefont {Matsuura},
  \citenamefont {Hsieh},\ and\ \citenamefont {Monroe}}]{zhu2019variational}%
  \BibitemOpen
  \bibfield  {author} {\bibinfo {author} {\bibfnamefont {D.}~\bibnamefont
  {Zhu}}, \bibinfo {author} {\bibfnamefont {S.}~\bibnamefont {Johri}}, \bibinfo
  {author} {\bibfnamefont {N.}~\bibnamefont {Linke}}, \bibinfo {author}
  {\bibfnamefont {K.}~\bibnamefont {Landsman}}, \bibinfo {author}
  {\bibfnamefont {C.~H.}\ \bibnamefont {Alderete}}, \bibinfo {author}
  {\bibfnamefont {N.}~\bibnamefont {Nguyen}}, \bibinfo {author} {\bibfnamefont
  {A.}~\bibnamefont {Matsuura}}, \bibinfo {author} {\bibfnamefont
  {T.}~\bibnamefont {Hsieh}},\ and\ \bibinfo {author} {\bibfnamefont
  {C.}~\bibnamefont {Monroe}},\ }\href
  {https://doi.org/10.1073/pnas.2006337117} {\bibfield  {journal} {\bibinfo
  {journal} {Proc. Natl. Acad. Sci. U.S.A.}\ }\textbf {\bibinfo {volume}
  {117}},\ \bibinfo {pages} {25402} (\bibinfo {year} {2020})}\BibitemShut
  {NoStop}%
\bibitem [{\citenamefont {Danshita}\ \emph {et~al.}(2017)\citenamefont
  {Danshita}, \citenamefont {Hanada},\ and\ \citenamefont
  {Tezuka}}]{Danshita_2017}%
  \BibitemOpen
  \bibfield  {author} {\bibinfo {author} {\bibfnamefont {I.}~\bibnamefont
  {Danshita}}, \bibinfo {author} {\bibfnamefont {M.}~\bibnamefont {Hanada}},\
  and\ \bibinfo {author} {\bibfnamefont {M.}~\bibnamefont {Tezuka}},\ }\href
  {https://doi.org/10.1093/ptep/ptx108} {\bibfield  {journal} {\bibinfo
  {journal} {Prog. Theor. Exp. Phys.}\ }\textbf {\bibinfo {volume} {2017}},\
  \bibinfo {pages} {083I01} (\bibinfo {year} {2017})}\BibitemShut {NoStop}%
\bibitem [{\citenamefont {Pikulin}\ and\ \citenamefont
  {Franz}(2017)}]{Pikulin_2017}%
  \BibitemOpen
  \bibfield  {author} {\bibinfo {author} {\bibfnamefont {D.~I.}\ \bibnamefont
  {Pikulin}}\ and\ \bibinfo {author} {\bibfnamefont {M.}~\bibnamefont
  {Franz}},\ }\href {https://doi.org/10.1103/physrevx.7.031006} {\bibfield
  {journal} {\bibinfo  {journal} {Phys. Rev. X}\ }\textbf {\bibinfo {volume}
  {7}},\ \bibinfo {pages} {031006} (\bibinfo {year} {2017})}\BibitemShut
  {NoStop}%
\bibitem [{\citenamefont {Garc\'ia-\'Alvarez}\ \emph
  {et~al.}(2017)\citenamefont {Garc\'ia-\'Alvarez}, \citenamefont {Egusquiza},
  \citenamefont {Lamata}, \citenamefont {del Campo}, \citenamefont {Sonner},\
  and\ \citenamefont {Solano}}]{Garc_a_lvarez_2017}%
  \BibitemOpen
  \bibfield  {author} {\bibinfo {author} {\bibfnamefont {L.}~\bibnamefont
  {Garc\'ia-\'Alvarez}}, \bibinfo {author} {\bibfnamefont {I.~L.}\ \bibnamefont
  {Egusquiza}}, \bibinfo {author} {\bibfnamefont {L.}~\bibnamefont {Lamata}},
  \bibinfo {author} {\bibfnamefont {A.}~\bibnamefont {del Campo}}, \bibinfo
  {author} {\bibfnamefont {J.}~\bibnamefont {Sonner}},\ and\ \bibinfo {author}
  {\bibfnamefont {E.}~\bibnamefont {Solano}},\ }\href
  {https://doi.org/10.1103/physrevlett.119.040501} {\bibfield  {journal}
  {\bibinfo  {journal} {Phys. Rev. Lett.}\ }\textbf {\bibinfo {volume} {119}},\
  \bibinfo {pages} {040501} (\bibinfo {year} {2017})}\BibitemShut {NoStop}%
\bibitem [{\citenamefont {Babbush}\ \emph {et~al.}(2019)\citenamefont
  {Babbush}, \citenamefont {Berry},\ and\ \citenamefont
  {Neven}}]{Babbush_2019}%
  \BibitemOpen
  \bibfield  {author} {\bibinfo {author} {\bibfnamefont {R.}~\bibnamefont
  {Babbush}}, \bibinfo {author} {\bibfnamefont {D.~W.}\ \bibnamefont {Berry}},\
  and\ \bibinfo {author} {\bibfnamefont {H.}~\bibnamefont {Neven}},\ }\href
  {https://doi.org/10.1103/physreva.99.040301} {\bibfield  {journal} {\bibinfo
  {journal} {Phys. Rev. A}\ }\textbf {\bibinfo {volume} {99}},\ \bibinfo
  {pages} {040301} (\bibinfo {year} {2019})}\BibitemShut {NoStop}%
\bibitem [{\citenamefont {Swingle}\ and\ \citenamefont
  {Yunger~Halpern}(2018)}]{Swingle_2018}%
  \BibitemOpen
  \bibfield  {author} {\bibinfo {author} {\bibfnamefont {B.}~\bibnamefont
  {Swingle}}\ and\ \bibinfo {author} {\bibfnamefont {N.}~\bibnamefont
  {Yunger~Halpern}},\ }\href {https://doi.org/10.1103/physreva.97.062113}
  {\bibfield  {journal} {\bibinfo  {journal} {Phys. Rev. A}\ }\textbf {\bibinfo
  {volume} {97}},\ \bibinfo {pages} {062113} (\bibinfo {year}
  {2018})}\BibitemShut {NoStop}%
\bibitem [{\citenamefont {Anderson}\ \emph {et~al.}(2010)\citenamefont
  {Anderson}, \citenamefont {Guionnet},\ and\ \citenamefont
  {Zeitouni}}]{anderson2010introduction}%
  \BibitemOpen
  \bibfield  {author} {\bibinfo {author} {\bibfnamefont {G.~W.}\ \bibnamefont
  {Anderson}}, \bibinfo {author} {\bibfnamefont {A.}~\bibnamefont {Guionnet}},\
  and\ \bibinfo {author} {\bibfnamefont {O.}~\bibnamefont {Zeitouni}},\
  }\href@noop {} {\emph {\bibinfo {title} {An Introduction to Random
  Matrices}}},\ Vol.\ \bibinfo {volume} {118}\ (\bibinfo  {publisher}
  {Cambridge University Press},\ \bibinfo {year} {2010})\BibitemShut {NoStop}%
\bibitem [{\citenamefont {Hayden}\ \emph {et~al.}(2006)\citenamefont {Hayden},
  \citenamefont {Leung},\ and\ \citenamefont {Winter}}]{hayden2006aspects}%
  \BibitemOpen
  \bibfield  {author} {\bibinfo {author} {\bibfnamefont {P.}~\bibnamefont
  {Hayden}}, \bibinfo {author} {\bibfnamefont {D.~W.}\ \bibnamefont {Leung}},\
  and\ \bibinfo {author} {\bibfnamefont {A.}~\bibnamefont {Winter}},\ }\href
  {https://doi.org/10.1007/s00220-006-1535-6} {\bibfield  {journal} {\bibinfo
  {journal} {Commun. Math. Phys.}\ }\textbf {\bibinfo {volume} {265}},\
  \bibinfo {pages} {95} (\bibinfo {year} {2006})}\BibitemShut {NoStop}%
\bibitem [{\citenamefont {Fannes}\ \emph {et~al.}(2004)\citenamefont {Fannes},
  \citenamefont {Haegeman}, \citenamefont {Mosonyi},\ and\ \citenamefont
  {Vanpeteghem}}]{fannes2004additivity}%
  \BibitemOpen
  \bibfield  {author} {\bibinfo {author} {\bibfnamefont {M.}~\bibnamefont
  {Fannes}}, \bibinfo {author} {\bibfnamefont {B.}~\bibnamefont {Haegeman}},
  \bibinfo {author} {\bibfnamefont {M.}~\bibnamefont {Mosonyi}},\ and\ \bibinfo
  {author} {\bibfnamefont {D.}~\bibnamefont {Vanpeteghem}},\ }\href@noop {}
  {\bibinfo {title} {Additivity of minimal entropy output for a class of
  covariant channels}} (\bibinfo {year} {2004}),\ \bibinfo {note} {arXiv
  preprint},\ \Eprint {https://arxiv.org/abs/quant-ph/0410195}
  {arXiv:quant-ph/0410195} \BibitemShut {NoStop}%
\bibitem [{\citenamefont {Datta}\ \emph {et~al.}(2006)\citenamefont {Datta},
  \citenamefont {Holevo},\ and\ \citenamefont {Suhov}}]{datta2006additivity}%
  \BibitemOpen
  \bibfield  {author} {\bibinfo {author} {\bibfnamefont {N.}~\bibnamefont
  {Datta}}, \bibinfo {author} {\bibfnamefont {A.~S.}\ \bibnamefont {Holevo}},\
  and\ \bibinfo {author} {\bibfnamefont {Y.}~\bibnamefont {Suhov}},\ }\href
  {https://doi.org/10.1142/S0219749906001633} {\bibfield  {journal} {\bibinfo
  {journal} {Int. J. Quantum Inf.}\ }\textbf {\bibinfo {volume} {4}},\ \bibinfo
  {pages} {85} (\bibinfo {year} {2006})}\BibitemShut {NoStop}%
\end{thebibliography}%
\clearpage
\appendix
\begin{widetext}

\section{Preliminaries on Pauli operators}\label{sec:pauli}
In this section we review the algebra of $n$-qubit Pauli operators and recall some useful identities.
Consider the Hilbert space $(\CC^2)^{\ot n}$ of an $n$ qubit system.
For any integer vector $\vec v=(\vec p,\vec q)\in\ZZ^{2n}$ we can define a corresponding \emph{Pauli operator} (also known as \emph{Weyl operator}) by
\begin{align}\label{eq:def pauli}
  P_{\vec v}=i^{-\vec p \cdot \vec q } \, Z^{p_1}X^{q_1} \ot \cdots \ot Z^{p_n}X^{q_n}.
\end{align}
The Pauli operators~$P_{\vec v}$ for $v\in\{0,1\}^{2n}$ form a basis of the space of $n$-qubit operators.
However, we caution that $P_{\vec v}$ depends on $\vec v$ modulo $4$ and is well-defined modulo $2$ only up to a sign.
Namely,
\begin{align}\label{eq:mod2mod4}
  P_{\vec v +2\vec w} = (-1)^{[\vec v,\vec w]}P_{\vec v},
\end{align}
where $[\cdot,\cdot]$ is the `symplectic form' defined by $[(\vec p,\vec q),(\vec p',\vec q')] = \vec p\cdot \vec q' - \vec q\cdot \vec p'$.
Using this form, the commutation relation of the Pauli operators can be succinctly written as
\begin{align*}
  P_{\vec v} P_{\vec w} = (-1)^{[\vec v,\vec w]} P_{\vec w} P_{\vec v}
\end{align*}
and multiplication is given by
\begin{align}\label{eq:group_str}
  P_{\vec v} P_{\vec w} =i^{[\vec v,\vec w]} P_{\vec v+ \vec w},
\end{align}
where the addition $\vec v+\vec w$ in $P_{\vec v+ \vec w}$ must be carried out modulo~$4$ and can only be reduced to the range~$\{0,1\}$ by carefully applying~\cref{eq:mod2mod4}.
Finally, we note that the transpose of a Pauli operator is given by
\begin{align}\label{eq:Pauli_Transpose}
P_{\vec v}^T = (-1)^{\vec p\cdot\vec q} P_{\vec v},
\end{align}
since transposing only impacts the $Y$ operators.

With these facts in mind, let us discuss the size of Pauli operators.
The \emph{size} (or \emph{weight}) of a Pauli operator~$P = P_{\vec v}$, which we denote by $\abs{P} = \abs{\vec v}$, is defined as the number of single-qubit Paulis in \cref{eq:def pauli} that are not proportional to an identity operator.
The locations of those Pauli operators are called the \emph{support} of~$P$, which is a subset of $\{1,\dots,n\}$.
If $\vec v \in \{0,1\}^{2n}$ then the size of $P_{\vec v}$ can be calculated as $\vec p \cdot \vec p + \vec q \cdot \vec q - \vec p \cdot \vec q$, where the last term ensures we do not double count the~$Y$ operators.
Using the properties above, we arrive at an identity which holds for all $\vec v\in\ZZ^{2n}$ and will frequently be used:
\begin{align}\label{eq:Ytranspose_Pauli}
  Y^{\ot n} P_{\vec v}^T Y^{\ot n}
= (-1)^{\vec p \cdot \vec p + \vec q \cdot \vec q} P_{\vec v}^T
= (-1)^{\abs{P_\vec v}} P_{\vec v}.
\end{align}

\section{Proof of Eq.~(\ref{eq:TS_Explicit})}\label{app:state transfer by size dependent phase}
In this section, we derive the formula for the teleportation-by-size channel.
Clearly it suffices to prove \cref{eq:TS_Explicit} for $m=1$, since both the left-hand and the right-hand side are tensor power channels.
Thus we need to show that if $S=S_{A_LA_R}$ is the two-qubit unitary that acts as
\begin{align*}
  S\ket{P} = e^{ig\abs{P}} \ket{P}
\end{align*}
for all single-qubit Paulis $P$ then we have that
\begin{align}\label{eq:TS_Explicit single qubit}
  \Tr_{A_L}[S(\rho \ot \tau)S^\dagger]
= Y \Delta_\lambda(\rho) Y
\end{align}
for any single-qubit state~$\rho$, where $\tau=I/2$ is the maximally mixed state, $\Delta_\lambda$ is the single-qubit depolarizing channel $\Delta_\lambda(\rho) = (1-\lambda)\tau + \lambda\rho$, and $\lambda=(1-\cos(g))/2$.

To verify \cref{eq:TS_Explicit single qubit}, note that we can write
\begin{align*}
  S
= \phi^+ + e^{ig} \left( I - \phi^+ \right)
= e^{ig} \left( I + (e^{-ig} - 1) \phi^+ \right),
\end{align*}
where $\phi^+ = \proj{\phi^+}$ denotes the projector onto the maximally entangled state $\ket{\phi^+}=\ket I = (\ket{00} + \ket{11})/\sqrt2$.
Thus,
\begin{align*}
S(\rho \ot \tau)S^\dagger
&= \rho \ot \tau + (e^{-ig} - 1) \phi^+ (\rho \ot \tau) + (e^{ig} - 1) (\rho \ot \tau) \phi^+ + \frac{(e^{-ig} - 1)(e^{ig} - 1)}{4} \phi^+ \\
&= \rho \ot \tau + \frac{e^{-ig} - 1}2 \phi^+ (I \ot \rho^T) + \frac{e^{ig} - 1}2 (I \ot \rho^T) \phi^+ + \frac{(e^{-ig} - 1)(e^{ig} - 1)}{4} \phi^+
\end{align*}
using the transpose trick, and hence
\begin{align*}
  \Tr_{A_L}[S(\rho \ot \tau)S^\dagger]
&= \tau + \frac{e^{-ig} - 1}4 \rho^T + \frac{e^{ig} - 1}4 \rho^T + \frac{(e^{-ig} - 1)(e^{ig} - 1)}{4} \tau \\
&= (1 + \lambda) \tau - \lambda \rho^T
= (1 - \lambda) \tau + \lambda Y \rho Y
= Y \Delta_\lambda(\rho) Y,
\end{align*}
since, for qubits, $\rho^T = I - Y \rho Y$.
This proves \cref{eq:TS_Explicit single qubit} and, hence, \cref{eq:TS_Explicit}.

\section{Random unitary time evolution}\label{app:random unitaries}
In this section we establish our technical results for a random unitary time evolution, which were discussed in \cref{subsec:Teleportation_by_Size}.

\subsection{Proof of Eq.~(\ref{eq:sandwich haar}) and concentration}\label{sec:avg_sec}
We first compute the average ``sandwiched'' coupling in \cref{eq:sndc} in case the time evolution is given by a random unitary, that is,
\begin{align*}
  M_{LR} = \EE_U \tilde M_{LR}, \quad \text{ where } \quad \tilde M_{LR} = ( \overline U_L \ot U_R ) e^{igV} ( U_L^T\ot U_R^\dagger ).
\end{align*}
Consider the partial transpose $M_{LR}^{T_R}$ of $M_{LR}$ on the right subsystem.
Using the invariance property of the Haar measure, one can see that for every unitary $V$,
\begin{align*}
  (V_L \ot V_R) M_{LR}^{T_R} = M_{LR}^{T_R} (V_L \ot V_R).
\end{align*}
By Schur-Weyl duality, any operator that commutes with all matrices of the form $V^{\ot r}$ is in the span of the permutations of~$r$ replicas of the system. Here, we have $r=2$, and there exist only two permutations: the identity and the flip operator.
Hence, we have:
\begin{align*}
  (M_{LR})^{T_R} = \alpha' I_{LR} +\beta' F_{LR}, \text{ for some }\alpha'\text{  and }\beta'.
\end{align*}
Now, note that $(I_{LR})^{T_R}=I_{LR}$ and $(F_{LR})^{T_R} \propto \phi^+_{LR}$. By taking another partial transpose of the above equation we get:
\begin{align*}
  M_{LR} =  \alpha I_{LR} + \beta \phi^+_{LR}.
\end{align*}
To determine the coefficients $\alpha$ and $\beta$, we compute the following traces:
\begin{align*}
  4^n \alpha + \beta &= \tr (M_{LR}) = \tr(e^{igV}) = 4^n \cos(g/k)^k, \\
  \alpha  + \beta &= \tr (M_{LR} \, \phi^+_{LR}) = \tr(e^{igV}\phi^+_{LR}) = e^{ig}.
\end{align*}
Therefore:
\begin{align*}
  \alpha &= \frac {\cos(g/k)^k - 4^{-n} e^{ig}}{1 - 4^{-n}} = \cos(g/k)^k + O(4^{-n}) \\
  \beta &= \frac {e^{ig} - \cos(g/k)^k}{1 - 4^{-n}} = e^{ig} - \cos(g/k)^k + O(4^{-n})
\end{align*}
Therefore, we obtain:
\begin{align}
 M_{LR} = \cos(g/k)^k (I_{LR} - \phi^+_{LR}) + e^{ig} \phi^+_{LR} + O(4^{-n})
\end{align}
up to corrections of order $O(4^{-n}$).
This establishes \cref{eq:sandwich haar} in the main text.

We now consider the projection of the ``sandwiched'' coupling onto a maximally entangled state on the carrier qubits, that is,
\begin{align}\label{eq:projected coupling}
  \tilde S_{A_L A_R} = \braum_{B_L B_R} (\overline U_L \ot U_R) e^{igV} (U_L^T\ot U_R^\dagger) \ketum_{B_L B_R}.
\end{align}
Clearly,
\begin{align}
  \EE( \tilde S_{A_L A_R})
&= \braum_{B_L B_R} M_{LR} \ketum_{B_L B_R} \nonumber \\
&= \cos(g/k)^k (I_{A_LA_R} - \phi^+_{A_LA_R}) + e^{ig} \phi^+_{A_LA_R} + O(4^{-n}). \label{eq:average_formula_in_app}
\end{align}
We now prove that $\tilde S_{A_LA_R}$ concentrates around its average.
There are many different ways to prove this, such as by computing the variance directly or using the fact that if the average of a number of unitaries is close to a unitary, then the distribution must be peaked near its average.
Here, we chose to use a slightly more technical approach employing Levy's lemma.
This has the benefit of giving a generalizable proof technique with stronger bounds.
Levy's lemma for the unitary group~\cite[Corollary~4.4.28]{anderson2010introduction} states that if a function $f\colon U(2^n) \to \RR$ is $\lambda$-Lipschitz, meaning that
\begin{align*}
  \abs{f(U)-f(V)} \leq \lambda \norm{U - V}_F \quad \text{for all } U, V \in U(2^n),
\end{align*}
then
\begin{align}\label{eq:levy}
  \Pr\Bigl( \abs{f(U) - \EE f(U)} \geq \epsilon \Bigr) \leq 2\exp\Bigl( - 2^n \frac{\eps^2}{4\lambda^2} \Bigr).
\end{align}
We will first bound the matrix elements of $\tilde M_{LR}$.
For this, consider the function
\begin{align*}
  f_\Psi \colon U(2^n) \to \RR, \quad f_\Psi(U)
= \braket{\Psi | \tilde M_{LR} | \Psi}
= \braket{\Psi | (\overline U_L \ot U_R) e^{igV} (U_L^T\ot U_R^\dagger) | \Psi}
\end{align*}
for any fixed pure state~$\ket\Psi$.
We have
\begin{align*}
  \abs{f_\Psi(U) - f_\Psi(V)}
&\leq \norm{(\overline U_L \ot U_R) e^{igV} (U_L^T \ot U_R^\dagger) - (\overline V_L \ot V_R) e^{igV} (V_L^T\ot V_R^\dagger)}_{\operatorname{op}} \\
&\leq \norm{(\overline U_L \ot U_R) e^{igV} (U_L^T \ot U_R^\dagger) - (\overline U_L \ot U_R) e^{igV} (U_L^T\ot V_R^\dagger)}_{\operatorname{op}} \\
&+ \norm{(\overline U_L \ot U_R) e^{igV} (U_L^T \ot V_R^\dagger) - (\overline U_L \ot U_R) e^{igV} (V_L^T\ot V_R^\dagger)}_{\operatorname{op}} \\
&+ \norm{(\overline U_L \ot U_R) e^{igV} (V_L^T \ot V_R^\dagger) - (\overline U_L \ot V_R) e^{igV} (V_L^T\ot V_R^\dagger)}_{\operatorname{op}} \\
&+ \norm{(\overline U_L \ot V_R) e^{igV} (V_L^T \ot V_R^\dagger) - (\overline V_L \ot V_R) e^{igV} (V_L^T\ot V_R^\dagger)}_{\operatorname{op}} \\
&= \norm{e^{igV} (U_R^\dagger - V_R^\dagger)}_{\operatorname{op}}
+ \norm{e^{igV} (U_L^T - V_L^T)}_{\operatorname{op}} \\
&+ \norm{(U_R - V_R) e^{igV}}_{\operatorname{op}}
+ \norm{(\overline U_L - \overline V_L) e^{igV}}_{\operatorname{op}} \\
&\leq 4 \norm{U - V}_{\operatorname{op}} \leq 4 \norm{U - V}_F.
\end{align*}
Thus, $f_\Psi$ is 4-Lipschitz and \cref{eq:levy} shows that
\begin{align}\label{eq:levy matrix element}
  \Pr\Bigl( \abs{f_\Psi(U) - \EE f_\Psi(U)} \geq \epsilon \Bigr) \leq 2\exp\Bigl( - 2^n \frac{\epsilon^2}{64} \Bigr).
\end{align}
This result implies very strong concentration of all the individual matrix elements of~$M$, but we would like to prove concentration of the whole matrix $\tilde S_{A_LA_R}$ in the operator norm.
For this we use the existence of small $\epsilon$-nets.
Namely, it is known that there exists a set $\mathcal N$ of at most $(5/\epsilon)^{2\times 4^m}$ many pure states in the $4^m$-dimensional $A_L\ot A_R$ satisfying the following property~\cite{hayden2006aspects}:
\begin{quote}
  \emph{For every pure state~$\Psi_{A_LA_R}$, there exists $\tilde\Psi_{A_LA_R} \in \mathcal N$ such that $\norm{ \Psi_{A_LA_R} - \tilde\Psi_{A_LA_R} }_{\operatorname{tr}} \leq \epsilon$.}
\end{quote}
Now,
\begin{align*}
  \norm{ \tilde S_{A_LA_R} - \EE(\tilde S_{A_LA_R}) }_{\operatorname{op}}
= \max_{\Psi_{A_LA_R} \text{ pure}} \Tr \Psi_{A_LA_R} (\tilde S_{A_LA_R} - \EE(\tilde S_{A_LA_R}) )
\end{align*}
If $\tilde\Psi_{A_LA_R}$ is the element of the net $\mathcal N$ closest to some $\Psi_{A_LA_R}$ in trace norm then
\begin{align*}
&\quad \Tr \Psi_{A_LA_R} (\tilde S_{A_LA_R} - \EE(\tilde S_{A_LA_R}) ) \\
&\leq \Tr \tilde\Psi_{A_LA_R} (\tilde S_{A_LA_R} - \EE(\tilde S_{A_LA_R}) )
+ \norm{\Psi_{A_LA_R} - \tilde\Psi_{A_LA_R}}_{\operatorname{tr}} \norm{\tilde S_{A_LA_R} - \EE(\tilde S_{A_LA_R})}_{\operatorname{op}} \\
&\leq \Tr \tilde\Psi_{A_LA_R} (\tilde S_{A_LA_R} - \EE(\tilde S_{A_LA_R}) ) + 2\eps \\
&= f_{\tilde\Psi_{A_LA_R} \ot \phi^+_{B_LB_R}}(U) - \EE f_{\tilde\Psi_{A_LA_R} \ot \phi^+_{B_LB_R}}(U) + 2\eps
\end{align*}
Therefore,
\begin{align*}
  \norm{ \tilde S_{A_LA_R} - \EE(\tilde S_{A_LA_R}) }_{\operatorname{op}}
\leq \max_{\tilde\Psi_{A_LA_R} \in \mathcal N} f_{\tilde\Psi_{A_LA_R} \ot \phi^+_{B_LB_R}}(U) - \EE f_{\tilde\Psi_{A_LA_R} \ot \phi^+_{B_LB_R}}(U) + 2\eps
\end{align*}
Therefore:
\begin{align*}
&\quad \Pr\Bigl( \norm{ \tilde S_{A_LA_R} - \EE(\tilde S_{A_LA_R}) }_{\operatorname{op}} \geq 3\eps \Bigr) \\
&\leq \Pr\Bigl(\exists \tilde\Psi_{A_LA_R} \in \mathcal N: \abs{f_{\tilde\Psi_{A_LA_R} \ot \phi^+_{B_LB_R}}(U) - \EE f_{\tilde\Psi_{A_LA_R} \ot \phi^+_{B_LB_R}}(U)} \geq \eps) \\
&\leq 2 \left(\frac5\eps\right)^{2\times 4^m} \exp\Bigl( - 2^n \frac{\epsilon^2}{64} \Bigr)
= 2 \exp\Bigl( 2^{2m+1} \log\frac5\eps - 2^n \frac{\epsilon^2}{64} \Bigr).
\end{align*}
where the second inequality follows from \cref{eq:levy matrix element} and the union bound.
This shows that for $n\gg m$ the operator $\tilde S_{A_LA_R}$ is with very high probability very close to its mean.

\subsection{Sending many qubits: The transpose depolarizing channel}\label{app:Transpose_depo}
In \cref{sec:avg_sec} we computed the effective coupling between the left and right-hand side message subsystems.
Using \cref{eq:average_formula_in_app}, one can see that for~$n\gg m$ the net effect of the teleportation protocol is with high probability given by the following formula, up to $O(4^{-m})$ corrections:
\begin{align}\label{eq:teleportation_channel}
  \rho \mapsto \tau + \frac4{d^2} \sin(g/2)^2 \Bigl(\tau - \rho^T\Bigr),
\end{align}
where $d = 2^m$ and $\tau = I/d$ denotes the maximally mixed state.
We can interpret $\tau$ as noise, $(\tau - \rho^T)$ as the signal, and the factor~$(4/d^2) \sin(g/2)^2$ as the strength of the signal.
The latter attains its maximum $4/d^2$ only for $g = \pi \pmod{2\pi}$.

For a single qubit, and only in this case, high fidelity teleportation is possible.
Indeed, for $d=2$ the channel
\begin{align*}
  \rho \mapsto \tau + (\tau - \rho^T) = I - \rho^T = Y \rho Y,
\end{align*}
is unitary (cf.\ \cref{eq:TS_Explicit}), while for $d>2$ the signal is suppressed by $4/d^2\leq4/9$.

There another reason that shows that reliably transmitting more than one qubit by the single use of a channel of the above form is impossible.
The mapping $\rho \mapsto \tau + (\tau-\rho^T)$ is not a valid quantum channel for $d>2$.
In fact, the mapping
\begin{align*}
  \rho \mapsto \tau + \alpha (\tau - \rho^T)
\end{align*}
is completely positive (and hence a valid quantum channel) if and only if $-1/(d+1) \leq \alpha \leq 1/(d-1)$.
The channel in \cref{eq:teleportation_channel} is an example of the class of \emph{transpose depolarizing channels}~\cite{fannes2004additivity}, which were originally studied as examples of channels where the minimal output entropy is additive~\cite{fannes2004additivity,datta2006additivity}.

\subsection{General teleportation channel at infinite temperature}\label{app:general_channel}
So far, we considered teleportation protocols where the $m$~message qubits $A_L$ are replaced by the state that is to be teleported.
Here we will show that even when using a very general communication protocol there are fundamental limitations, and even when we only wish to communicate classical bits.
Specifically, we will consider a protocol where one applies one out of several arbitrary quantum channels to~$A_L$ (one for each possible message).

Let us first study the effect of applying a single quantum channel $\mathcal N$ on the message system $A_L$.
In this case, the output state on system~$A_R$ is given by the following circuit diagram:
\begin{center}
\includegraphics[width=14 cm]{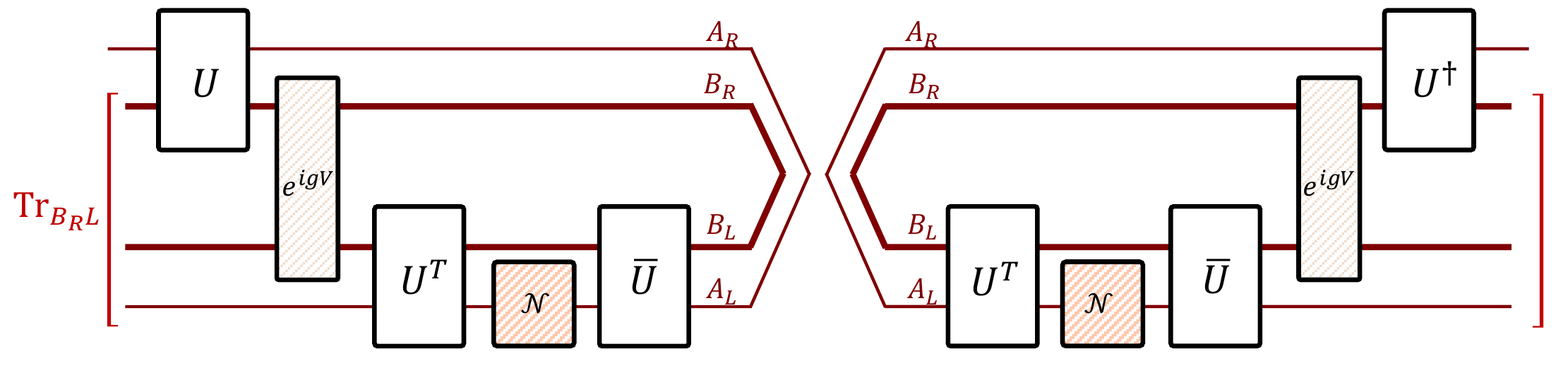}
\end{center}
It is known that one can always write a quantum channel as the following process:
(1) add an environment system in a fixed quantum state,
(2) evolve the environment and the system by some unitary, and
(3) trace out the environment.
This is known as the Stinespring representation.
If we represent our quantum channel in this form, $\mathcal N(\rho_{A_L}) = \tr_{E}\left(G_{A_L E}\, (\rho_{A_L} \ot \ket0\bra0) \,G^\dagger_{A_L E} \right)$, we can write:
\begin{center}
\includegraphics[width=14 cm]{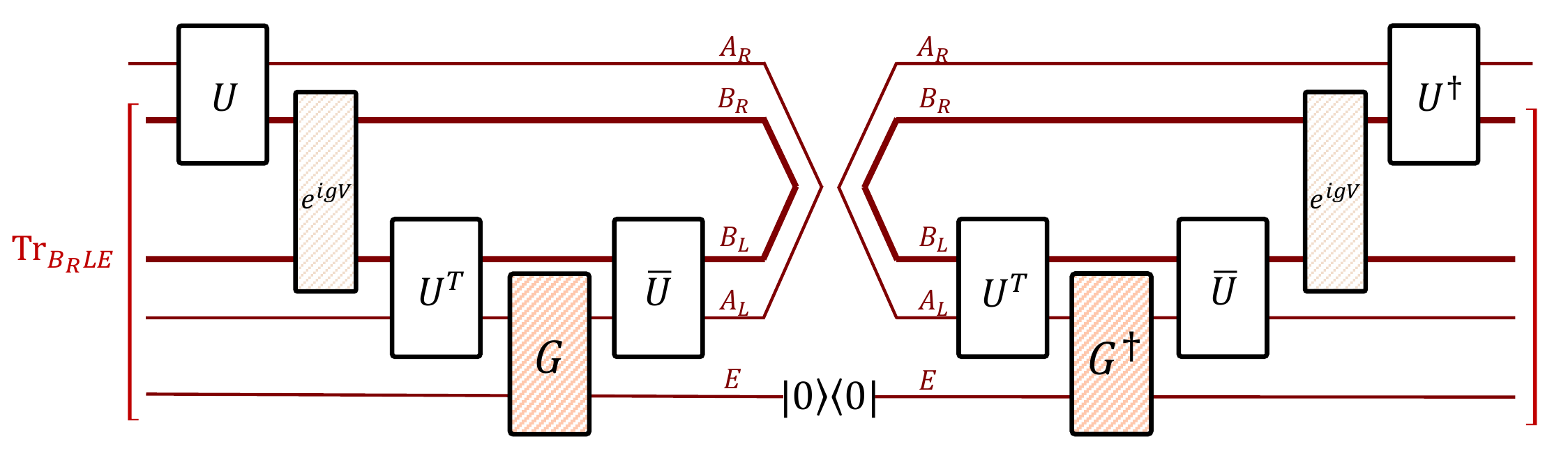}
\end{center}
Finally, after using transpose trick we obtain the following equivalent diagram:
\begin{center}
\includegraphics[width=14 cm]{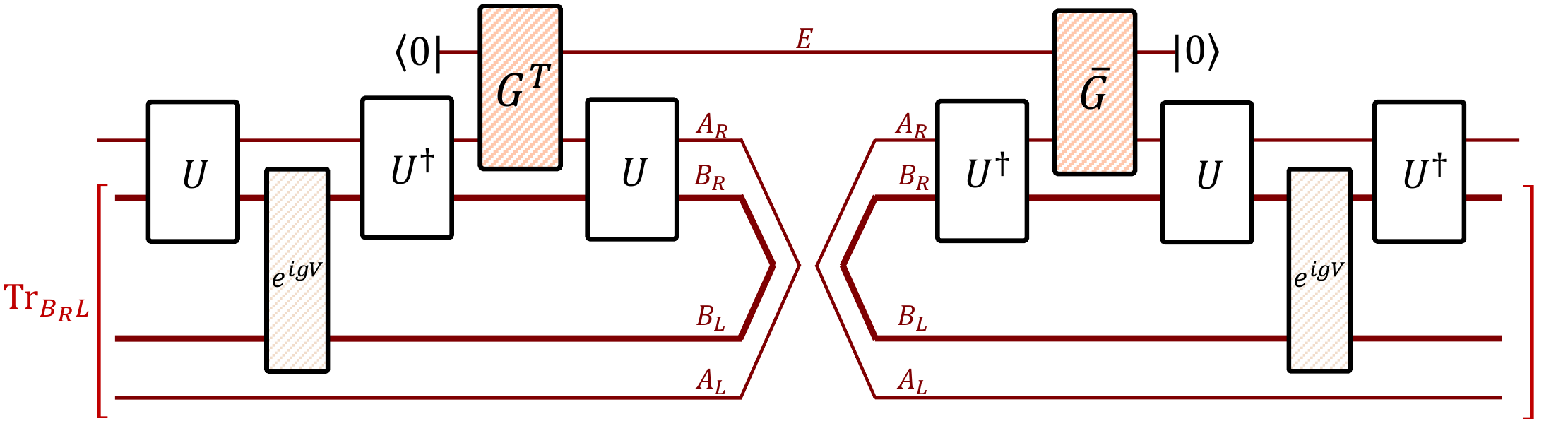}
\end{center}
It is possible to compute the average output state exactly by using the above diagram using a Haar integration similar to the above.
The final result is that for any channel $\mathcal N$, the average output state on $A_R$ is given by the following formula up to $O(4^{-n})$ corrections:
\begin{align}\label{eq:generic_channel_formula}
   \rho_{A_R}
&= \tau - \Big[ \big(1-e^{ig}\big) \big(Q-\tr[Q]\tau\big) + \mathrm{h.c.} \Big] \\
\mathrm{where }\quad Q &= \tr_{A_L} \left[\mathcal N (\phi^+_{A_LA_R})\phi^+_{A_LA_R} \right]\nonumber
\end{align}
Moreover, the output state is self-averaging, i.e., close to its average with high probability.
One can see that when $\mathcal N$ is the channel that replaces a qubit by a new one in a fixed state, then the output signal depends on $g$ through $\cos(g)$, while when $\mathcal N (\rho_{A_L})=e^{-i \eps O} \rho_{A_L} e^{-i \eps O}$ for some Hermitian operator $O$, then the signal depends on $\sin(g)$.

Next, we will show that the output state in \cref{eq:generic_channel_formula} is highly mixed.
For this it is useful to consider the Kraus representation of the channel, i.e., $\mathcal N(\rho_{A_L}) = \sum_i E_i \rho_{A_L} E_i^\dagger$.
Then, one can check that
\begin{align*}
  Q = \sum_i E_i^T \frac{\tr E_i^\dagger}{d^2}
\end{align*}
Consider the following matrix inequality, which holds for arbitrary $x\in\CC$:
\begin{align*}
  \frac1d \sum_i (E_i + \bar x \tr[E_i] \tau)^\dagger (E_i + \bar x \tr[E_i] \tau) \geq 0.
\end{align*}
Since $\mathcal N$ is trace-preserving, $\sum_i E_i^\dagger E_i = I$, and now a short calculation shows that the preceding matrix inequality is equivalent to $\tau + \abs x^2 \tr[Q] \tau \geq -x Q -\bar x Q^\dagger$,
and hence
\begin{align*}
  \tau \left( 1 + (\abs x^2 + x + \bar x) \Tr[Q] \right) \geq -x (Q - \Tr[Q] \tau) - \bar x (Q - \Tr[Q] \tau)^\dagger.
\end{align*}
Setting $x = (1-e^{ig})/2$ and using \cref{eq:generic_channel_formula}, we obtain
\begin{align}\label{eq:mixed bound}
  \rho_{A_R}
\leq 3 \Bigl( 1 + (1-\cos(g)) \braket{\phi^+ | \mathcal N(\phi^+) | \phi^+} \Bigr) \tau
\leq 9 \tau.
\end{align}
This is a strong constraint, since it implies that all the eigenvalues of~$\rho_{A_R}$ are smaller than $9/d$.
This in turn means that at least~$d/9$ eigenvalues are nonzero.
Therefore, one can send at most $9$ perfectly distinguishable states in this way.

One can also compute the Holevo information of an ensemble~$\{p_i,\rho_i\}$, where each state~$\rho_i$ is the output state for a different choice of channel~$\mathcal N_i$.
Recall that the Holevo information is defined as
\begin{align*}
  \chi(\{\rho_i,p_i\}) = S(\sum_i p_i \rho_i) - \sum_i p_i S(\rho_i).
\end{align*}
But $S(\sum_i p_i \rho_i) \leq \log d$, and if $\rho_i \leq c \tau$ then $S(\rho_i) \geq \log d - \log c$.
Accordingly, for any ensemble of states satisfying \cref{eq:mixed bound} we can bound the Holevo information as
\begin{align*}
  \chi(\{\rho_i,p_i\}) \leq \log 9 \approx 3.17,
\end{align*}
where we use the logarithm to base 2.

\section{Proof of Eq.~(\ref{eq:eraf})}\label{app:Twisting}
In this section, we show that the action of the weak coupling unitary $e^{igV}$ on the state $O_R \ketum_{LR}$ amounts to approximately a size-dependent phase under a natural assumption on the operator~$O$.

Recall the setup in the main text: $L$ and~$R$ are each split into an $m$-qubit `message' subsystem~$A$ and a $k$-qubit `carrier' subsystem~$B$, where $m+k=n$.
The two sides are coupled by the Hamiltonian $V = \frac1k \sum_{i \in B} Z_i^L Z_i^R$.
It is straightforward to see that
\begin{align*}
  e^{igV} \ket P_{LR}
= e^{ig\bigl( 1 - 2\frac{\abs P_X^B}k \bigr)} \ket P_{LR},
\end{align*}
where $\abs P_X^B$ denotes the number of single-qubit Pauli~$X$ or~$Y$ operators in~$P$ that act on the $B$~subsystem.
Therefore, we have for any operator $O = 2^{-n/2} \sum_P c_P P$ expanded in the Pauli basis that
\begin{align}\label{eq:exact action}
  e^{igV} O_R \ketum_{LR}
= O_R^{(g)_X^B} \ketum_{LR},
\end{align}
where we defined
\begin{align}\label{eq:X B twist}
  O^{(g)_X^B} := 2^{-n/2} \sum_P e^{ig\bigl( 1 - 2\frac{\abs P_X^B}k \bigr)} c_P P.
\end{align}
For typical Pauli operators~$P$ and~$n\gg m$, it holds that $2\frac{\abs P_X^B}k \approx \frac43 \frac{\abs P}n$, suggesting one might be able to replace \cref{eq:X B twist} by the following operator:
\begin{align}\label{eq:twist}
  O^{(g)} := 2^{-n/2} \sum_P e^{ig\bigl( 1 - \frac43 \frac{\abs P}n \bigr)} c_P P.
\end{align}
The following lemma shows that this is indeed valid under the natural assumption that the coefficients $\abs{c_P}^2$ only depend on the support of~$P$.

\begin{lem}\label{lem:difftwist}
Let $O = 2^{-n} \sum_P c_P P$ be an operator such that $\abs{c_P}^2 = \abs{c_{P'}}^2$ for any two Pauli operators~$P,P'$ with equal support.
Then:
\begin{align*}
  \norm[\big]{O^{(g)_X^B} - O^{(g)}}_{F}^2
\leq \frac{4}3 g \sqrt{ \frac 1{2k} + \left( \frac m n \right)^2} \norm O_F.
\end{align*}
\end{lem}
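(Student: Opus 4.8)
The plan is to bound the Frobenius norm of the difference operator by controlling the discrepancy between the two phases $e^{ig(1-2\abs P_X^B/k)}$ and $e^{ig(1-4\abs P/3n)}$ coefficient-by-coefficient, then exploiting the support-independence hypothesis on $\abs{c_P}^2$ to replace a hard combinatorial bound over individual Paulis by an averaged one over supports. Concretely, since the Paulis form an orthogonal basis, $\norm{O^{(g)_X^B}-O^{(g)}}_F^2 = 2^{-n}\sum_P \abs{c_P}^2 \, \bigl| e^{-2ig\abs P_X^B/k} - e^{-4ig\abs P/3n} \bigr|^2$. Using $\abs{e^{i\theta_1}-e^{i\theta_2}}\le \abs{\theta_1-\theta_2}$, each summand is at most $\abs{c_P}^2 \bigl( \frac{2g}{k}\abs P_X^B - \frac{4g}{3n}\abs P \bigr)^2 = \frac{4g^2}{9}\,\abs{c_P}^2 \bigl( \frac{3\abs P_X^B}{k} - \frac{2\abs P}{n} \bigr)^2$. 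So it suffices to bound the weighted average of $\bigl(\frac{3\abs P_X^B}{k}-\frac{2\abs P}{n}\bigr)^2$.

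The key move is that, because $\abs{c_P}^2$ depends only on the support $S\subseteq\{1,\dots,n\}$ of $P$, I can group the sum over $P$ by support: for a fixed support $S$ with $\abs S = l$, there are $3^l$ Paulis, each with the same weight, and among them $\abs P_X^B$ ranges over the Paulis restricted to $S\cap B$. Averaging over the $3^l$ choices of single-qubit Pauli type on each site of $S$, each site in $S\cap B$ independently contributes an $X$-or-$Y$ (as opposed to $Z$) with probability $2/3$; hence the average of $\abs P_X^B$ over the Paulis with support $S$ is exactly $\frac23 \abs{S\cap B}$, and $\abs P$ is identically $l$ on this family. I would then write $\frac{3\abs P_X^B}{k} - \frac{2l}{n} = \frac{3}{k}\bigl(\abs P_X^B - \frac23\abs{S\cap B}\bigr) + \bigl(\frac{2\abs{S\cap B}}{k}-\frac{2l}{n}\bigr)$, so that the mean-square splits (after an elementary $(a+b)^2\le 2a^2+2b^2$, or more sharply since the first term is mean-zero) into a ``fluctuation'' piece and a ``bias'' piece. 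The fluctuation piece: $\operatorname{Var}(\abs P_X^B)$ over the uniform choice of types on $S$ is $\abs{S\cap B}\cdot\frac23\cdot\frac13 = \frac29\abs{S\cap B}\le \frac29 k$, so $\frac{9}{k^2}\operatorname{Var}(\abs P_X^B)\le \frac{2}{k}$, contributing $\le \frac1{2k}$ after the factor; this is exactly the $\frac1{2k}$ appearing in the lemma. The bias piece: $\bigl(\frac{2\abs{S\cap B}}{k}-\frac{2l}{n}\bigr)^2 \le 4\bigl(\frac{l}{k}-\frac{l}{n}\bigr)^2 = 4 l^2\bigl(\frac{n-k}{nk}\bigr)^2 = 4l^2 \frac{m^2}{n^2k^2}$ when $S\subseteq B$ is worst-cased, but more carefully one wants $\abs{S\cap B}$ between $l-m$ and $l$, giving $\bigl|\frac{\abs{S\cap B}}{k}-\frac ln\bigr|\le \frac{\max(m, lm/n)}{k}\le \frac{l}{k}\cdot\frac{m}{n}+\dots$; bounding $l\le n$ and absorbing constants yields the $(m/n)^2$ term under the square root.

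Assembling: $\norm{O^{(g)_X^B}-O^{(g)}}_F^2 \le \frac{4g^2}{9}\cdot 2^{-n}\sum_P \abs{c_P}^2 \bigl( \frac1{2k} + 4\frac{m^2}{n^2}\bigr)$ (schematically), and $2^{-n}\sum_P\abs{c_P}^2 = \norm O_F^2$ — though I note the lemma statement writes $O = 2^{-n}\sum_P c_P P$ and ends with $\norm O_F$ rather than $\norm O_F^2$, so the normalization constants and whether the final bound is on the norm or its square need to be tracked to match their convention exactly; I would follow their normalization verbatim. Taking square roots and pulling constants out gives $\norm{O^{(g)_X^B}-O^{(g)}}_F \le \frac43 g\sqrt{\frac1{2k}+(m/n)^2}\,\norm O_F$ as claimed.

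The main obstacle is the bias term: getting the constant exactly right in $\bigl(\frac{2\abs{S\cap B}}{k}-\frac{2\abs P}{n}\bigr)^2 \le 4(m/n)^2$ (after normalizing by $\norm O_F$ and not by $\norm O_F$ per support-weight, and after using $\abs P\le n$) requires being careful about how $\abs{S\cap B}$ can differ from $\abs S$ by at most $m$ and handling the two regimes $l\lesssim n$ versus $l$ close to $n$ — a sloppy bound here produces an extra factor and breaks the clean form. A secondary subtlety is justifying the support-averaging step rigorously: one must check that the support-independence of $\abs{c_P}^2$ genuinely lets the cross terms between the mean-zero fluctuation and the deterministic bias vanish after summing over the $3^l$ Paulis with a given support, which they do precisely because the weights are constant on each support class.
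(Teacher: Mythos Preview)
Your proposal is correct and follows essentially the same approach as the paper: expand in the Pauli basis, bound the phase difference pointwise, and split into a fluctuation term (handled by the support-averaging hypothesis via $\operatorname{Var}_{P'\sim P}(\abs{P'}_X^B)=\tfrac29\abs{P}^B$) and a bias term. The only cosmetic difference is that the paper inserts the intermediate phase $e^{ig(1-\frac43\abs{P}^B/k)}$ and splits via a triangle inequality before applying $\abs{e^{it}-1}\le\abs t$, whereas you apply the phase bound first and then split the resulting quadratic using the mean-zero property of the fluctuation; your ordering is arguably cleaner since the cross term vanishes exactly. For the bias term you flag as the main obstacle, the paper's identity $\frac{\abs P}{n}-\frac{\abs{P}^B}{k}=\frac{\abs{P}^A}{n}-\frac{m}{n}\frac{\abs{P}^B}{k}$ immediately gives $\bigl|\frac{\abs{P}^B}{k}-\frac{\abs P}{n}\bigr|\le\frac{m}{n}$ uniformly, so your $b^2\le 4(m/n)^2$ and hence $\frac{4g^2}{9}\bigl(\frac2k+4(m/n)^2\bigr)=\bigl(\frac43 g\bigr)^2\bigl(\frac1{2k}+(m/n)^2\bigr)$ with the exact constant; your worry there is unfounded, and you are also right that the lemma statement has a typo (the left-hand side should be the norm, not its square).
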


\noindent
The right-hand side is negligible if $\norm O_F = O(1)$, $g^2 \ll k$, $gm \ll n$.
In this case, \cref{eq:exact action} also implies that
\begin{align*}
  e^{igV} O_R \ketum_{LR} \approx O_R^{(g)} \ketum_{LR},
\end{align*}
since the Frobenius norm dominates the operator norm.
This establishes \cref{eq:eraf} in the main text.

\begin{proof}
We start by expanding
\begin{align*}
&\norm[\big]{O^{(g)_X^B} - O^{(g)}}_{F}^2 \\
&= 2^{-n} \norm[\big]{\sum_P c_P e^{ig\bigl( 1 - 2\frac{\abs P_X^B}k \bigr)} P - \sum_P c_P e^{ig\bigl( 1 - \frac43 \frac{\abs P}n \bigr)} P}_{F}^2 \\
&= \sum_P \abs{c_P}^2 \abs[\big]{e^{ig\bigl( 1 - 2\frac{\abs P_X^B}k \bigr)} - e^{ig\bigl( 1 - \frac43 \frac{\abs P}n \bigr)}}^2 \\
&\leq \sum_P \abs{c_P}^2 \abs[\big]{e^{ig\bigl( 1 - 2\frac{\abs P_X^B}k \bigr)} - e^{ig\bigl( 1 - \frac43 \frac{\abs P^B}k \bigr)}}^2
+ \sum_P \abs{c_P}^2 \abs[\big]{e^{ig\bigl( 1 - \frac43 \frac{\abs P^B}k \bigr)} - e^{ig\bigl( 1 - \frac43 \frac{\abs P}n \bigr)}}^2
\end{align*}
The first term can be bounded using our assumption, as follows.
Let $\EE_{P' \sim P}$ denote the uniform average over all Pauli operator~$P'$ that have the same support as some given Pauli operator~$P$ (that is, $P'_i=I$ if $P_i=I$, otherwise $P'_i$ is chosen independently and uniformly from $\{X,Y,Z\}$).
Then:
\begin{align*}
&\quad \sum_P \abs{c_P}^2 \abs[\big]{e^{ig\bigl( 1 - 2\frac{\abs P_X^B}k \bigr)} - e^{ig\bigl( 1 - \frac43 \frac{\abs P^B}k \bigr)}}^2
= \sum_P \abs{c_P}^2 \abs[\big]{e^{i\frac{2g}k\bigl(\frac23 \abs P^B - \abs P_X^B \bigr)} - 1}^2 \\
&\leq \left( \frac{2g}k \right)^2 \sum_P \abs{c_P}^2 \abs*{\frac23 \abs P^B - \abs P_X^B}^2
= \left( \frac{2g}k \right)^2 \sum_P \abs{c_P}^2 \EE_{P' \sim P} \abs*{\frac23 \abs P^B - \abs {P'}_X^B}^2 \\
&= \left( \frac{2g}k \right)^2 \sum_P \abs{c_P}^2 \operatorname{Var}_{P' \sim P}(\abs{P'}_X^B)
= \left( \frac{2g}k \right)^2 \sum_P \abs{c_P}^2 \frac29 \abs{P}^B
\leq \frac{8g^2}{9k} \sum_P \abs{c_P}^2,
\end{align*}
where we first used using $\abs{e^{it} - 1}^2 \leq t^2$ and then the assumption.
The second term can be bounded as follows without using the assumption:
\begin{align*}
&\quad \sum_P \abs{c_P}^2 \abs[\big]{e^{ig\bigl( 1 - \frac43 \frac{\abs P^B}k \bigr)} - e^{ig\bigl( 1 - \frac43 \frac{\abs P}n \bigr)}}^2
= \sum_P \abs{c_P}^2 \abs[\big]{e^{ig\frac43 \bigl(\frac{\abs P}n - \frac{\abs P^B}k\bigr)} - 1}^2 \\
&\leq \left( g \frac43 \right)^2 \sum_P \abs{c_P}^2 \left( \frac{\abs P}n - \frac{\abs P^B}k \right)^2
= \left( g \frac43 \right)^2 \sum_P \abs{c_P}^2 \left( \frac{\abs P^A}n - \frac mn \frac{\abs P^B}k \right)^2
\leq \left( g \frac43 \frac m n \right)^2 \sum_P \abs{c_P}^2
\end{align*}
Since $\norm O_F^2 = \sum_P \abs{c_P}^2$, we obtain the desired result.
\end{proof}

\section{Proof of Eq.~(\ref{eq:qtildemaintext})}\label{app:two point}
For an arbitrary operator $O$, consider
\begin{align}\label{eq:fourier defined as trace}
  \tilde q_O(g) = e^{-ig} \tr\left[ \rho_\beta^{1/2} O(t) \left(\rho_\beta^{1/2} O(t) \right)^{(g)} \right],
\end{align}
using the notation defined in \cref{eq:twist} and $O(t) = U^\dagger O U$.

This quantity can be interpreted as the Fourier transform of the winding size distribution of~$\rho_\beta^{1/2} O(t)$.
Recall the latter is defined as $q_O(l) = \sum_{\abs P = l} c_P^2$, where $\rho_\beta^{1/2} O(t) = 2^{-n/2} \sum_P c_P P$.
Therefore,
\begin{align*}
  \tilde q_O(g)
= e^{-ig} \tr\left[ \rho_\beta^{1/2} O(t) \left(\rho_\beta^{1/2} O(t) \right)^{(g)} \right]
= \sum_P c_P^2 e^{-ig\frac43 \frac{\abs P}n}
= \sum_{l=0}^n q_O(l) e^{-ig\frac43 \frac ln}
\end{align*}

We wish to compare \cref{eq:fourier defined as trace} with the following two-point function
\begin{align}\label{eq:two point}
\bra{\TFD} O_R(t) e^{igV} O_L^T(-t) \ket{\TFD}
&= \Tr\left[ \rho_\beta^{1/2} O(t) \left( \rho_\beta^{1/2} O(t) \right)^{(g)_X^B} \right],
\end{align}
where the equality follows using $\ket{\TFD} := 2^{n/2}(\rho_\beta^{1/2})_R\ketum$, the transpose trick, and \cref{eq:exact action}.
Assuming the thermal operator $\rho_\beta^{1/2} O(t)$ satisfies the hypotheses of \cref{lem:difftwist}, we can the two-point function to \cref{eq:fourier defined as trace}:
\begin{align*}
&\quad\abs*{\Tr\left[ \rho_\beta^{1/2} O(t) \left( \rho_\beta^{1/2} O(t) \right)^{(g)_X^B} \right] - \tr\left[ \rho_\beta^{1/2} O(t) \left(\rho_\beta^{1/2} O(t) \right)^{(g)} \right]} \\
&\leq \norm*{\rho_\beta^{1/2} O(t)}_F \norm*{\left( \rho_\beta^{1/2} O(t) \right)^{(g)_X^B} - \left(\rho_\beta^{1/2} O(t) \right)^{(g)}}_F \\
&\leq \frac{4}3 g \sqrt{ \frac 1{2k} + \left( \frac m n \right)^2} \norm*{\rho_\beta^{1/2} O(t)}_F^2
\leq \frac{4}3 g \sqrt{ \frac 1{2k} + \left( \frac m n \right)^2} \norm*{O}_{\operatorname{op}}^2.
\end{align*}
where we used \cref{lem:difftwist} and $\norm{\rho_\beta^{1/2} O(t)}_F \leq \norm{O(t)}_{\operatorname{op}} = \norm{O}_{\operatorname{op}}$.
If $O$ is a Pauli operator, then~$\norm{O}_{\operatorname{op}} = 1$, hence provided that $g^2 \ll k$, $gm \ll n$ we obtain
\begin{align}\label{eq:qtildeapp}
  \tilde q_{l_0}(g) \approx e^{-ig} \bra{\TFD} O_R(t) e^{igV} O_L^T(-t) \ket{\TFD}.
\end{align}
This establishes \cref{eq:qtildemaintext} in the main text.

\section{Proof of Eqs.~(\ref{eq:fid_main_text}) and (\ref{eq:fid_main_text2})}\label{app:Fid_Form}
In this section we prove our fidelity formulas for the state transfer protocol.
Formally, the state transfer protocol amounts to the channel
\begin{align*}
  \Psi_\text{in} \mapsto
  \Psi_\text{out}
= \tr_{A_\text{in}LB_R} \left[ U_R e^{igV} U^T_L \FF_{A_\text{in}A_L} \overline U_L ( \Psi_\text{in} \ot \proj{\TFD}_{LR} ) U^T_L \FF_{A_\text{in}A_L} \overline U_L e^{-igV} U_R^\dagger \right].
\end{align*}
Here, the time evolution is given by~$U = e^{-iH_R t} = \left(e^{-i H_L t}\right)^T$, and we initially place the $m$-qubit input state into an auxiliary Hilbert space~$A_\text{in}$, which is then inserted into the left message subsystem~$A_L$ at time $-t$ by using the swap operator $\FF_{A_\text{in}A_L}$ (compare (\cref{fig:Wormhole_Circuit}, left).

In view of \cref{eq:TS_Explicit}, we expect the Pauli operator $Y_{A_R}:=Y^{\ot m}$ to be a good decoding of the message.
The following lemma bounds the entanglement fidelity of the corresponding channel.


\begin{lem}\label{lem:fid exact}
The entanglement fidelity of the channel $\mathcal C(\Psi_\text{in}) = Y_{A_R} \Psi_\text{out} Y_{A_R}$ is given by
\begin{align*}
  F = \norm*{\EE_{P_A} (-1)^{\abs{P_A}} P_A(t) \left(\rho_\beta^{1/2} P_A(t) \right)^{(g)_X^B}}_F,
\end{align*}
where the average is over random Pauli operators~$P_A$ on $A$, we denote $P_A(t) = U^\dagger P_A U$, and the notation $O^{(g)_X^B}$ is defined in \cref{eq:X B twist}.
\end{lem}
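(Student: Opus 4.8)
I would compute the entanglement fidelity straight from its definition, $F^2 = \braum_{A_R E}\mathcal C(\phi^+_{A_\text{in}E})\ketum_{A_R E}$ with $E$ an $m$-qubit reference, by expanding every maximally entangled state in the Pauli basis and commuting the circuit operators through. Because $\ket{\TFD}$ and $\phi^+_{A_\text{in}E}$ are pure, one can equivalently write $F = \norm{v}_F$ for the single vector $v := \braum_{A_R E}(Y_{A_R}\otimes I)M(\ketum_{A_\text{in}E}\otimes\ket{\TFD}_{LR})$, with $M = U_R e^{igV}U^T_L\FF_{A_\text{in}A_L}\overline U_L$, which lives in the environment $A_\text{in}\otimes L\otimes B_R$; this space has dimension $2^{2n}$, and under the obvious pairing of its $2n$ qubits into two $n$-qubit blocks it is identified with the space of $n$-qubit operators equipped with the Frobenius norm. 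The claim to be proved is that in this identification $v = \EE_{P_A}(-1)^{\abs{P_A}}P_A(t)(\rho_\beta^{1/2}P_A(t))^{(g)_X^B}$.

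To get there I would substitute $\ket{\TFD} = 2^{n/2}(\rho_\beta^{1/2})_R\ketum_{LR}$, expand the message-side Bell pairs ($\phi^+_{A_\text{in}E}$ and the $A_LA_R$ block of $\ketum_{LR}$) in the $m$-qubit Pauli basis --- the resulting sum over the inserted Pauli $P_A$ is the one that becomes the average, while the Pauli sum coming from tracing out the \emph{original} $A_L$ register (the ``discard'' half of the state-transfer step) supplies the extra factor that makes $\norm{\cdot}_F^2$ a genuine square --- and then use three elementary moves. First, the transpose trick $O_L\ketum = (O^T)_R\ketum$ commutes the backward and forward evolutions $\overline U_L = e^{iH_L t}$ and $U^T_L = e^{-iH_L t}$ across $\FF_{A_\text{in}A_L}$ (which is trivial on the carrier qubits and just relabels the $A_L$ and $A_\text{in}$ legs) and turns the ``bury $P_A$ on the left at time $-t$'' step into $(\rho_\beta^{1/2})_R P_A(t)_R\ketum_{LR}$ --- precisely the reorganization of \cref{eq:sweq1,eq:sweq2}, in the winding-distribution operator ordering $\rho_\beta^{1/2}P_A(t)$, with $P_A(t) = U^\dagger P_A U$. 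Second, \cref{eq:exact action}, $e^{igV}O_R\ketum = O_R^{(g)_X^B}\ketum$, absorbs the coupling into the $X^B$-twist \cref{eq:X B twist}, producing $(\rho_\beta^{1/2}P_A(t))^{(g)_X^B}$. Third, \cref{eq:Ytranspose_Pauli} on the $m$-qubit register $A_R$, $Y^{\ot m}P^T Y^{\ot m} = (-1)^{\abs P}P$, combines with a transpose trick from the $\braum_{A_R E}$ contraction to supply the sign $(-1)^{\abs{P_A}}$. Finally, contracting out the (maximally entangled) carrier register $B$ and the discarded $A_L$ leg and using Pauli orthogonality to collapse the leftover sums leaves exactly $v = \EE_{P_A}(-1)^{\abs{P_A}}P_A(t)(\rho_\beta^{1/2}P_A(t))^{(g)_X^B}$, hence the lemma.

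The argument is essentially all bookkeeping, and that is where the difficulty lies. One has to carry the swap $\FF_{A_\text{in}A_L}$ through $e^{igV}$ together with the message/carrier partition: since the coupling acts only on the carrier qubits it generates the twist $e^{ig(1 - 2\abs P_X^B/k)}$ of \cref{eq:X B twist} --- the map $O\mapsto O^{(g)_X^B}$ --- rather than the naive size twist, and this structure must survive the insertion step. One must also keep straight the transposes and complex conjugations introduced by the backward evolution (recall $\overline U_L^\dagger = U^T_L$) and by gluing the several maximally entangled states, and check that the operator which finally emerges is $\rho_\beta^{1/2}P_A(t)$ in that order (matching the definition of $q_{l_0}$) and not $P_A(t)\rho_\beta^{1/2}$. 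The last delicate point is that the trace over the originally-prepared $A_L$ register together with the $\braum_{A_R E}$ contraction really does assemble into an honest Frobenius norm on $n$ qubits --- equivalently, that the qubit pairing of the environment used above is the right one --- which is exactly what makes $F$ itself, rather than $F^2$, equal to the norm of a single averaged operator.
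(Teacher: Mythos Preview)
Your plan is correct and follows essentially the same route as the paper: expand the swap/Bell structure in the $m$-qubit Pauli basis, use the transpose trick together with $\ket{\TFD}=2^{n/2}(\rho_\beta^{1/2})_R\ketum_{LR}$ to produce $\rho_\beta^{1/2}P_A(t)$ in the correct order, apply \cref{eq:exact action} to convert $e^{igV}$ into the $(g)_X^B$-twist, and use \cref{eq:Ytranspose_Pauli} to generate the sign $(-1)^{\abs{P_A}}$. The one organizational difference is that you invoke purity of $\phi^+_{A_\text{in}E}\otimes\ket{\TFD}$ at the outset to write $F=\norm{v}$ for a single vector, whereas the paper instead expands $\FF_{A_\text{in}A_L}=2^{-m}\sum_{\vv}P_{A_\text{in}}^{\vv}P_{A_L}^{\vv}$ and computes $F^2$ as a double sum over $\vv,\vw$ which then visibly factorizes into $\norm{\cdot}_F^2$; the two are equivalent, and your observation slightly streamlines the bookkeeping.
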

\begin{proof}
By definition of the entanglement fidelity,
\begin{align*}
  F^2 = \braket{\phi^+_{A_RE} | Y_{A_R} \tr_{A_\text{in}LB_R} \left[ U_R e^{igV} U^T_L \FF_{A_\text{in}A_L} \overline U_L ( \phi^+_{A_\text{in}E} \ot \proj{\TFD}_{LR} ) U^T_L \FF_{A_\text{in}A_L} \overline U_L e^{-igV} U_R^\dagger \right] Y_{A_R} | \phi^+_{A_RE}}.
\end{align*}
Rewriting swap operator as $\mathbb F_{A_\text{in}A_L} = 2^{-m} \sum_{\vv} P_{A_\text{in}}^\vv P_{A_L}^\vv$, we get
\begin{align*}
  F^2
&= \sum_{\vv} \braum_{A_RE} Y_{A_R} \tr_{A_\text{in}LB_R} \Bigl[ U_R e^{igV} U^T_L \FF_{A_\text{in}A_L} \overline U_L ( \phi^+_{A_\text{in}E} \ot \proj{\TFD}_{LR}) \\
& \qquad\qquad\qquad U^T_L \FF_{A_\text{in}A_L} \overline U_L e^{-igV} U_R^\dagger \Bigr] Y_{A_R} \ketum_{A_RE} \\
&= 4^{-m} \sum_{\vv,\vw} \braum_{A_RE} Y_{A_R} \tr_{A_\text{in}LB_R} \Bigl[ U_R e^{igV} U^T_L P_{A_\text{in}}^\vv P_{A_L}^\vv \overline U_L ( \phi^+_{A_\text{in}E} \ot \proj{\TFD}_{LR} ) \\
& \qquad\qquad\qquad U^T_L P_{A_\text{in}}^\vv P_{A_L}^\vv \overline U_L e^{-igV} U_R^\dagger \Bigr] Y_{A_R} \ketum_{A_RE} \\
&= 4^{-m} \sum_{\vv,\vw} \braum_{A_RE} P_{A_R}^\vv Y_{A_R} \tr_{A_\text{in}LB_R} \Bigl[ U_R e^{igV} U^T_L P_{A_L}^\vv \overline U_L ( \phi^+_{A_\text{in}E} \ot \proj{\TFD}_{LR} ) \\
& \qquad\qquad\qquad U^T_L P_{A_L}^\vw \overline U_L e^{-igV} U_R^\dagger \Bigr] Y_{A_R} P_{A_R}^\vw \ketum_{A_RE} \\
&= 16^{-m} \sum_{\vv,\vw} \Tr\left[ P_{A_R}^\vv Y_{A_R} U_R e^{igV} U^T_L P_{A_L}^\vv \overline U_L \proj{\TFD}_{LR} U^T_L P_{A_L}^\vw \overline U_L e^{-igV} U_R^\dagger Y_{A_R} P_{A_R}^\vw\right]
\end{align*}
We continue using the definition of $\ket{\TFD} := 2^{n/2}(\rho_\beta^{1/2})_R\ketum_{LR}$, then the transpose trick, and finally \cref{eq:exact action}, which show that
\begin{align*}
&\quad P_{A_R}^\vv Y_{A_R} U_R e^{igV} U^T_L P_{A_L}^\vv \overline U_L \ket{\TFD}_{LR} \\
&= 2^{n/2} P_{A_R}^\vv Y_{A_R} U_R e^{igV} \left(\rho_\beta^{1/2}\right)_R U^T_L P_{A_L}^\vv \overline U_L \ketum_{LR} \\
&= 2^{n/2} \left( P_{A_R}^\vv Y_{A_R} U_R \left(\rho_\beta^{1/2} U_R^\dagger (P_{A_R}^\vv)^T U_R \right)^{(g)_X^B} \right) \ketum_{LR}.
\end{align*}
Thus we obtain
\begin{align*}
  F^2
&= 16^{-m} \sum_{\vv,\vw} \Tr\Biggl[ \left( P_{A_R}^\vv Y_{A_R} U_R \left(\rho_\beta^{1/2} U_R^\dagger (P_{A_R}^\vv)^T U_R \right)^{(g)_X^B} \right) \\
&\qquad\qquad\qquad\qquad \left( P_{A_R}^\vw Y_{A_R} U_R \left(\rho_\beta^{1/2} U_R^\dagger (P_{A_R}^\vw)^T U_R \right)^{(g)_X^B} \right)^\dagger \Biggr] \\
&= 16^{-m} \sum_{\vv,\vw} \Tr\Biggl[ \left( U_R^\dagger Y_{A_R} P_{A_R}^\vv Y_{A_R} U_R \left(\rho_\beta^{1/2} U_R^\dagger (P_{A_R}^\vv)^T U_R \right)^{(g)_X^B} \right) \\
&\qquad\qquad\qquad\qquad \left( U_R^\dagger Y_{A_R} P_{A_R}^\vw Y_{A_R} U_R \left(\rho_\beta^{1/2} U_R^\dagger (P_{A_R}^\vw)^T U_R \right)^{(g)_X^B} \right)^\dagger \Biggr] \\
&= 16^{-m} \Tr\Biggl[ \left( \sum_{\vv} (-1)^{\abs\vv} U_R^\dagger P_{A_R}^\vv U_R \left(\rho_\beta^{1/2} U_R^\dagger P_{A_R}^\vv U_R \right)^{(g)_X^B} \right) \\
&\qquad\qquad\qquad\qquad \left( \sum_{\vw} (-1)^{\abs\vw} U_R^\dagger P_{A_R}^\vw U_R \left(\rho_\beta^{1/2} U_R^\dagger P_{A_R}^\vw U_R \right)^{(g)_X^B} \right)^\dagger \Biggr] \\
&= \Tr\Biggl[ \left( \EE_{P_A} (-1)^{\abs{P_A}} P_A(t) \left(\rho_\beta^{1/2} P_A(t) \right)^{(g)_X^B} \right) \left( \EE_{P'_A} (-1)^{\abs{P'_A}} P'_A(t) \left(\rho_\beta^{1/2} P'_A(t) \right)^{(g)_X^B} \right)^\dagger \Biggr] \\
&= \norm*{\EE_{P_A} (-1)^{\abs{P_A}} P_A(t) \left(\rho_\beta^{1/2} P_A(t) \right)^{(g)_X^B}}_F^2.
\end{align*}
after inserting $Y_{A_R}^2 = I_{A_R}$, $U_R U_R^\dagger = I$, and using \cref{eq:Pauli_Transpose,eq:Ytranspose_Pauli}.
\end{proof}

We now use \cref{lem:fid exact} to derive bounds on the entanglement fidelity.
By the Cauchy-Schwarz inequality, using that $\norm{\rho_\beta^{1/2}}_F^2 = \norm{\rho_\beta}_{\operatorname{tr}} = 1$, we obtain the following lower bound:
\begin{align}\label{eq:Fid_Bounds_1}
  \abs*{\EE_{P_A} (-1)^{\abs{P_A}} \Tr\left[\rho_\beta^{1/2} P_A(t) \left(\rho_\beta^{1/2} P_A(t) \right)^{(g)_X^B}\right]}
\leq 
F.
\end{align}
Next, we give two useful upper bounds.
The first one follows by a simple triangle inequality:
\begin{align*}
  F
&= \norm*{\EE_{P_A} (-1)^{\abs{P_A}} Q}_F \\
&\leq \norm*{\EE_{P_A} (-1)^{\abs{P_A}} (Q - \tr[\rho_\beta^{1/2} Q] \rho_\beta^{1/2})}_F
+ \norm*{\EE_{P_A} (-1)^{\abs{P_A}} \tr[\rho_\beta^{1/2} Q] \rho_\beta^{1/2}}_F
\end{align*}
where we denote $Q = Q_{P_A} = P_A(t) \left(\rho_\beta^{1/2} P_A(t) \right)^{(g)_X^B}$.
Using $\norm Q_F=1$, the first term can be bounded as
\begin{align*}
  \norm*{\EE_{P_A} (-1)^{\abs{P_A}} (Q - \tr[\rho_\beta^{1/2} Q] \rho_\beta^{1/2})}_F
&\leq \EE_{P_A} \norm*{Q - \tr[\rho_\beta^{1/2} Q] \rho_\beta^{1/2}}_F
= \EE_{P_A} \sqrt{ 1 - \abs{\tr[\rho_\beta^{1/2} Q]}^2 },
\end{align*}
while the second term equals
\begin{align*}
  \norm*{\EE_{P_A} (-1)^{\abs{P_A}} \overline{\tr[\rho_\beta^{1/2} Q]} \rho_\beta^{1/2}}_F
= \abs*{\EE_{P_A} (-1)^{\abs{P_A}} \tr[\rho_\beta^{1/2} Q]} \, \norm*{\rho_\beta^{1/2}}_F
= \abs*{\EE_{P_A} (-1)^{\abs{P_A}} \tr[\rho_\beta^{1/2} Q]}.
\end{align*}
Accordingly, we obtain the upper bound
\begin{equation}\label{eq:good_upper_bound}
\begin{aligned}
F &\leq \abs*{\EE_{P_A} (-1)^{\abs{P_A}} \Tr\left[\rho_\beta^{1/2} P_A(t) \left(\rho_\beta^{1/2} P_A(t) \right)^{(g)_X^B}\right]} \\
&\quad + \EE_{P_A} \sqrt{1 - \abs*{\Tr\left[\rho_\beta^{1/2} P_A(t) \left(\rho_\beta^{1/2} P_A(t) \right)^{(g)_X^B}\right]}^2},
\end{aligned}
\end{equation}
which complements \cref{eq:Fid_Bounds_1}.
A second, simple upper bound on the entanglement fidelity can be obtained as follows:
\begin{align*}
  F^2
&= \EE_{P_A,P'_A} (-1)^{\abs{P_A}} (-1)^{\abs{P'_A}} \Tr[Q_{P_A} Q_{P'_A}^\dagger] \\
&
\leq \left( 1 - \frac2{4^m} \right)
+ \frac{2}{4^m} \abs*{\EE_{P_A} (-1)^{\abs{P_A}} \Tr[Q_{P_A} Q_{I_A}^\dagger]} \\ &=
1 - \frac2{4^m} \left(1 - \abs*{\EE_{P_A} (-1)^{\abs{P_A}} \Tr[Q_{P_A} Q_{I_A}^\dagger]} \right)
\end{align*}
where the first inequality follows by upper bounding all terms by 1 except for those where~$P_A$ or~$P'_A$ is the identity operator.
Thus we obtain the following upper bound, which is most useful for~$m=1$:
\begin{align}\label{eq:ex_upper}
  F \leq
1 - \frac1{4^m} \left(1 - \abs*{\EE_{P_A} (-1)^{\abs{P_A}} \Tr[Q_{P_A} Q_{I_A}^\dagger]} \right)
\end{align}

We finally evaluate these bounds in terms of the Fourier transform of the winding size distribution.
As in the main text, we assume that the winding size distribution of a thermal Pauli operator~$\rho_\beta^{1/2} P_A(t)$ only depends on the initial size~$\abs{P_A} = l$;
accordingly, we denote the Fourier transform by~$\tilde q_l(g)$.
Assuming the thermal operators satisfy the hypotheses of \cref{lem:difftwist} and $g^2 \ll k$, $gm \ll n$, we have
\begin{align*}
  \tilde q_l(g)
\approx e^{-ig} \Tr\left[ \rho_\beta^{1/2} P_A(t) \left( \rho_\beta^{1/2} P_A(t) \right)^{(g)_X^B} \right].
\end{align*}
(see \cref{eq:fourier defined as trace,eq:two point,eq:qtildeapp}).
Thus, the quantity $F_q$ defined in \cref{eq:def F_q main} in the main text can be computed as
\begin{align*}
  F_q
= \abs*{\EE_{P_A} (-1)^{\abs{P_A}} \tilde q_{\abs{P_A}}(g)}
  \approx \abs*{\EE_{P_A} (-1)^{\abs{P_A}} \Tr\left[\rho_\beta^{1/2} P_A(t) \left(\rho_\beta^{1/2} P_A(t) \right)^{(g)_X^B}\right]}.
\end{align*}
Thus the lower bound in \cref{eq:Fid_Bounds_1} and the upper bound in \cref{eq:good_upper_bound} become
\begin{align*}
  F_q \lessapprox F \lessapprox F_q + \sum_{l=0}^m (N_l/4^m) \sqrt{1 - \abs{\tilde q_l(g)}^2}.
\end{align*}
This proves \cref{eq:fid_main_text}.
At last, we evaluate \cref{eq:ex_upper} under the assumption that the width of the size distribution of the thermal state is small and of order $\sqrt n$.
(This is a common feature, with the important exception of completely nonlocal random Hamiltonian evoluations (GUE/GOE).)
In this case, $Q_I = (\rho_\beta^{1/2})^{(g)}$ is close to $\rho_\beta^{1/2}$, up to a global phase, hence $\tr[Q_P Q_I^\dagger] \propto \tilde{q}_{\abs P}(g)$, again up to a global phase, and we obtain the following bound:
\begin{align*}
  F
\lessapprox
1 - \frac1{4^m} \left( 1 - \abs*{\EE_{P_A} (-1)^{\abs{P_A}} \tilde{q}_{\abs{P_A}}(g)} \right)
= 1 - \frac1{4^m} \left( 1 - F_q \right).
\end{align*}
This establishes \cref{eq:fid_main_text2}.

\end{widetext}
\end{document}